\documentclass{article}

\usepackage{amsthm}
\usepackage{amsmath}
\usepackage{amsfonts,amssymb}
\usepackage{graphicx,color}
\usepackage{boxedminipage}
\usepackage{framed}
\usepackage{mathtools}
\usepackage{enumitem}
\usepackage{thmtools}
\usepackage{thm-restate}
\usepackage{xspace}
\usepackage{authblk}
\usepackage{footnote}
\usepackage{nicefrac}
\usepackage[numbers]{natbib}
\usepackage[a4paper,top=2cm,bottom=2cm,left=1in,right=1in,marginparwidth=1.75cm]{geometry}
\usepackage{appendix}
\usepackage{todonotes}
\usepackage{footnote}
 \usepackage[pdftex, plainpages = false, pdfpagelabels, 
                 bookmarks=false,
                 bookmarksopen = true,
                 bookmarksnumbered = true,
                 breaklinks = true,
                 linktocpage,
                 pagebackref,
                 colorlinks = true,  
                 linkcolor = blue,
                 urlcolor  = blue,
                 citecolor = red,
                 anchorcolor = green,
                 hyperindex = true,
                 hyperfigures
                 ]{hyperref} 
 \usepackage{xifthen}
 \usepackage{tabularx}
\usepackage{tikz} 
\usetikzlibrary{calc,math,patterns,shapes,backgrounds}
\tikzstyle{path} = [color=black,opacity=.30,line cap=round, line join=round, line width=10pt]

\usepackage[nameinlink]{cleveref}

 
%
%
%
%

\newlength{\RoundedBoxWidth}
\newsavebox{\GrayRoundedBox}
\newenvironment{GrayBox}[1]%
   {\setlength{\RoundedBoxWidth}{.93\textwidth}
    \def\boxheading{#1}
    \begin{lrbox}{\GrayRoundedBox}
       \begin{minipage}{\RoundedBoxWidth}}%
   {   \end{minipage}
    \end{lrbox}
    \begin{center}
    \begin{tikzpicture}%
       \node(Text)[draw=black!20,fill=white,rounded corners,%
             inner sep=2ex,text width=\RoundedBoxWidth]%
             {\usebox{\GrayRoundedBox}};
        \coordinate(x) at (current bounding box.north west);
        \node [draw=white,rectangle,inner sep=3pt,anchor=north west,fill=white] 
        at ($(x)+(6pt,.75em)$) {\boxheading};
    \end{tikzpicture}
    \end{center}}     

\newenvironment{defproblemx}[2][]{\noindent\ignorespaces%
                                \FrameSep=6pt%
                                \parindent=0pt%
                \vspace*{-1.5em}
                \ifthenelse{\isempty{#1}}{%
                  \begin{GrayBox}{\textsc{#2}}%
                }{%
                  \begin{GrayBox}{\textsc{#2}  parameterized by~{#1}}%
                }
                \begin{tabular*}{\textwidth}{@{\hspace{.1em}} >{\itshape} p{1.8cm} p{0.8\textwidth} @{}}%
            }{
                \end{tabular*}%
                \end{GrayBox}%
                \ignorespacesafterend
            }

\newcommand{\defproblema}[3]{
  \begin{defproblemx}{#1}
    Input:  & #2 \\
    Task: & #3
  \end{defproblemx}
}%


\newtheorem{theorem}{Theorem}
\newtheorem{lemma}{Lemma}[section]
\newtheorem{corollary}[lemma]{Corollary}
\newtheorem{proposition}[lemma]{Proposition}
\newtheorem{claim}[lemma]{Claim}
\newtheorem{observation}[lemma]{Observation}
\newtheorem{definition}[lemma]{Definition}
\newtheorem{redrule}{Reduction Rule}

\crefname{redrule}{Reduction Rule}{Reduction Rules}

\DeclareMathOperator{\operatorClassNP}{NP}
\newcommand{\classNP}{\ensuremath{\operatorClassNP}\xspace}

\DeclareMathOperator{\operatorClassCoNP}{coNP}
\newcommand{\classCoNP}{\ensuremath{\operatorClassCoNP}}
\DeclareMathOperator{\operatorClassFPT}{FPT\xspace}
\newcommand{\classFPT}{\ensuremath{\operatorClassFPT}\xspace}
\DeclareMathOperator{\operatorClassW}{W}
\newcommand{\classW}[1]{\ensuremath{\operatorClassW[#1]}}

\DeclareMathOperator{\operatorClassXP}{XP\xspace}
\newcommand{\classXP}{\ensuremath{\operatorClassXP}\xspace}

\newcommand{\cqed}{\ensuremath{\lhd}}
\makeatletter
\newenvironment{claimproof}{\par
  \pushQED{\cqed}%
  \normalfont \topsep6\p@\@plus6\p@\relax
  \trivlist
  \item\relax
  {\itshape
    Proof of the claim\@addpunct{.}}\hspace\labelsep\ignorespaces
}{%
  \hfill\popQED\endtrivlist\@endpefalse
}
\makeatother

\newcommand{\mcc}{\textsc{Multicolored Clique}\xspace}
\newcommand{\dsc}{\textsc{Shortest Cycle Packing}\xspace}
\newcommand{\dsce}{\textsc{Edge-Disjoint Shortest Cycle Packing}\xspace}
\newcommand{\msc}{\textsc{Min-Sum Cycle Packing}\xspace}
\newcommand{\msedc}{\textsc{Min-Sum Edge-Disjoint Cycle Packing}\xspace}
\newcommand{\mvc}{\textsc{Min-Vector Cycle Packing}\xspace}
\newcommand{\mvedc}{\textsc{Edge-Disjoint Min-Vector Cycle Packing}\xspace}
\newcommand{\dsf}{\textsc{Disjoint Shortest Factors}\xspace}
\newcommand{\mcp}{\textsc{Min-Cut Packing}\xspace}
\newcommand{\ind}{\textsc{Independent Set}\xspace}
\newcommand{\dsp}{\textsc{Disjoint Shortest Paths}\xspace}
 
\newcommand{\minsumdp}{\textsc{Min-Sum Disjoint Paths}\xspace}

\DeclareMathOperator{\poly}{poly}

\newcommand{\OO}{\mathcal{O}}
\newcommand{\chordcol}{\mathcal{R}}
\newcommand{\pathcol}{\mathcal{P}}

\newcommand{\cP}{\mathcal{P}}
\newcommand{\cC}{\mathcal{C}}
\newcommand{\cT}{\mathcal{T}}
\newcommand{\cS}{\mathcal{S}}
\newcommand{\cL}{\mathcal{L}}
\newcommand{\cU}{\mathcal{U}}

\newcommand{\bl}{\ensuremath{\square}}

\title{Packing Short Cycles\thanks{This work was supported by the Research Council of Norway under BWCA project (grant no. 314528), the Franco-Norwegian AURORA project (grant no. 349476), the UKRI EPSRC (grant EP/V044621/1), the Swarnajayanti Fellowship (grant DST/SJF/MSA-01/2017-18), and the ERC Horizon 2020 research and innovation programme (grant no. 819416). Email addresses: \texttt{matthias.bentert@uib.no, fedor.fomin@uib.no, petr.golovach@uib.no, tuko@di.ku.dk, william.lochet@gmail.com, f.panolan@leeds.ac.uk, r.maadapuzhi-sridharan@warwick.ac.uk, saket@imsc.res.in, kirillsimonov@gmail.com}}}

\renewcommand\footnotemark{}

\author[1]{Matthias Bentert}
\author[1]{Fedor V. Fomin}
\author[1]{Petr A. Golovach}
\author[2]{Tuukka Korhonen}
\author[3]{William Lochet}
\author[4]{Fahad Panolan}
\author[5]{M. S. Ramanujan}
\author[6]{Saket Saurabh}
\author[7]{Kirill Simonov}

\affil[1]{University of Bergen, Norway}
\affil[2]{University of Copenhagen, Denmark}
\affil[3]{CNRS, LIRMM, Universit\'{e} de Montpellier, France}
\affil[4]{School of Computer Science, University of Leeds, UK}
\affil[5]{University of Warwick, UK}
\affil[6]{Institute of Mathematical Sciences, HBNI, India}
\affil[7]{Hasso Plattner Institute, University of Potsdam, Germany}

\date{}

\begin{document}

\maketitle

\begin{abstract} Cycle packing is a fundamental problem in optimization, graph theory, and algorithms.
Motivated by recent advancements in finding vertex-disjoint paths between a specified set of vertices that either minimize the total length of the paths [Björklund, Husfeldt, ICALP 2014; Mari, Mukherjee, Pilipczuk, and Sankowski, SODA 2024] or request the paths to be shortest [Lochet, SODA 2021], we consider the following cycle packing problems: \msc and \dsc.

In \msc, we try to find, in a weighted undirected graph, \(k\) vertex-disjoint cycles of minimum total weight. Our first main result is an algorithm that, for any fixed \(k\), solves the problem in polynomial time. We complement this result by establishing the W[1]-hardness of \msc parameterized by \(k\). The same results hold for the version of the problem where the task is to find $k$ edge disjoint cycles.

Our second main result concerns \dsc, which is a special case of \msc that asks to find a packing of \(k\) shortest cycles in a graph. We prove this problem to be fixed-parameter tractable (FPT) when parameterized by \(k\) on weighted planar graphs. We also obtain a polynomial kernel for the edge-disjoint variant of the problem on planar graphs. Deciding whether \msc is \classFPT on planar graphs and whether \dsc is \classFPT on general graphs remain challenging open questions.
\end{abstract}

 \newpage
\section{Introduction}\label{sec:introduction}

We consider the following problem.
 
\defproblema{\msc}%
{A graph $G$ with a weight function $w\colon E(G)\rightarrow \mathbb{Z}_{>0}$, an integer $k\geq 1$, and $\ell\in\mathbb{Z}_{\geq 0}$.}%
{Decide whether there is a family (packing) of $k$ (vertex-)disjoint cycles whose total length is at most $\ell$.
}
We also consider the variant of the problem, called \msedc, where the task is to find a packing of edge-disjoint cycles of total length at most $\ell$.

\msc could be considered a ``relaxation'' of the notoriously difficult  \minsumdp. Let us remind that in the \minsumdp{} problem, we are 
    given a graph with a set of terminal pairs $(s_1,t_1),\dots, (s_k,t_k)$. The task is either to connect all terminal pairs $(s_i,t_i)$ by pairwise vertex-disjoint paths of minimum total length or to decide that there is no set of pairwise disjoint paths.  
  Of course, the existence of a polynomial-time algorithm solving \minsumdp{} for fixed $k$ would imply a polynomial-time algorithm solving \msc. Unfortunately, no such algorithm is known. 
    \citet{BH19}
give an algorithm with running time~$\OO(n^{11})$
for finding two disjoint $s_i$-$t_i$-paths of minimal total length in an $n$-vertex graph. For~$k>2$, the complexity of the problem is a long-standing open problem. Whether the problem is polynomial-time solvable for $k=3$ remains open even on planar graphs. The main reason for our interest in \msc was the lack of any progress on the complexity of  \minsumdp.

Our first theorem establishes the membership of  \msc and \msedc in complexity class \classXP when parameterized by $k$. More precisely, we show the following.

\begin{restatable}{theorem}{thmminsum}\label{thm:minsumxpalg}
\msc and \msedc can be solved in $n^{\OO(k^6)}$ time, where 
$n$ is the number of vertices in the graph.
\end{restatable}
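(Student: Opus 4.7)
My plan is to reduce \msc to the \dsp problem with $O(k^2)$ terminal pairs and then invoke Lochet's XP algorithm for \dsp. The guiding idea is that although the cycles in an optimal packing can be long, they decompose at carefully chosen portal vertices into a bounded number of subpaths, each of which is forced to be a shortest path in the full graph $G$.

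The starting point is a standard exchange lemma. In any optimal packing $C_1,\dots,C_k$, every cycle $C_i$ must be a shortest cycle in the residual graph $G_{-i}\coloneqq G\setminus\bigcup_{j\neq i}V(C_j)$; otherwise, replacing $C_i$ by a shorter cycle in $G_{-i}$ would strictly decrease the total weight. Iterating locally, for any two vertices $u,v$ on $C_i$, each of the two $u$--$v$ subpaths of $C_i$ is a shortest $u$--$v$ path in the graph obtained from $G_{-i}$ by deleting the internal vertices of the other subpath. This gives "shortest in residual'' for free but not yet "shortest in $G$''.

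The key structural step is to select, for each cycle $C_i$, a set of $O(k)$ \emph{portal vertices} on $C_i$ such that after cutting $C_i$ at the portals, every resulting subpath is a shortest path between its endpoints in the whole of $G$. Intuitively, portals should be placed at the entry and exit points of every ``shortcut'' that a cheaper $u$--$v$ path could take through another cycle $C_j$; since there are only $k-1$ other cycles and an uncrossing argument should bound the number of useful shortcuts between any fixed pair of cycles by $O(1)$, placing $O(k)$ portals per cycle, i.e.\ $O(k^2)$ portals in total, should suffice. Guessing these portals together with their endpoint pairing and individual subpath lengths costs $n^{\OO(k^2)}$ and produces an instance of \dsp with $O(k^2)$ terminal pairs and total length budget $\ell$.

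The final step applies Lochet's XP algorithm for \dsp, which solves instances with $m$ terminal pairs in time $n^{p(m)}$ for a polynomial $p$; plugging in $m=O(k^2)$ and a cubic $p$ yields the claimed $n^{\OO(k^6)}$ bound and absorbs the earlier $n^{\OO(k^2)}$ guessing cost. The main obstacle is the portal-selection claim: formally proving that $O(k)$ well-chosen portals per cycle cut every cycle into globally-shortest pieces. I expect this to boil down to an uncrossing/swap argument showing that all useful shortcuts through any single other cycle can be captured by a constant number of portals on $C_i$, perhaps after replacing the optimal packing by a structurally simpler one of the same weight. For \msedc, the same strategy should go through with an edge-disjoint variant of \dsp, obtained either by a black-box reduction (e.g.\ a line-graph-style construction) or by adapting Lochet's argument to the edge-disjoint setting.
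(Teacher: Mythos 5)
The approach in your proposal is genuinely different from the paper's, but it has at least two gaps that I believe are fatal.

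\textbf{The portal claim is false.} Your reduction hinges on being able to cut each cycle $C_i$ of an optimal packing at a bounded number of portals so that the pieces are shortest paths \emph{in the whole graph $G$}. There is no reason this is possible, and in fact it can fail catastrophically. Consider a ``hub'' construction: one cycle $C_2$ of the optimal packing contains a vertex $h$ that is connected to most of the graph by very cheap edges, so that every shortest path between two non-adjacent vertices routes through $h$, and even many single edges $\{u,v\}$ are not the shortest $u$--$v$ path because the detour via $h$ is cheaper. Since $C_1$ must avoid $V(C_2)\ni h$, essentially \emph{no} subpath of $C_1$---not even an individual edge---is a shortest path in $G$, so no placement of portals helps. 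The exchange lemma you start from only gives ``shortest in the residual graph $G-\bigcup_{j\neq i}V(C_j)$,'' and there is a genuine, unclosable gap between that and ``shortest in $G$.'' Reducing instead to \dsp on the residual graph is circular, since the residual graph is defined by the very solution you are trying to find. Even setting this aside, the quantitative part of your intuition is off: the paper's own \Cref{lem:chordbound} shows that a single path can cross a single cycle of a potential packing up to $\Theta(k^3)$ times before yielding a contradiction, so ``$O(1)$ useful shortcuts between any fixed pair of cycles'' is not the right order of magnitude.

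\textbf{Weights.} \msc is stated (and proved in the paper) for graphs with arbitrary positive integer weights, and the paper explicitly emphasizes that the resulting algorithm is strongly polynomial, ruling out edge subdivision. Lochet's algorithm for \dsp, and the subsequent improvement cited in the paper, are for \emph{unweighted} undirected graphs. Your reduction, even if the portal lemma were true, would therefore need a weighted \dsp solver, which is not available; subdividing edges to simulate weights is exactly the kind of reduction the paper flags as inadmissible. Finally, you leave the running time of the black-box \dsp solver as ``a cubic $p$'' without justification, and the shared-terminal issue at portals (consecutive subpaths share an endpoint, so they cannot be vertex-disjoint in the usual \dsp sense) is unaddressed.

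For contrast, the paper's route is elementary and avoids \dsp entirely: it shows that if $G$ has no $k$ vertex-disjoint cycles of total length at most $\ell$, then $G$ has only $n^{\OO(k^5)}$ paths of length at most $\ell$ (via a chord-counting argument against a maximal packing of short cycles, using Erd\H{o}s--Szekeres to clean up crossings). One then enumerates all such paths; if there are too many, answer yes, and otherwise close them into cycles and try all $k$-tuples, giving $n^{\OO(k^6)}$. This handles weights for free, since only path lengths and comparisons are used.
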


Further, we show that \Cref{thm:minsumxpalg} can be generalized for the variant of the problem with individual constraints on the length of the cycles in a packing. Formally, in \mvc, we are given a graph $G$, a positive integer $k$, and $k$ positive integers $\ell_1,\ldots,\ell_k$. Then the task is to find $k$ (vertex) disjoint cycles $C_1,\ldots,C_k$  so that the length of $C_i$ is upper-bounded by $\ell_i$ for each $i\in\{1,\dots, k\}$. We prove that \mvc and the variant where we want the cycles to be edge-disjoint, called \mvedc, can also be solved in $n^{\OO(k^6)}$ time.

We complement \Cref{thm:minsumxpalg}  by a computational lower bound, showing that the existence of an \classFPT algorithm for \msc is unlikely.

\begin{restatable}{theorem}{thmminsumlb}\label{thm:minsum-lb}
 \msc is \classW{1}-hard in subcubic graphs with unit edge weights when parameterized by~$k$.
\end{restatable}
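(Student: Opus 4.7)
To prove \Cref{thm:minsum-lb}, the plan is to give a parameterized reduction from \mcc, which is W[1]-hard parameterized by the number of colors $k$. From a \mcc instance $(G, V_1, \dots, V_k)$ I would construct a subcubic unit-weight graph $G'$, set $k' = k + \binom{k}{2}$, and choose an integer budget $\ell$ so that $G'$ admits $k'$ vertex-disjoint cycles of total length at most $\ell$ iff $G$ contains a multicolored clique. Since $k'$ depends only on $k$, this certifies W[1]-hardness of \msc on subcubic unit-weight graphs.

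The construction uses two gadget families, both arranged so that every cycle of $G'$ lies inside exactly one gadget. For each color $i$, a \emph{vertex-selection gadget} $S_i$ consists of a ``spine'' path together with, for every $v\in V_i$, an external subdivided path attached at two distinct degree-$3$ points along the spine; the spine segment plus this external path forms a candidate cycle $C_{i,v}$. The attachment indices are chosen so that every $C_{i,v}$ traverses the middle vertex of the spine, and therefore any vertex-disjoint packing contains at most one $C_{i,v}$. For each pair $i<j$, an analogous \emph{edge-verification gadget} $E_{ij}$ provides a candidate cycle $D_{ij}^{uv}$ for every edge $uv\in E(G)$ with $u\in V_i$ and $v\in V_j$. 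The two families are linked by long ``interface'' paths that are not part of any cycle on their own but that shorten the viable route of $D_{ij}^{uv}$ when $C_{i,u}$ and $C_{j,v}$ are simultaneously in the packing, and force a strictly longer detour otherwise. Subcubicity is preserved throughout by branching only at degree-$3$ junctions, and unit edge weights suffice because all ``costs'' are encoded as lengths of subdivided paths.

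Setting $k' = k + \binom{k}{2}$ guarantees that any packing of the required size must contribute exactly one cycle from each gadget, since there are no other cycles in $G'$. The budget $\ell$ is set to exactly match the length of the ``honest'' packing --- one designated $C_{i,v_i}$ per color and one $D_{ij}^{v_iv_j}$ per pair --- built from some multicolored clique $\{v_1,\dots,v_k\}$. The forward direction of the equivalence is then immediate.

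The main obstacle is the converse: showing that any vertex-disjoint packing of $k'$ cycles of total length at most $\ell$ decodes into a multicolored clique. This requires ruling out packings that split a gadget's contribution into several smaller cycles, combine fragments from different gadgets through the interface paths, or exploit shortcuts induced by inconsistent selections. The intended argument is a tight per-gadget lower bound summing to exactly $\ell$, with equality attained only by the honest packing; achieving this boils down to tuning the interface path lengths so that any inconsistency between selected vertex cycles and an edge cycle forces a detour strictly longer than the slack in $\ell$, yet the honest packing stays within it. Verifying this budget-tightness against all conceivable ``cheating'' packings --- in particular those that use the interface paths as substitutes for gadget-internal segments --- is where most of the technical work lies.
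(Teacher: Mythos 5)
Your high-level plan -- reducing from \mcc, building vertex-selection and edge-verification gadgets, and setting a tight budget so that only an ``honest'' packing fits -- is the same strategy the paper uses. But the concrete construction you sketch diverges from the paper's in ways that matter, and your description contains an internal contradiction that points to a real gap.

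You ask each vertex-selection gadget to contribute exactly one cycle and set $k' = k + \binom{k}{2}$. The paper instead makes each vertex-selection gadget a single long cycle augmented with ``chords'' of a carefully chosen length, sets $k = 3\ell + \binom{\ell}{2}$, and extracts \emph{three} cycles from each gadget. This is not a cosmetic difference: the fact that three equally spaced chords partition the gadget cycle into three arcs of total length exactly $9\gamma$ is what makes the budget argument rigid. Any packing that collects $3\ell$ cheap cycles must use one chord per cycle, and the only way to hit the per-gadget average of $3\gamma$ is with three equally spaced chords; this single structural fact simultaneously encodes the chosen vertex and frees exactly one ``slot'' of $v$-vertices per phase for the edge-encoding paths to route through. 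Your one-cycle-per-gadget design forgoes this rigidity, so you would have to rebuild the entire counting argument from scratch, which you do not do.

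More troubling is the mechanism you propose for consistency between vertex and edge cycles. You write that ``every cycle of $G'$ lies inside exactly one gadget'' and simultaneously that the interface paths ``shorten the viable route of $D_{ij}^{uv}$ when $C_{i,u}$ and $C_{j,v}$ are simultaneously in the packing, and force a strictly longer detour otherwise.'' These statements cannot both be true: a cycle's length is a property of the graph, not of the packing, so if $D_{ij}^{uv}$ is confined to $E_{ij}$ its length cannot depend on which $C_{i,\cdot}$ are chosen elsewhere. What actually enforces consistency in the paper's construction is \emph{vertex-disjointness}, not length: the edge-encoding paths for $e = \{v_a^i, v_b^j\}$ attach to $v$-vertices lying in the slots that are skipped precisely when $v_a^i$ and $v_b^j$ are the selected vertices, and otherwise those vertices are occupied by the vertex-selection cycles. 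This is a crisp combinatorial condition, whereas your ``detour'' formulation, taken at face value, describes something your own constraint (cycles lie inside one gadget) forbids. Either the edge cycles do enter the vertex gadgets -- in which case you must describe exactly how and reprove the length/count accounting -- or they do not, in which case the interface paths do nothing.

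Finally, you explicitly defer the entire converse direction (``Verifying this budget-tightness against all conceivable `cheating' packings\ldots is where most of the technical work lies''), but that is not a detail to be filled in later -- it is the whole proof. In the paper's argument this step is carried by a chain of concrete inequalities (cycles spanning two gadgets have length $> 3\gamma$; cycles using two chords have length $\geq 4\gamma$; hence at least $3\ell$ cycles are cheap; hence exactly $3\ell$ per-gadget cycles using equally spaced chords; hence the remaining $\binom{\ell}{2}$ cycles are edge-encoding pairs) which only hold because the gadget lengths were chosen to create a gap between honest and cheating cycles. Your sketch does not produce any analogous numeric gap, and it is not clear that the one-cycle-per-gadget design with ``interface paths'' admits one. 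As written, this is a plan with a flawed central mechanism, not a proof.
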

Because the lower bound is obtained for subcubic graphs, the result also holds for \msedc.

\medskip 

A special but still fascinating case of \minsumdp{} is the \dsp problem. Here, we want to connect terminal pairs by vertex-disjoint \emph{shortest} paths. Equivalently, this is the variant of \minsumdp, where we request to connect terminals by disjoint paths whose total length does not exceed 
$\sum_{1 \leq i \leq k} \text{dist}(s_i, t_i)$.

The ``relaxation'' of \dsp as a cycle packing is the variant of \msc where all the cycles in the packing should be shortest cycles, that is, $\ell=kg$ where $g$ is the girth of $G$. (Let us remember that the girth of a weighted graph is the minimum length of its cycles.) 

\defproblema{\dsc}%
{A graph $G$ with a weight function $w\colon E(G)\rightarrow \mathbb{Z}_{>0}$ and an integer $k\geq 1$.}%
{Decide whether there is a family (packing) of $k$ vertex-disjoint cycles, each of minimum length.
}
For example, if all edges of $G$ are of weight one  and the girth of $G$ is three, then \dsc becomes the \textsc{Triangle Packing} problem.

By \Cref{thm:minsumxpalg},  \dsc is solvable in polynomial time for a fixed number~$k$ of cycles. We do not know whether it is  \classW{1}-hard or \classFPT parameterized by $k$. Our next theorem establishes the fixed-parameter tractability of \dsc on planar graphs. 

\begin{restatable}{theorem}{thmmincyclesvd}\label{thm:mincycles-vd}
\dsc is solvable  in time $k^{\OO(k)}\cdot n^{\OO(1)}$   on planar graphs with $n$ vertices.
\end{restatable}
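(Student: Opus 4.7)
The plan is to combine a planar bidimensionality-style win/win argument with dynamic programming on a decomposition of bounded width, giving an \classFPT algorithm. As a preliminary step, compute the girth $g$ of $G$ in polynomial time; the task then reduces to packing $k$ vertex-disjoint cycles, each of length exactly $g$. A shortest cycle has graph-theoretic diameter at most $g/2$, so each one lives inside a small metric ball, a property I exploit below.

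The core dichotomy is the following: either (a) the branchwidth of $G$ is bounded by some function $f(k)$, in which case I solve \dsc by dynamic programming over a sphere-cut decomposition, or (b) the branchwidth exceeds $f(k)$, so by the planar grid-minor theorem $G$ contains a large grid minor, from which I extract $k$ vertex-disjoint shortest cycles directly. In case (a), the DP state at each cut records (i) which of the $k$ prospective cycles the cut currently intersects, (ii) for each active cycle the pair of endpoints of its fragments on the cut, and (iii) the accumulated lengths of those fragments, thresholded at $g$. Because each cut has $\OO(f(k))$ vertices, the number of states per cut is bounded by $k^{\OO(k)}$, which drives the $k^{\OO(k)}\cdot n^{\OO(1)}$ running time.

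For case (b), the task is to turn a large grid minor into $k$ vertex-disjoint shortest cycles of $G$. This is the crux of the argument, because shortest cycles are \emph{not} preserved under minor operations: contracting edges may destroy or create shortest cycles. To address this, I would work with the planar embedding and a radial (BFS-based) layering of $G$ around a fixed shortest cycle. Since every shortest cycle is confined to a ball of radius $g/2$, a sufficiently large grid minor in the layered decomposition can be used, via a pigeonhole over its bricks, to locate many pairwise far-apart balls, each of which must contain a shortest cycle of $G$ disjoint from the others.

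The main obstacle is precisely this last step: establishing that a grid minor of size polynomial in $k$ in a weighted planar graph forces $k$ vertex-disjoint cycles \emph{all of length exactly $g$}. Unlike packing arbitrary cycles (where bidimensionality applies out of the box), we are restricted to a single target length, and the grid structure alone is insufficient; the argument must couple the minor with the metric structure of $G$. I expect this to require a careful layered analysis combined with an Erd\H{o}s--P\'osa-type argument restricted to shortest cycles. The weighted case, where $g$ may be arbitrarily large relative to $k$, is likely the most delicate part.
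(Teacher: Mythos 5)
Your proposal takes a genuinely different route from the paper. The paper's algorithm has nothing to do with bidimensionality: it preprocesses the graph to be ``clean'' (every vertex and edge lies on some shortest cycle), builds a \emph{Laminar Shortest Cycle Tree} representing all shortest cycles, applies random separation to single out the non-tree cycles, and then runs a recursive branching procedure that repeatedly calls an \classFPT subroutine for \ind on map graphs. Your win/win over branchwidth is a reasonable first instinct for planar packing, but as written it has gaps in both branches.

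For case (b), you acknowledge yourself that you do not know how to convert a large grid minor into $k$ vertex-disjoint cycles \emph{each of weighted length exactly $g$}. This is not just the hardest step; it is a step you would need to formulate precisely before the dichotomy even makes sense. In particular, you never perform the cleaning step: without first deleting vertices and edges that lie on no shortest cycle, large branchwidth says nothing at all — take a huge grid with unit weights and glue a single light triangle onto one vertex, and you have enormous branchwidth with exactly one shortest cycle. After cleaning, the claim ``large branchwidth of the cleaned graph $\Rightarrow$ $k$ vertex-disjoint shortest cycles'' is a nontrivial structural lemma that you would have to prove; the paper never needs such a statement (it handles the ``many facial shortest cycles'' case via an independent-set computation in a map graph, not via a grid minor, and that argument is delicate already — note that for the vertex-disjoint case the paper \emph{cannot} use the simple four-color-theorem counting that suffices for edge-disjoint).

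For case (a), the DP state you describe records ``the accumulated lengths of those fragments, thresholded at $g$''. In the weighted setting $g$ can be exponentially large in $n$, so the number of possible accumulated lengths per fragment is not $\OO(1)$ or $k^{\OO(1)}$ but potentially $g$, and the resulting state space is not $k^{\OO(k)}$ but closer to $g^{\poly(k)}$. The paper's algorithm is strongly polynomial for each fixed $k$ and explicitly rules out pseudo-polynomial-in-weight behavior (see the remark on the impact of the weight function in the introduction). You would need a further idea — for example arguing that the relevant fragment lengths are always realizable as sums of at most $\OO(k)$ pairwise distances — to get the table size under control; as stated it does not give the claimed running time.

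So the high-level plan is plausible and worth pursuing, but both branches of the dichotomy are currently unproved, and the weighted setting in particular is where both break down.
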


 We do not know whether   \dsc admits a polynomial kernel on planar graphs. However, the methods developed for the proof of  \Cref{thm:mincycles-vd} allow establishing a polynomial kernel for the \dsce problem,  the version of   \dsc where the cycles should be edge-disjoint.

\begin{restatable}{theorem}{thmmincyclesed}\label{thm:mincycles-ed}
\dsce on planar graphs admits a polynomial kernel such that the output graph has $\OO(k^2)$ vertices.
\end{restatable}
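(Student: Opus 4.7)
The plan is to design a sequence of polynomial-time reduction rules together with a win/win dichotomy: either we produce $k$ edge-disjoint shortest cycles directly (outputting a trivial yes-instance on $O(k)$ vertices), or we prove that after the rules the instance has $O(k^2)$ vertices. The first rule computes the weighted girth $g$ of $G$ via standard planar shortest-path machinery and discards every edge lying on no shortest cycle: an edge $e=uv$ of weight $w(e)$ lies on some shortest cycle iff the shortest $u$-$v$ path in $G-e$ has length exactly $g-w(e)$, which can be tested for all edges in polynomial time. Let $H$ denote the surviving subgraph; by construction $H$ has girth $g$ and the same family of shortest cycles as $G$.

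Next, I would exploit planarity to organize the shortest cycles of $H$. Fix a plane embedding. If two shortest cycles $C_1, C_2$ of length $g$ cross in the embedding, a standard uncrossing argument applied to $E(C_1) \cup E(C_2)$ produces closed subgraphs of total weight $2g$ whose cyclic components each have length at least $g$; since their total is $2g$, they must be exactly two cycles of length $g$, ``uncrossed'' in the embedding. Iterating this, every edge-disjoint packing of shortest cycles can be represented by boundaries of a laminar family of plane regions, and each edge of $H$ lies on the boundary of at least one minimal region of this family.

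With this skeleton in hand, the dichotomy reads as follows: either the laminar family contains $k$ pairwise edge-disjoint minimal region boundaries, in which case we output a trivial yes-instance, or both the depth and the number of branching vertices of the skeleton are bounded by a function of $k$. In the latter case, further reduction rules (contracting degree-$2$ chains carrying a unique shortest cycle, and collapsing duplicated parallel subpaths between two branching vertices) together with a planarity-based charging (each remaining vertex charged to the $O(k)$ shortest cycles it can participate in) bring the kernel down to $O(k^2)$ vertices.

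The main obstacle is quantifying the uncrossing argument and the subsequent charging tightly enough to attain the $O(k^2)$ bound. Shortest cycles in weighted planar graphs may touch at single vertices or share arbitrarily long subpaths, so the contractions must carefully preserve which shortest cycles remain while aggressively shrinking chains. I expect the structural machinery developed for the FPT algorithm on planar graphs in \Cref{thm:mincycles-vd} to be instrumental here, particularly in handling the interaction between non-nested shortest cycles and in certifying that the skeleton obtained after all reductions admits a quadratic vertex bound whenever the instance is not already trivially accepted.
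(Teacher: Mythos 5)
Your proposal captures the broad outline—clean the graph, organize shortest cycles into a laminar family, use a dichotomy to either output a yes-instance or bound the structure, dissolve degree-2 chains—but it contains a concrete gap in the pivotal step and omits a necessary ingredient.

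The gap is in the claim that ``either the laminar family contains $k$ pairwise edge-disjoint minimal region boundaries, \ldots, or both the depth and the number of branching vertices of the skeleton are bounded by a function of $k$.'' The depth of the laminar family of shortest cycles is \emph{not} bounded by $k$ in no-instances: you can have an arbitrarily long chain $C_1 \leq C_2 \leq \cdots \leq C_N$ of nested shortest cycles where each consecutive pair shares a long subpath, so any two of them fail to be edge-disjoint. Such a chain does not give a packing, yet the laminar tree can have unbounded depth. What makes the paper's argument work is a much more careful selection: they define an \emph{extension} of $C$ as a shortest cycle $C'$ with $C <_e C'$ (strictly edge-disjoint and nested), prove that lowest extensions are unique, and form \emph{extension chains}, which \emph{are} pairwise edge-disjoint and therefore of length $< k$. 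Marking only a set of $O(k)$ ``base'' cycles plus the $< k$-long extension chains above each of them (plus the $O(k^2)$ non-tree cycles coming from $S$-nodes) yields a set $\cC_M$ of $O(k^2)$ cycles, and the real content is \Cref{claim:marked}: every $\cT$-maximal solution lies inside $\cC_M$. Your ``charging each remaining vertex to the $O(k)$ shortest cycles it can participate in'' would not survive contact with these long non-edge-disjoint chains without something like the extension-chain machinery, and you have not supplied it.

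A second missing piece: bounding the number of vertices to $O(k^2)$ does not by itself give a polynomial kernel, because the edge weights may require arbitrarily many bits. The paper finishes by invoking the Frank–Tardos weight-compression result (\Cref{prop:FT}) and proving (\Cref{claim:weight_compression}) that the relative order of cycle lengths is preserved, so that the final output has bit-size $O(k^{16})$. Your proposal does not address weight compression at all; without it, the output can have polynomially many vertices but superpolynomial bit-size.

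One thing you do get right and which the paper also uses is the leaf bound: via the four-color theorem applied to the dual graph, $4k$ or more facial shortest cycles immediately gives a yes-instance, which is exactly \Cref{claim:T_ed_leaves}. And the degree-2 dissolution rules (including the parallel-edge and loop handling) match the paper's \Cref{redrule:dissolve,redrule:loops,redrule:parallel}. So the skeleton of your plan is aligned; what is missing is precisely the mechanism that lets you bound the \emph{relevant} part of the laminar structure, and that is where the paper's novelty lies.
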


The kernelization algorithm also implies that \dsce can be solved in $k^{\OO(k)}\cdot n^{\OO(1)}$ time on planar graphs.

Due to duality in planar graphs, \dsce in planar graphs is equivalent to finding $k$ disjoint minimum edge cuts; here we assume that two edge cuts $(X_1,Y_1)$ and~$(X_2,Y_2)$ are assumed to be disjoint if $E(X_1,Y_1)\cap E(X_2,Y_2)=\emptyset$. Here, for each~$i\in [2]$, $(X_i,Y_i)$ partitions the vertex set and $E(X_i,Y_i)$ denotes the set of edges with exactly one endpoint in $X_i$. Thus, \Cref{thm:mincycles-ed} implies the existence of a polynomial kernel for the problem of 
packing edge-disjoint minimum cuts.  We observe that this problem parameterized by $k$ on general graphs is \classW{1}-hard even when restricted to graphs with unit edge weights.

\paragraph{Remarks on impact of the weight function.} Our algorithm in Theorem \ref{thm:minsumxpalg} is strongly polynomial for each $k$. This rules out standard modifications used for converting weighted problems to the unweighted case, for instance, by subdividing edges according to their weight which could be exponential in the graph size. In fact, we do not use the fact that the weights are integers and this result can easily be generalized to positive rationals (or even positive reals assuming the real RAM model). The same holds for Theorem \ref{thm:mincycles-vd}.

\subsection{Overview of the methods}

Here, we give a high-level overview over how we achieve the different results.
\subsubsection{\msc}
The algorithms for \msc, \msedc, \mvc, and \mvedc are based on a combinatorial result about the number of paths in graphs that are no-instances for the \msc problem.
In particular, we prove that if an edge-weighted graph $G$ does not contain a collection of $k$ pairwise vertex-disjoint cycles of total length at most~$\ell$, then the number of paths of length at most~$\ell$ in $G$ is at most $n^{\OO(k^5)}$ (\Cref{lem:mainlemminsumcyc}).
Given this lemma, obtaining $n^{\OO(k^6)}$-time algorithms for these problems is quite simple as shown in the next paragraph.

In the case of \msc, we start enumerating paths of length at most $\ell$, and if there are more than $n^{\Omega(k^5)}$ of them, then we know that it is a yes-instances.
Otherwise, from the set of all paths of length at most $\ell$, we obtain the set of all cycles of length at most $\ell$ and then enumerate all collections of ~$k$ of them in time $n^{\OO(k^6)}$.
To generalize this to \mvc, we assume that $\ell_1$ is the smallest of the length bounds, apply the same enumeration strategy with the parameters~$\ell_1$ and $k$, and then branch on the $n^{\OO(k^5)}$ enumerated cycles, to again obtain the $n^{\OO(k^6)}$ running time.
These algorithms generalize to the edge-disjoint case simply by applying the same combinatorial result about vertex-disjoint cycles, but inserting the additional check when branching that the cycles we choose are edge-disjoint.

Thus, the main technical ingredient of the proof of \Cref{thm:minsumxpalg} is the proof of this combinatorial result (\Cref{lem:mainlemminsumcyc}).
Let us sketch the ideas here.
First, consider the case of~$k=1$.
If $G$ does not contain a cycle of length at most $\ell$, then the number of paths of length at most $\ell/2$ in $G$ is at most $\binom{n}{2}$, because having two distinct paths of length at most $\ell/2$ between the same pair of vertices would imply the existence of a cycle of length at most $\ell$.
Then, we observe that for any integer $c \ge 1$ and length-bound $\ell_0$, if $G$ contains at most $N$ paths of length at most $\ell_0$, then $G$ contains at most $N^c$ paths of length $c \cdot \ell_0$, and therefore conclude that $G$ contains at most $\OO(n^4)$ paths of length at most $\ell$.

Let us now extend the above argument to the general case of $k>1$. 
First, let $\tilde{C}$ be a subgraph of~$G$ comprising a maximal collection of pairwise vertex-disjoint cycles each of length at most $\ell/k$.
By our assumption, $\tilde{C}$ consists of at most $k-1$ cycles, and the graph $G - V(\tilde{C})$ does not contain a cycle of length at most $\ell/k$.
Let us then consider a path $P$ of length at most $\ell$ in $G$.
By considering how the path~$P$ intersects with $\tilde{C}$, we divide it into {\em segments} of two types: maximal subpaths of $P$ that are internally vertex-disjoint with $V(\tilde{C})$, and maximal subpaths of $P$ that are contained in $\tilde{C}$.

We next argue that if the number of such segments is more than $\Omega(k^4)$, then we can in fact construct a family of $k$ vertex-disjoint cycles of total length at most $\ell$, contradicting the premise of the lemma. 
In order to do so, we show that if there is a cycle $C$ in $\tilde{C}$ such that $P$ ``enters and exits'' $C$ more than~$\Omega(k^3)$ times, then we can find 
the claimed collection of $k$ vertex-disjoint cycles in $P \cup C$.
The proof of this proceeds by ``cleaning up'' the interactions of $P$ and $C$ into one of three 
cases with the help of the Erd{\"o}s-Szekeres theorem, and then demonstrating a simple construction in each of the cases.

After proving that $P$ decomposes into $\OO(k^4)$ segments that are either internally vertex-disjoint with~$\tilde{C}$ or contained in $\tilde{C}$, it remains to show that there are only $n^{\OO(k)}$ possible choices for each segment and since $P$ was chosen arbitrarily, this would in turn allow us to bound the number of possibilities for $P$.
For subpaths internally vertex-disjoint with $\tilde{C}$ we use the property that~$V(G) - V(\tilde{C})$ does not contain a cycle of length at most $\ell/k$, implying that $G - V(\tilde{C})$ contains at most $n^{\OO(k)}$ paths of length at most~$\ell$.
For subpaths contained in $\tilde{C}$, it is simple to observe that there are at most $\OO(n^2)$ possible choices for them, as each path in $\tilde{C}$ is defined by the choice of two vertices and the choice of a side to travel inside a cycle.
This concludes the informal overview of the proof of \Cref{thm:minsumxpalg}.

\subsubsection{Cycle packing on planar graphs}
The proof of \Cref{thm:mincycles-vd} is based on constructing a laminar family of disjoint cycles representing all shortest cycles and decomposing this family into a tree. We use random separation on such trees to ``filter out'' the most ``promising'' parts and combine branching arguments with an \classFPT algorithm for computing \ind in map graphs. In more detail:

Let $G$ be a weighted planar graph $G$ of (weighted) girth $g$. Without loss of generality, we can assume that each vertex and edge of $G$ is included in a  shortest cycle. Otherwise, we can preprocess the graph in polynomial time and delete vertices and edges that do not appear in a shortest cycle. 
Consider a plane embedding of $G$.
Packing \emph{facial} cycles of $G$ is equivalent to computing an \ind in a map graph. Let us recall that a map graph is the intersection graph of connected and internally disjoint regions of the Euclidean plane \cite{ChenGP02}. Indeed, facial cycles of $G$ are vertex-disjoint if and only if the corresponding vertices in the map graph induced by $G$ are not adjacent. For computing an \ind in a map graph, Chen~\cite{Chen01a} gave an EPTAS; the approximation algorithm of Chen immediately implies an \classFPT algorithm for \ind on map graphs, and hence for packing shortest facial cycles. The first natural approach to try would be to reduce \dsc to \ind in map graphs. This approach fails because a solution may contain many non-facial cycles. The main challenge in obtaining the algorithm for \dsc is identifying such cycles. However, we will use Chen's algorithm as a subroutine for an appropriate base case.

In order to handle non-facial cycles, we construct a  rooted tree 
whose nodes are shortest cycles, called a \emph{Laminar Shortest Cycle Tree} (LSCT).  
Constructions with similar properties are used to approximate the maximum size of cycle packings for {uncrossable families} of cycles~\mbox{\cite{GoemansW98,schlomberg2023packing,schlomberg2024improved}}, as well as for finding the minimum cycle basis for a planar graph (see, e.g.,~\cite{HartvigsenM94}). However, our algorithm needs a new decomposition with specific properties tailored to the properties of shortest cycles.  

Suppose that $\cC$ is a family of vertex-disjoint shortest cycles of $G$. Then $\cC$ is a laminar family of cycles. Some of these cycles may be nested. This means that a rooted tree can represent $\cC$ naturally (see~\Cref{fig:tree}). We assume that the leaves of the tree are facial cycles in the embedding. 

\begin{figure}[t]
\centering
\scalebox{0.7}{
\input{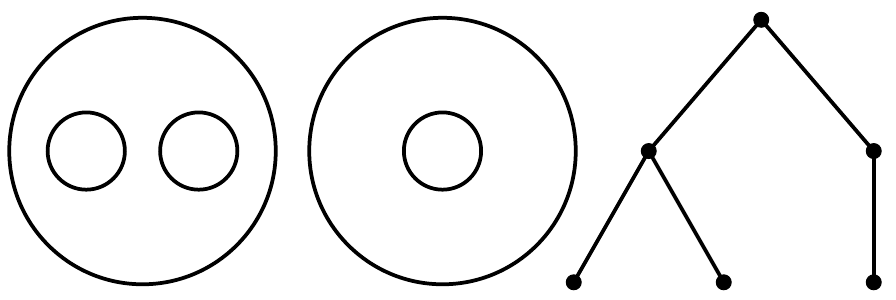_t}}
\caption{A solution $\cC=\{C_1,\ldots,C_5\}$ and the tree representing $\cC$.}
\label{fig:tree}
\end{figure} 

It is convenient to assume that in the plane embedding of $G$ the frontier of the external face  is a shortest cycle. (It is not difficult to prove that such an embedding exists and it can be constructed in polynomial time.) 
  We fix such an embedding and construct an LSCT tree $\cT(G)$. The root of the tree is the facial cycle of the external face.   We construct the tree by recursively processing    shortest cycles that are current leaves of the already constructed tree.

If $C$  is a facial cycle, it becomes a leaf of $\cT(G)$. When  $C$  is not a facial cycle,  we consider two possibilities.

($i$) $C$ contains two vertices $s$ and $t$ (called the \emph{poles}) that are at a distance $g/2$ from each other in $C$ and such that there is an $s$-$t$-path $P$ of length $g/2$ whose edges and internal vertices are in the internal face of $C$. Then we say that $C$ is a \emph{splittable cycle} and we create an \emph{$S$-node} of $\cT(G)$ from $C$.
We find an inclusion-maximal family 
$\cP=\{P_0,P_1,\ldots,P_\ell\}$ of internally vertex-disjoint $s$-$t$-paths of length $g/2$ where~$P_0$ and~$P_\ell$ are the paths in $C$ and the other paths have their edges and internal vertices in the internal face of $C$.  
We assume the paths are indexed in the clockwise order from $s$.
We define the cycles formed by the consecutive paths in $\cP$ (see~\Cref{fig:decomp}(a)) to be the children of $C$. We show that the poles (if they exist) must be unique, so this step is well defined.

($ii$) If $C$ has no poles, we call such cycles  \emph{unsplittable} and we make a \emph{$U$-node} of $\cT(G)$ from~$C$. We find an inclusion-maximal laminar family 
$\cC=\{C_1\ldots,C_\ell\}$ of shortest cycles distinct from $C$ such that (i)~the cycles of $\cC$ are inside $C$ in the sense that they have no elements in the external face of $C$, (ii)~for every $i,j\in[\ell]$, cycles $C_i$ and $C_j$ are outside each other, that is, $C_j$ has no elements in the internal face of $C_i$ and vice versa, and (iii)~the cycles of $\cC$ are maximal in the sense that for each $i\in[\ell]$, there is no shortest cycle $C'$ distinct from $C$ and $C_i$ such that~$C_i$ is inside~$C'$. We set the cycles of $\cC$ to be the children of $C$ in $\cT(G)$ (see~\Cref{fig:decomp}(b)).  

We prove that an LSCT can be constructed in polynomial time and it contains all shortest cycles in the following sense: every shortest cycle $C$ is either a node of $\cT(G)$ or the tree has an $S$-node~$S$ with poles $s$ and $t$ such that $C$ is formed by two distinct paths of $\cP$ constructed for~$S$. 

The bottleneck in directly using LSCT at this point to find a solution, is dealing with $S$-nodes and cycles formed by paths of $\cP$ as these cycles are not nodes of the tree. To avoid having to consider such cycles in the solution, we argue that the random separation technique~\cite{CaiCC06} can be used to highlight pairs of paths in the families $\cP$ that could be cycles in a potential solution. We then modify $\cT(G)$ in such a way that for the obtained tree $\cT^*$, every cycle of some potential solution is a node of $\cT^*$. 

Finally,  we use a recursive branching algorithm to find a solution. First, we check in \classFPT time whether there is a solution formed by the facial cycles using the algorithm for \ind on map graphs.
Otherwise, if we fail to find a solution formed by facial cycles, we conclude that for a yes-instance, there should be a cycle $C$ in $\cT^*$ and a solution containing $C$ such that (i) the only cycles in this solution that are in the internal face of $C$ are facial cycles and (ii)~the solution contains $k'\geq 1$ facial cycles that are inside $C$. The crucial observation is that we can choose $C$ to be one of the lowest nodes in $\cT^*$ having the property that there are $k'$ vertex-disjoint facial cycles inside $C$, and we have at most~$k$ possibilities to choose $C$ for a given $k'$. Here, again we can use the algorithm for \ind on map graphs as a black box. We branch on $\OO(k^2)$ possible choices of $k'$ and $C$. For each branch, we delete the vertices and edges of $G$ that are in the inner face of $C$ and modify $\cT^*$ respectively,  decrease the parameter by $k'$, and recurse. This results in the algorithm solving
  \dsc in time~$k^{\OO(k)}\cdot n^{\OO(1)}$, which proves 
  \Cref{thm:mincycles-vd}.

  In the proof of \Cref{thm:mincycles-ed} which provides a polynomial kernel 
  for \dsce on planar graphs,  we use a similar approach. In this case the solution is  easier because when the LSCT has at least $4k$ leaves, that is, shortest facial cycles of $G$, the subgraph of the dual graph induced by the shortest facial cycles of $G$ has an independent set of size at least $k$ (by the Four-Color Theorem). This means that $(G,k)$ is a yes-instance of \dsce. In the case when the number of leaves is 
  less than $4k$, we can mark $\OO(k^2)$ nodes of $\cT(G)$ in such a way that a yes-instance has a solution containing only marked shortest cycles or cycles formed by paths from $\cP$ constructed for $S$-nodes. The idea behind the marking is to choose the lowest ``useful'' cycles. This leads to a kernel for \dsce with $\OO(k^2)$ vertices and edges and proves \Cref{thm:mincycles-ed}.

\subsection{Related Work}
The complexity of  \minsumdp{} is widely open on general and planar graphs.
 \citet{BH19} give a randomized algorithm with running time~$\OO(n^{11})$
for finding two disjoint $s_i$-$t_i$-paths of minimum total length. It is not known whether the problem is polynomial-time solvable for~$k=3$. 
On  planar graphs, several polynomial-time algorithms are known for the special cases when the terminals belong to two faces 
\cite{BorradaileNZ15,verdiere2011shortest,DattaIK018,Kobayashi22}. Recently, Mari et al.~\cite{MMS24} designed an algorithm for this problem with running time ~$k^{2^{\OO(k)}}$~when the input graph is a grid.

\dsp is \classNP-hard when~$k$ is part of the input \cite{Eil98} and \classW1-hard parameterized by~$k$~\cite{BNRZ23}.
\citet{Loc21} gives a polynomial-time algorithm for any fixed~$k$ in undirected graphs without weights, see also \citet{BNRZ23} for an improvement of the running time of Lochet's algorithm. Whether  \dsp is \classFPT parameterized by $k$ on planar graphs is an interesting open question.

Very little was known about the complexity of \msc and \dsc. Both problems are \classNP-complete when restricted to planar graphs~\cite{garey1979computers,GuruswamiRCCW98,Holyer81} because they generalize the well-known triangle packing problem. Rautenbach and Regen~\cite{RautenbachR09} show that for graphs of girth  $g \in \{4, 5\}$, \dsc and \dsce allow polynomial-time algorithms for instances with maximum degree $3$
 but are APX-hard for instances with maximum degree 4.  For each $g\geq 6$, both problems are APX-hard already for graphs with maximum degree $3$.
 
Approximation algorithms of ratio $(3+\varepsilon)$ for \dsc and \dsce on planar graphs follow from the framework of \citet{schlomberg2023packing} for uncrossing families of cycles.

When the sizes of the cycles are bounded, cycle packing is a special case of the subgraph packing problem. 
For such problems a plethora of parameterized and kernelization algorithms are known \cite{CyganFKLMPPS15}. However, all these algorithms work only for cycles of constant sizes. 
For vertex-disjoint (or edge-disjoint) cycle packing (without conditions on the lengths of cycles), a polynomial kernel on planar graphs, as well as more general classes of sparse graphs,  are known~\cite{BodlaenderFLPST16,FominLST20}.  

\section{Preliminaries}\label{sec:preliminaries}
For a positive integer $p$, we use $[p]$ to denote the set $\{1,\ldots,p\}$, $[p]_0$ for the set $\{0, 1,\ldots,p\}$ and we define~$[p,q]=\{p,\ldots,q\}$ for integers $p<q$.

\paragraph{Graphs.}
In this paper, we consider finite undirected graphs and refer to the textbook by Diestel~\cite{Diestel12} for undefined notion.
By default, the considered graphs are simple but we may allow multiple edges and loops in some occasions.
Let~$G=(V,E)$ be a graph. We use $V(G)$ and $E(G)$ to denote the set of vertices and the set of edges of $G$, respectively. We use~$n$ and~$m$ to denote the number of vertices and edges in~$G$, respectively, unless this creates confusion, in which case we clarify the notation explicitly.
For a vertex subset~$U \subseteq V$, we use~$G[U]$ to denote the subgraph of~$G$ induced by the vertices in~$U$ and~$G - U$ to denote~$G[V \setminus U]$.
For two graphs $G_1$ and $G_2$, the \emph{intersection}~$G=G_1\cap G_2$ is the graph with 
$V(G)=V(G_1)\cap V(G_2)$ and~$E(G)=E(G_1)\cap E(G_2)$, and the \emph{union} 
$G=G_1\cup G_2$ is the graph with 
$V(G)=V(G_1)\cup V(G_2)$ and $E(G)=E(G_1)\cup E(G_2)$.
A \emph{path}~$P=v_0\cdots v_\ell$ is a graph with vertex set~$\{v_0,v_1,\ldots,v_\ell\}$ and edge set~$\{\{v_{i-1},v_i\} \mid i \in [\ell]\}$; the vertices~$v_0$ and~$v_\ell$ are  the \emph{endpoints} of~$P$ and the other vertices are \emph{internal}. For a path with endpoints $s$ and $t$, we say that $P$ is an $s$-$t$-path. For a path $P$, a subgraph $R$ of $P$ with a single connected component is called a {\em subpath} of $P$ and we use the notation $R\subseteq P$ to denote this. 
In the case of unweighted graphs, the \emph{length} of $P$ is defined as the number of edges, and if we are given edge \emph{weights} $w\colon E(G)\rightarrow \mathbb{Z}_{>0}$ then the \emph{length} of~$P=v_0\cdots v_\ell$ is 
$w(P)=\sum_{i=1}^\ell w(\{v_{i-1},v_i\})$.
A \emph{cycle}~$C$ is a path along with an additional edge between the two endpoints.
The length of a cycle is defined in the same way as the length of a path. The \emph{girth} of $G$ is the minimum length of a cycle in $G$; $g(G)=+\infty$ if $G$ is a forest. An \emph{(edge) cut} of $G$ is a partition $(X,Y)$ of~$V(G)$; the set of edges $E(X,Y)=\{\{x,y\}\mid x\in X,~y\in Y\}$ is the \emph{cut-set}.  
If $G$ is a weighted graph then the weight of a cut is the weight of its cut-set.

A graph is \emph{planar} if it can be drawn on the plane such that its edges do not cross each other.
Such a drawing is called a \emph{planar embedding} of the graph and a planar graph with a planar embedding is called a \emph{plane} graph. The planarity  of a graph can be tested and a planar embedding can be found (if it exists) in linear time by the results of Hopcroft and Tarjan~\cite{HopcroftT74}. 
The \emph{faces} of a plane graph are the open regions of the plane bounded by a set of edges and that do not contain any other vertices or edges. The outer region is the \emph{external} face and the other faces are \emph{internal}. 
The vertices and edges 
appearing on the closed walk bounding the 
face form its \emph{frontier}.
Given a plane graph $G$, we use $F(G)$ to denote its set of faces.
If a cycle~$C$ of $G$ is the frontier of some face, then $C$ is a \emph{facial cycle}.
For a plane graph~$G$, its \emph{dual} graph~$G^*=(F(G),E^*)$ has the set of faces of $G$ as the vertex set, and for each $e\in E(G)$, $G^*$ has the \emph{dual} edge~$e^*$ whose endpoints are either two faces having~$e$ on their frontiers or~$e^*$ is a self-loop at~$f$ if~$e$ is in the frontier of exactly one face~$f$ (i.e., $e$ is a bridge of $G$). 
Observe that~$G^*$ is not necessarily simple even if $G$ is a simple graph. 
If $G$ is a weighted graph then it is standard to define the weights of the edges of the dual graph by setting the weight of $e^*$ to be equal to the weight of $e$ for each edge~$e\in E(G)$.
It is well known that finding a shortest cycle for a plane graph is equivalent to computing a minimum cut for the dual graph. More precisely, $C$ is a shortest cycle in a weighted plane graph $G$ if and only if the set $\{e^*\mid e\in E(C)\}$ of dual edges is a cut-set of~$G^*$ of minimum weight. In a weighted graph, the distance between two vertices is the weight of a minimum-weight path between them.

The celebrated four-color theorem~\cite{AppelH89,RobertsonSST97} implies the following observation about independent sets in planar graphs.

\begin{observation}\label{obs:ind-planar}
An $n$-vertex planar graph has an independent set of size at least $\frac{n}{4}$.
\end{observation}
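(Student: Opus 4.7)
The plan is to invoke the Four-Color Theorem directly, as the paper already cites~\cite{AppelH89,RobertsonSST97} in the sentence preceding the observation. The Four-Color Theorem states that every planar graph admits a proper vertex coloring with at most four colors, i.e., there is a partition of the vertex set into four color classes such that each class induces an edgeless subgraph, and each color class is therefore an independent set.

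Concretely, I would proceed as follows. Given an $n$-vertex planar graph $G$, apply the Four-Color Theorem to obtain a partition $V(G) = V_1 \cup V_2 \cup V_3 \cup V_4$ into independent sets (some of which may be empty). Since $|V_1| + |V_2| + |V_3| + |V_4| = n$, by the pigeonhole principle there exists an index $i \in [4]$ with $|V_i| \geq n/4$. The set $V_i$ is an independent set of the required size, which establishes the observation.

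There is no real obstacle beyond appealing to the Four-Color Theorem itself; the deduction is a one-line pigeonhole argument once the coloring is in hand. If one wished to avoid the Four-Color Theorem, an analogous statement with the weaker bound $n/5$ would follow from the elementary Five-Color Theorem, and even $n/6$ from the fact that every planar graph has a vertex of degree at most $5$ and is thus $6$-colorable by a greedy argument; however, since the observation explicitly asks for the ratio $1/4$, the Four-Color Theorem is the natural (and, as cited, intended) ingredient.
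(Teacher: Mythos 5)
Your proof is correct and uses exactly the argument the paper intends: the sentence preceding the observation explicitly attributes it to the Four-Color Theorem, and the pigeonhole step you supply is the (omitted) one-liner that finishes it. Nothing to add.
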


A \emph{map} graph is the intersection graph of a finite family of simply connected and internally disjoint regions of the plane. In this paper, we assume that each map graph $G$ is given by its  embedding, that is, by a plane graph $H$ such that $V(G)\subseteq F(H)$ and two vertices $f_1,f_2\in V(G)$ are adjacent if and only if the frontiers of the faces $f_1$ and $f_2$ of $H$ share at least one point (a vertex or an edge of $H$). It was shown by Chen~\cite{Chen01a} that \ind is \classFPT on map graphs.\footnote{The result of Chen~\cite{Chen01a} is stated as an EPTAS but the approximation algorithm immediately implies an \classFPT algorithm.}

\begin{proposition}[\cite{Chen01a}]\label{prop:ind-map}
It can be decided in $2^{\OO(k)}\cdot|V(H)|^2$ time whether a map graph $G$ given by its embedding $H$ has an independent set of size at least $k$.
\end{proposition}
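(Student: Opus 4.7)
The plan is to invoke Chen's cited result directly and derive the stated \classFPT algorithm via a standard rounding argument. Chen's theorem provides an EPTAS for Maximum Independent Set on map graphs: given a map graph $G$ via its embedding $H$ and an accuracy parameter $\varepsilon>0$, the algorithm runs in time $2^{\OO(1/\varepsilon)}\cdot|V(H)|^2$ and returns an independent set of size at least $(1-\varepsilon)\cdot\alpha(G)$, where $\alpha(G)$ denotes the size of a maximum independent set in $G$.

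I would instantiate this algorithm with $\varepsilon:=1/(k+1)$ and answer YES if and only if the returned independent set $S$ has size at least $k$. The first step is to verify correctness. If $\alpha(G)\geq k$, then the returned set satisfies $|S|\geq (1-1/(k+1))\cdot k = k^2/(k+1) > k-1$, so by integrality $|S|\geq k$. Conversely, if $|S|\geq k$, then $\alpha(G)\geq |S|\geq k$, because $S$ is itself an independent set in $G$. Hence comparing $|S|$ to $k$ decides the instance correctly.

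The second step is the running time, which is immediate: substituting $\varepsilon=1/(k+1)$ into Chen's bound $2^{\OO(1/\varepsilon)}\cdot|V(H)|^2$ yields $2^{\OO(k)}\cdot|V(H)|^2$, matching the statement.

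The main obstacle here is purely expository: Chen's paper phrases the algorithmic result as an EPTAS rather than as a parameterized algorithm, so this short conversion from approximation to exact decision must be made explicit. All of the genuine combinatorial and algorithmic work sits inside Chen's construction (which, as is standard for results of this flavor, leverages a Baker-style layering of the underlying planar embedding $H$ combined with dynamic programming on bounded-treewidth slabs), and we treat it as a black box.
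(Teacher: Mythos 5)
Your proposal is correct and matches the paper's own (implicit) justification: the paper gives no explicit proof, but a footnote states that Chen's EPTAS ``immediately implies an \classFPT algorithm,'' which is exactly the $\varepsilon := 1/(k+1)$ rounding argument you spell out. Your write-up simply makes explicit the standard conversion the paper treats as obvious.
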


\paragraph{Parameterized Complexity.} We refer to the textbooks by Cygan et al.~\cite{CyganFKLMPPS15} and by Downey and Fellows~\cite{DowneyF13} for a detailed introduction. Here, we just briefly remind the main notions. A \emph{parameterized problem} is a language $L\subseteq\Sigma^*\times\mathbb{N}$  where $\Sigma^*$ is a set of strings over a finite alphabet $\Sigma$. An input of a parameterized problem is a pair $(x,k)$ where $x$ is a string over $\Sigma$ and $k\in \mathbb{N}$ is a \emph{parameter}. 
A parameterized problem is \emph{fixed-parameter tractable} (or \classFPT) if it can be solved in  $f(k)\cdot |x|^{\mathcal{O}(1)}$ time for some computable function~$f$.  
The complexity class \classFPT contains all fixed-parameter tractable parameterized problems.
A parameterized problem is in the class \classXP if it can be solved in $|x|^{f(k)}$ time for a computable function $f$.    
A \emph{kernelization algorithm} or \emph{kernel} for a parameterized problem $L$ is a polynomial-time algorithm that takes as its input an instance $(x,k)$ of $L$ and returns an instance $(x',k')$ of the same problem such that (i) $(x,k)\in L$ if and only if $(x',k')\in L$ and (ii) $|x'|+k'\leq f(k)$ for some computable function~$f\colon \mathbb{N}\rightarrow \mathbb{N}$. The function $f$ is the \emph{size} of the kernel; a kernel is \emph{polynomial} if $f$ is a polynomial. 

We conclude this section with two auxiliary results used in our paper. We need the following classical result of  Erd{\H{o}}s and Szekeres~\cite{erdos1935combinatorial}.  

\begin{proposition}[Special case of Erd\H{o}s-Szekeres theorem \cite{erdos1935combinatorial}]\label{prop:ErdosSzekeres}
Each sequence of more than ${(r-1)^2}$ distinct real numbers contains either an increasing subsequence of length $r$ or a decreasing subsequence of length $r$.
\end{proposition}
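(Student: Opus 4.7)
The plan is to use the standard pigeonhole argument associating to each element of the sequence a pair of integers recording the longest monotone run ending at it. Write the sequence as $a_1, a_2, \ldots, a_n$ with $n > (r-1)^2$ and, for each $i \in [n]$, let $x_i$ be the length of the longest strictly increasing subsequence ending exactly at $a_i$ and $y_i$ be the length of the longest strictly decreasing subsequence ending exactly at $a_i$. Both quantities are at least $1$ since the singleton $a_i$ qualifies.

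The crucial step is to argue that the map $i \mapsto (x_i, y_i)$ is injective. Suppose to the contrary that $i < j$ and $(x_i, y_i) = (x_j, y_j)$. Since all numbers are distinct, either $a_i < a_j$ or $a_i > a_j$. In the first case, taking any longest increasing subsequence ending at $a_i$ and appending $a_j$ yields an increasing subsequence ending at $a_j$ of length $x_i + 1$, so $x_j \geq x_i + 1 > x_i$, a contradiction. The case $a_i > a_j$ is symmetric and forces $y_j > y_i$. Hence the $n$ pairs $(x_i, y_i)$ are pairwise distinct.

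Now suppose for contradiction that $G$ has neither an increasing subsequence of length $r$ nor a decreasing subsequence of length $r$; here by $G$ I mean the given sequence. Then for every $i$ we have $x_i, y_i \in \{1, 2, \ldots, r-1\}$, so the injective map places $n$ distinct pairs inside a grid of size $(r-1) \times (r-1)$. This forces $n \leq (r-1)^2$, contradicting the hypothesis $n > (r-1)^2$. Therefore at least one of the two monotone subsequences of length $r$ must exist.

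The argument is entirely elementary and there is no real obstacle; the only subtle point is to state carefully that the longest monotone runs end \emph{exactly} at $a_i$ (not merely use $a_i$ as one of their terms), since this is what lets the extension argument in the injectivity step go through. Once that is fixed, the proof reduces to counting pairs in a $(r-1) \times (r-1)$ grid via the pigeonhole principle.
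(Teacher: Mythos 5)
Your proof is correct: this is the standard pigeonhole argument (sometimes attributed to Seidenberg or Hammersley) in which each position $i$ is mapped to the pair $(x_i,y_i)$ of lengths of the longest increasing and decreasing subsequences ending at $a_i$, injectivity is established by the extension step you describe, and the bound follows by counting pairs in the $(r-1)\times(r-1)$ grid. Note that the paper states this result as a cited classical fact (\Cref{prop:ErdosSzekeres}) and does not prove it, so there is no competing proof in the paper to compare against; your argument is a clean, self-contained justification of the cited proposition.

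One cosmetic remark: you momentarily write ``$G$'' for the sequence, which collides with the paper's use of $G$ for graphs; you caught this yourself, but it would be cleaner to keep the notation $a_1,\ldots,a_n$ throughout.
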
 

For kernelization for \dsce on planar graphs, we use the algorithm of Frank and Tardos~\cite{FrankT87} to compress the weights as it is standard for kernelization algorithms~\cite{EtscheidKMR17}.  

\begin{proposition}[\cite{FrankT87}]\label{prop:FT}
There is an algorithm that, given a vector $w\in\mathbb{Q}^r$ and 
an integer~$N$, in (strongly) polynomial time finds a vector $\overline{w}\in\mathbb{Z}^r$ with $\|\overline{w}\|_{\infty}\leq 2^{4r^3}N^{r(r+2)}$ such that ${\mathsf{sign}(w\cdot b)=\mathsf{sign}(\overline{w}\cdot b)}$ 
for all vectors $b\in \mathbb{Z}^r$ with $\|b\|_1\leq N-1$.
\end{proposition}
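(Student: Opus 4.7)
My plan is to invoke the Lenstra--Lenstra--Lov\'asz (LLL) lattice basis reduction algorithm for simultaneous Diophantine approximation and argue that a suitable integer approximation of $w$ preserves the desired sign structure. First, by replacing $w$ with $w/\|w\|_\infty$---a substitution that preserves every value $\mathsf{sign}(w\cdot b)$---I may assume $\|w\|_\infty\le 1$. Then I would appeal to the classical LLL-based Diophantine approximation theorem of Lenstra: for any $\epsilon>0$, one can compute in polynomial time a positive integer $p$ and a vector $\overline{w}\in\mathbb{Z}^r$ with
$$\|pw-\overline{w}\|_\infty\le\epsilon\qquad\text{and}\qquad 1\le p\le 2^{r(r+1)/4}\epsilon^{-r}.$$

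Setting $\epsilon$ of order $N^{-(r+2)}$ yields the target bound $\|\overline{w}\|_\infty\le p\|w\|_\infty+\epsilon\le 2^{4r^3}N^{r(r+2)}$, matching the statement. It then remains to verify the sign-preservation property. For any $b\in\mathbb{Z}^r$ with $\|b\|_1\le N-1$,
$$|\overline{w}\cdot b-p(w\cdot b)| \le \|\overline{w}-pw\|_\infty\cdot\|b\|_1 \le \epsilon(N-1),$$
which is strictly less than $1$ by the choice of $\epsilon$. If $w\cdot b=0$, this immediately forces $\overline{w}\cdot b=0$ by integrality of $\overline{w}\cdot b$.

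The main obstacle will be the reverse implication: when $w\cdot b\neq 0$, I need $|p(w\cdot b)|>\epsilon(N-1)$, so that the integer $\overline{w}\cdot b$ is nonzero and shares the sign of $p(w\cdot b)$. The concern is that $|w\cdot b|$ could be arbitrarily small since the denominators of $w$ are unrestricted. To circumvent this, I would adopt a geometric/pigeonhole perspective: the finitely many hyperplanes $\{x\in\mathbb{R}^r:b\cdot x=0\}$ with $\|b\|_1\le N-1$ partition $\mathbb{R}^r$ into cells indexed by sign patterns, and the relatively open cell containing $w$ is a polyhedron defined by at most $(2N-1)^r$ inequalities with integer coefficients bounded by $N-1$. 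By Cramer's rule together with a Hadamard-type determinantal bound applied to the vertices and extreme directions of this cell, one locates an integer point with $\ell_\infty$-norm at most $(rN)^{\mathcal{O}(r)}$ lying in the same cell as $w$. The LLL-based procedure effectively realizes such a point as $\overline{w}$, and the exponent $r(r+2)$ together with the factor $2^{4r^3}$ in the stated bound precisely absorbs the blow-up from the lattice reduction constant $2^{r(r+1)/4}$ and from the sign-gap estimate required to separate the cells. Since LLL runs in strongly polynomial time on rational inputs, the overall procedure is strongly polynomial as required.
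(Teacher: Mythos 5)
The paper does not prove this statement; it is the Frank--Tardos weight-reduction theorem, invoked as a black box from \cite{FrankT87}. So the relevant comparison is with the Frank--Tardos argument itself.

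Your proposal has a genuine gap, and it is exactly the one you flag as ``the main obstacle.'' A single round of LLL-based simultaneous Diophantine approximation does not suffice: once $\epsilon$ is fixed, any $b$ with $0 < |w\cdot b| < \epsilon(N-1)/p$ is rounded incorrectly (generically to $\overline{w}\cdot b = 0$), and because the denominators of $w$ are unrestricted, $|w\cdot b|$ really can be arbitrarily small relative to $\epsilon$. Shrinking $\epsilon$ to beat $|w\cdot b|$ destroys the bound on $p$ and hence on $\|\overline{w}\|_\infty$. Your fallback --- that the relatively open cell of the hyperplane arrangement containing $w$ has a small integer witness by Cramer plus Hadamard --- is a correct \emph{existence} statement but is not algorithmic: the cell is cut out by $\Theta(N^r)$ inequalities, you have no polynomial-time way of determining which ones are relevant, and nothing in the LLL output certifies that the approximant lies in the same cell. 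The sentence ``the LLL-based procedure effectively realizes such a point'' is precisely the claim that needs a proof, and it is false for the one-shot scheme you set up.

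What Frank and Tardos actually do is \emph{iterate} Diophantine approximation with a dimension reduction at each step. Normalize so that some coordinate $w_i$ equals $1$; run simultaneous approximation to obtain an integer $q$ and an integer vector $v$ with $\|qw - v\|_\infty \le 1/(2N)$. The residual $qw - v$ is then exactly zero in coordinate $i$, so it lives in a strictly lower-dimensional subspace, and one recurses on it. After at most $r$ rounds this produces integer vectors $v_1, \ldots, v_t$ such that for every $b$ with $\|b\|_1 \le N-1$, $\mathsf{sign}(w\cdot b)$ equals the sign of the first nonzero $v_j \cdot b$ in lexicographic order: each residual is too small to overturn an earlier nonzero inner product, since $|v_j\cdot b|\ge 1$ while the residual contributes strictly less than $\tfrac12$. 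The output $\overline{w}$ is then a combination $\sum_j M_j v_j$ with rapidly decreasing integer weights $M_j$, chosen so that $\overline{w}\cdot b$ reproduces the same lexicographic sign. This recursion is what simultaneously controls $\|\overline{w}\|_\infty$ and preserves all signs; a single approximation step cannot do both, which is why your proof does not close.
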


\section{\msc is in XP}
In this section, we prove \Cref{thm:minsumxpalg} and show that \msc and \msedc can be solved in $n^{\OO(k^6)}$ time. The algorithms for these two problems are based on the same argument. We therefore focus on \msc and then explain how to extend the algorithm for the case of edge-disjoint cyles. 

First, 
we show that if a path and a cycle intersect in a complicated way, then we can construct a collection of many pairwise disjoint short cycles.
Before this, let us introduce some notation about the intersection of a path and a cycle.

Let $P$ be a path in $G$ and $C$ a cycle in $G$ so that $P$ and $C$ intersect in at least one vertex.
A \emph{chord} of $P$ relative to $C$ is a maximal non-empty subpath $R \subseteq P$  so that the endpoints of $R$ are in $V(C)$, the internal vertices of $R$ are disjoint from~$V(C)$, and~$E(R)$ is disjoint from~$E(C)$.
A \emph{tail} of $P$ relative to $C$ is a maximal non-empty subpath $T \subseteq P$ whose one endpoint is an endpoint of $P$ and is disjoint from~$V(C)$, the other endpoint is in $V(C)$, and all internal vertices are not contained in~$V(C)$.
Note that $P$ has between zero and two tails relative to~$C$.
Note also that $E(P)$ is the disjoint union of $E(P) \cap E(C)$, the edges of the chords of $P$ relative to~$C$, and the edges of the tails of $P$ relative to $C$.
\begin{observation}\label{obs:chordNumLikeComponentNum}
The number of chords of $P$ relative to $C$ is the number of connected components of~$P \cap C$ minus one.
\end{observation}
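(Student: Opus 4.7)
The plan is to walk along the path $P$ from one endpoint to the other, tracking when we are on $C$ and when we leave. First I would observe that each connected component of $P \cap C$ is a (possibly single-vertex) subpath $Q$ of $P$ whose vertices all lie in $V(C)$ and whose edges all lie in $E(C)$: this is because $P \cap C$ is a subgraph of $P$, so every connected subgraph of it is itself a subpath of $P$, and by definition its vertex set lies in $V(P) \cap V(C)$ and its edge set lies in $E(P) \cap E(C)$. Since distinct components are vertex-disjoint subpaths of $P$, they inherit a natural linear order along $P$; write them as $Q_1, \dots, Q_c$ in this order.

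Next I would show that between consecutive components $Q_i$ and $Q_{i+1}$ there lies exactly one chord. Let $u$ be the last vertex of $Q_i$ and $v$ the first vertex of $Q_{i+1}$ along $P$, and let $R$ be the subpath of $P$ from $u$ to $v$. Both endpoints of $R$ lie in $V(C)$ by construction. Any vertex strictly between $u$ and $v$ along $P$ must avoid $V(C)$, for otherwise it would lie in some component of $P \cap C$ sitting strictly between $Q_i$ and $Q_{i+1}$, contradicting their consecutiveness. Similarly, no edge of $R$ can lie in $E(C)$, since any such edge would connect $u$ (or an internal vertex of $R$, which would then have to be in $V(C)$, a contradiction) into a larger component of $P \cap C$, merging $Q_i$ and $Q_{i+1}$. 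Hence $R$ satisfies the chord conditions, and it is maximal with these properties between $Q_i$ and $Q_{i+1}$. Conversely, any chord $R'$ has both endpoints in $V(P) \cap V(C)$, and since its internal vertices avoid $V(C)$, these endpoints must lie in two \emph{distinct} components of $P \cap C$; by maximality they are in fact \emph{consecutive} components, so $R'$ is the chord associated with a unique gap.

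This establishes a bijection between chords of $P$ relative to $C$ and the $c-1$ gaps between consecutive components, which yields the claimed equality. The only subtlety to verify is the boundary behaviour: the portions of $P$ lying before $Q_1$ and after $Q_c$ are either empty (when the corresponding endpoint of $P$ lies in $V(C)$) or constitute tails of $P$ relative to $C$, and in either case contribute no chords, which is precisely why the count is $c-1$ rather than $c$ or $c+1$. I do not expect a genuine obstacle here; the argument is a careful traversal of $P$ together with a short case check at its two endpoints.
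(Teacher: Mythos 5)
Your argument is correct. The paper states this as an unproved observation, and your proof — ordering the components $Q_1,\dots,Q_c$ of $P\cap C$ along $P$, checking that the maximal $P$-subpath spanning each of the $c-1$ gaps is exactly one chord, and noting that the pieces of $P$ before $Q_1$ and after $Q_c$ are tails (or empty) and contribute nothing — is precisely the careful version of the implicit justification, with the boundary cases handled appropriately.
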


\begin{lemma}
\label{lem:chordbound}
Let $(G,w)$ be an edge-weighted graph, $C$ a cycle in $G$ of length at most $\ell$, and $P$ a path in $G$ of length at most $\ell$.
If $P$ has at least $128k^3$ chords relative to $C$, then $G$ contains a collection of $k$ pairwise vertex-disjoint cycles with total length at most $\ell$.
\end{lemma}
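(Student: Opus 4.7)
The plan is to exploit the rich interaction pattern between $P$ and $C$ imposed by the $m\geq 128k^3$ chords. Enumerate the chords $R_1,\ldots,R_m$ in their $P$-order, let $a_i,b_i\in V(C)$ be the endpoints of $R_i$ in that order, and fix an orientation of $C$ together with a basepoint outside the set of chord endpoints; this assigns to each chord endpoint a position in the linear segment obtained by cutting $C$ at the basepoint, and I write $x_i<y_i$ for the two positions of the endpoints of $R_i$.

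I would then reduce to a canonical configuration using the Erd\H{o}s--Szekeres theorem (\Cref{prop:ErdosSzekeres}). After sorting the chords by $x_i$, one application of Erd\H{o}s--Szekeres to the induced sequence of $y_i$'s extracts a sub-family on which the $y_i$'s are monotone; a supplementary Dilworth-style argument on the ``chord $i$ lies entirely to the left of chord $j$'' partial order handles the subcase in which the $y_i$'s are increasing. The constant $128k^3$ is chosen so that in every outcome the extracted sub-family has size $r\geq 4k$ and falls into exactly one of three canonical arrangements on the linearized copy of $C$: (A) \emph{pairwise disjoint}, with $x_1<y_1<x_2<y_2<\cdots<x_r<y_r$; (B) \emph{pairwise nested}, with $x_1<x_2<\cdots<x_r<y_r<y_{r-1}<\cdots<y_1$; or (C) \emph{pairwise crossing}, with $x_1<x_2<\cdots<x_r<y_1<y_2<\cdots<y_r$.

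In each of the three cases I would construct many pairwise vertex-disjoint cycles directly. In case (A), each chord $R_i$ together with the arc $[x_i,y_i]$ of $C$---which contains no other chord endpoint---forms a cycle, and the $r$ resulting cycles are pairwise vertex-disjoint because the arcs $[x_i,y_i]$ lie in pairwise disjoint sub-intervals of $C$ and the chords are internally disjoint from $C$ and from each other. In cases (B) and (C), I pair the chords as $(R_{2j-1},R_{2j})$ for $j=1,\ldots,\lfloor r/2\rfloor$ and build for each pair the cycle consisting of both chords joined by the two short arcs of $C$ between the consecutive endpoints $\{x_{2j-1},x_{2j}\}$ and $\{y_{2j-1},y_{2j}\}$ (oriented appropriately in case (B)). The endpoint arrangements in (B) and (C) ensure that these short arcs lie in pairwise disjoint sub-intervals of $C$ across different pairs, yielding $\lfloor r/2\rfloor\geq 2k$ pairwise vertex-disjoint cycles.

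An averaging argument then finishes. In each of the three cases, the chord segments in the constructed cycles are edge-disjoint subpaths of $P$, of total length at most $|P|\leq\ell$; the arc segments are edge-disjoint subpaths of $C$, of total length at most $|C|\leq\ell$. The sum of the lengths of the $\geq 2k$ cycles is thus at most $2\ell$, and selecting the $k$ shortest of them yields $k$ pairwise vertex-disjoint cycles of total length at most $k\cdot\tfrac{2\ell}{2k}=\ell$. The main obstacle I expect is the extraction step, as a single application of Erd\H{o}s--Szekeres cannot separate cases (A) and (C)---both correspond to ``both endpoint sequences increasing'' after the initial sort---so the supplementary Dilworth-style refinement is essential, and carefully tracking its losses is what pins down both the cubic dependence on $k$ and the constant $128$.
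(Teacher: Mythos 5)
Your proposal takes the same overall route as the paper---extract from the $\approx 128k^3$ chords a large homogeneous sub-family (pairwise disjoint, pairwise nested, or pairwise crossing), build one cycle per chord or per pair of chords inside $P\cup C$, and finish by averaging over the resulting $\ge 2k$ vertex-disjoint cycles of total length at most $2\ell$. The three cycle constructions and the averaging are fine. But two steps as you describe them do not go through.

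First, you never enforce that the chords used are pairwise vertex-disjoint. Two chords that are \emph{consecutive along $P$} can share an endpoint in $V(C)$ (where $P$ touches $C$ and immediately leaves again). ``Internally disjoint from each other'' permits exactly this. When it happens, the two positions on the linearized copy of $C$ coincide, so your trichotomy of arrangements (A)/(B)/(C) is not exhaustive, and in case (A) the two arcs $[x_i,y_i]$, $[x_j,y_j]$ meet at the shared endpoint, so the two cycles are not vertex-disjoint. The paper removes this obstacle up front by keeping only every second chord along $P$, which halves the count (to $\ge 64k^3$) but makes the selected chords pairwise vertex-disjoint, hence all endpoint positions distinct.

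Second, the extraction in the order you propose cannot reach the claimed constant. Applying Erd\H{o}s--Szekeres to all $m$ chords yields a monotone sub-family of size only about $\sqrt{m}$, and then a Dilworth/Mirsky step on the ``entirely to the left'' poset within that sub-family gives a homogeneous family of size about $m^{1/4}$; to guarantee size $4k$ one would need $m=\Omega(k^4)$, not $128k^3$. The paper gets the cubic dependence by splitting \emph{before} invoking Erd\H{o}s--Szekeres: either no point of $C$ is spanned by more than $16k^2$ of the (now vertex-disjoint) chords, in which case a greedy interval-scheduling pass already yields $\ge 4k$ pairwise disjoint chords; or some point is spanned by $\ge 16k^2$ chords, all of which pairwise overlap, and Erd\H{o}s--Szekeres applied to that sub-family alone yields $4k$ pairwise crossing or pairwise nested chords with no further refinement needed. (Your plan can be salvaged by running a Mirsky-type argument \emph{first}, with the asymmetric bound ``chain of length $4k$ or antichain of size $\ge m/(4k)$'', and only then Erd\H{o}s--Szekeres on the antichain, which does recover $m=O(k^3)$; but with Erd\H{o}s--Szekeres applied first, as you wrote, the losses compound to $m^{1/4}$ and the constant $128k^3$ is insufficient.)
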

\begin{proof}
Let us assume that $P$ has at least $128k^3$ chords relative to $C$ and construct the desired collection of pairwise disjoint cycles.

First, we select a collection $\chordcol$ of at least $64k^3$ chords of $P$ so that the chords in $\chordcol$ are vertex-disjoint.
This is done simply by following along $P$ and selecting every second chord.
Now, let us index $V(C)$ with integers from $1$ to $|V(C)|$, in an order following the cycle.
Then we associate with each chord $R \in \chordcol$ the pair~$(s(R),t(R)) \in [|V(C)|]$ of numbers such that~$s(R) < t(R)$ correspond to the two endpoints of $R$ (which are in $V(C)$).
Note that as the chords are vertex disjoint, all of the numbers associated to them are distinct.

Let $R_1,R_2 \in \chordcol$ be two chords with $s(R_1) < s(R_2)$.
We say that $R_1$ and $R_2$ are 
\begin{itemize}
\item \emph{consecutive} if $s(R_1) < t(R_1) < s(R_2) < t(R_2)$,
\item \emph{crossing} if $s(R_1) < s(R_2) < t(R_1) < t(R_2)$, and
\item \emph{parallel} if $s(R_1) < s(R_2) < t(R_2) < t(R_1)$.
\end{itemize}
Note that any two chords are either consecutive, crossing, or parallel.

Let us then show that we can obtain a large enough subcollection of $\chordcol$ such that every pair of chords in this collection are related in the same way (i.e., one of consecutive, crossing, or parallel).

\begin{claim}
There is $\chordcol' \subseteq \chordcol$ with $|\chordcol'| \ge 4k$ so that all chords in $\chordcol'$ are either pairwise consecutive, pairwise crossing, or pairwise parallel.
\end{claim}
\begin{claimproof} We say $i \in [|V(C)|]$ \emph{touches} a chord $R \in \chordcol$ if $s(R) \le i \le t(R)$.
First, suppose that no $i \in [|V(C)|]$ touches more than $16k^2$ chords.
Then, the greedy algorithm that repeatedly chooses chords $R$ with the smallest~$t(R)$ and discards all other chords touched by~$t(R)$ manages to collect a set of at least $8k \geq 4k$ pairwise consecutive chords.

Otherwise, there is $i \in [|V(C)|]$ that touches at least $16k^2+1$ chords, implying that 
there is a collection $\chordcol_1 \subseteq \chordcol$ of at least $16k^2$ chords $R$ with $s(R) < i < t(R)$ (recall that the chords in $\chordcol$ are vertex disjoint). 
By the Erd\H{o}s-Szekeres theorem (\cref{prop:ErdosSzekeres}), 
there exists a subcollection $\chordcol_2 \subseteq \chordcol_1$ of size at least $4k$ so that when the chords in $\chordcol_2$ are sorted by $s(R)$, the endpoints $t(R)$ are either all in increasing order, or all in decreasing order.
In the former case, the chords in $\chordcol_2$ are pairwise crossing, and in the latter case, the chords in $\chordcol_2$ are pairwise parallel.
\end{claimproof}

We next consider the three cases arising from the above claim.
First, if we have a collection~$\chordcol'$ of $4k$ pairwise consecutive chords, then by forming for each chord $R \in \chordcol'$ a cycle consisting of~$R$ and the path in $C$ between $s(R)$ and $t(R)$, we obtain a collection of $4k$ pairwise vertex-disjoint cycles.
As the edges of these cycles come from~$E(P) \cup E(C)$, their total length is at most $2 \ell$, so by choosing the $k$ shortest of them we obtain a collection of~$k$ pairwise vertex-disjoint cycles with a total length at most $\ell/2$.

Second, suppose we have a collection $\chordcol'$ of $4k$ pairwise crossing chords.
Let us index them~$R_1, \ldots, R_{4k}$, so that $s(R_i) < s(R_j)$ if $i<j$.
Note that now, $s(R_{4k}) < t(R_1)$ and $t(R_i) < t(R_j)$ whenever $i<j$.
For each $i \in [2k]$, we construct a cycle $C_i$ by taking the union of~$R_{2i-1}$ and~$R_{2i}$, and connecting them by two paths in $C$, the first between~$s(R_{2i-1})$ and~$s(R_{2i})$, and the second between~$t(R_{2i-1})$ and~$t(R_{2i})$.
We obtain a collection of~$2k$ pairwise vertex-disjoint cycles.
The edges of these cycles are contained in~$E(P) \cup E(C)$, and therefore their total length is at most~$2\ell$.
By choosing the~$k$ shortest of them we obtain~$k$ pairwise disjoint cycles of total length of at most~$\ell$.

For the final case of $\chordcol'$ comprising~$4k$ pairwise parallel chords, 
let us use the same indexing~$R_1, \ldots, R_{4k}$ as in the previous paragraph, so that $s(R_i) < s(R_j)$ if $i<j$.
Note that now, $s(R_{4k}) < t(R_{4k})$ and~${t(R_i) > t(R_j)}$ whenever $i<j$.
We again construct a cycle~$C_i$ for each~$i \in [2k]$, by taking the union of~$R_{2i-1}$ and~$R_{2i}$, and connecting them by two paths in~$C$, the first between~$s(R_{2i-1})$ and~$s(R_{2i})$, and the second between~$t(R_{2i})$ and~$t(R_{2i-1})$.
We obtain a collection of~$2k$ pairwise vertex-disjoint cycles.
The edges of these cycles are contained in~$E(P) \cup E(C)$, and therefore their total length is at most~$2\ell$ and by choosing the~$k$ shortest of them we obtain~$k$ pairwise disjoint cycles with a total length of at most~$\ell$.
\end{proof}

Our algorithm for {\msc} is based on a graph-theoretical lemma that bounds the number of paths of length at most $\ell$ in no-instances.
We first prove this lemma in the case when~$k=1$, which will be used in the proof of the general case.

\begin{lemma}
\label{lem:girthpathbound}
If an edge-weighted $n$-vertex graph $(G,w)$ does not contain a cycle of length at most $\ell$, then the number of paths of length at most $\ell/2$ in $G$ is at most ${n \choose 2}$.
\end{lemma}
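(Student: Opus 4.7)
The plan is to show that for every unordered pair $\{u,v\}\subseteq V(G)$ of distinct vertices, there is at most one path in $G$ of length at most $\ell/2$ with endpoints $u$ and $v$. Since there are exactly $\binom{n}{2}$ such pairs and every path of positive length is uniquely determined by its vertex-edge set and has a well-defined endpoint pair, this immediately yields the desired bound.

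The core step is a standard symmetric-difference argument. Suppose for contradiction that there are two distinct paths $P_1\neq P_2$, both of length at most $\ell/2$, that share the endpoint pair $\{u,v\}$. I would consider the subgraph $H$ of $G$ with edge set $E(P_1)\mathbin{\triangle} E(P_2)$. At every vertex $x$, the parity of $\deg_H(x)$ equals the parity of $\deg_{P_1}(x)+\deg_{P_2}(x)$. Internal vertices of a path have even degree in it, while $u$ and $v$ have degree $1$ in each of $P_1$ and $P_2$; consequently every vertex of $H$ has even degree. Because $P_1\neq P_2$, the edge set of $H$ is non-empty, and it is a well-known fact that a non-empty graph in which all vertices have even degree must contain a cycle $C$.

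It remains to bound the length of this cycle. Since $E(C)\subseteq E(P_1)\mathbin{\triangle} E(P_2)\subseteq E(P_1)\cup E(P_2)$ and each edge is counted at most once in a symmetric difference, one has
\[
w(C)\;\leq\; w\bigl(E(P_1)\mathbin{\triangle} E(P_2)\bigr) \;\leq\; w(P_1)+w(P_2) \;\leq\; \tfrac{\ell}{2}+\tfrac{\ell}{2} \;=\; \ell,
\]
which contradicts the hypothesis that $G$ contains no cycle of length at most $\ell$. Hence no two distinct paths of length at most $\ell/2$ can share the same endpoint pair, completing the proof.

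The only subtlety worth flagging is the even-degree bookkeeping at $u$ and $v$: if $P_1$ and $P_2$ happen to begin (or end) with the same incident edge, that edge cancels out in the symmetric difference, so $u$ (resp.\ $v$) may end up with degree $0$ in $H$ rather than $2$. This does not cause any issue because the parity argument still yields even degrees everywhere and the non-emptiness of $H$ is guaranteed by $P_1\neq P_2$; the rest of the argument goes through unchanged.
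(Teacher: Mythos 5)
Your proof is correct and takes essentially the same approach as the paper: associate each path to its endpoint pair and argue that two distinct paths of length at most $\ell/2$ between the same endpoints would yield a cycle of length at most $\ell$. The paper extracts the cycle from the union $P_1\cup P_2$ (which is connected and cannot be a tree, since a tree has a unique $a$-$b$-path), whereas you extract it from the symmetric difference $E(P_1)\mathbin{\triangle}E(P_2)$ via the even-degree argument; this is a slightly more careful variant of the same idea and both bound the cycle weight by $w(P_1)+w(P_2)\le\ell$.
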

\begin{proof}
Suppose there are two vertices~$a,b \in V(G)$ with two distinct (but not necessarily vertex- or edge-disjoint)~$a$-$b$-paths $P_1$ and $P_2$ in $G$, both of length at most $\ell/2$.
Now, the sum of edge weights in~$P_1 \cup P_2$ is at most~$\ell$ and must contain a cycle, that is, $G$ contains a cycle of length at most~$\ell$.
Hence, for each pair $a,b \in V(G)$, there is at most one $a$-$b$-path of length at most $\ell/2$, and therefore the total number of paths of length at most $\ell/2$ is at most ${n \choose 2}$.
\end{proof}

We also note that such bounds on the number of paths of a certain length can be boosted to higher lengths.

\begin{observation}
\label{lem:pathboundboost}
Let $(G,w)$ be an edge-weighted graph, so that the number of paths of length at most~$\ell$ in $G$ is at most~$N$.
For any integer $c \ge 1$, the number of paths of length at most $c \cdot \ell$ in~$G$ is then at most $N^c$.
\end{observation}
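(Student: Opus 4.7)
I would prove this by induction on $c$, with base case $c=1$ being exactly the hypothesis. For the inductive step, the plan is to give a canonical injective encoding of each path $P$ of length at most $c\ell$ in $G$ by an ordered pair $(P_1,P_2)$, where $P_1$ is a path in $G$ of length at most $\ell$ and $P_2$ is a path in $G$ of length at most $(c-1)\ell$; combining this injection with the inductive hypothesis applied to $P_2$ and the assumption applied to $P_1$ then gives the bound $N\cdot N^{c-1}=N^c$.

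To define the encoding, fix a canonical orientation of $P$ as $v_0\cdots v_m$ (say, by ordering its two endpoints lexicographically), and let $k$ be the largest index such that the prefix $v_0\cdots v_k$ has weight at most $\ell$. If $k=m$, pair $P$ with a trivial (single-vertex) path at $v_m$. Otherwise, set $P_1=v_0\cdots v_k$ and $P_2=v_{k+1}\cdots v_m$, separated by the single ``skipped'' edge $\{v_k,v_{k+1}\}$. Crucially, the maximality of $k$ forces the weight of $P_1$ together with the skipped edge to exceed $\ell$, so
\[
w(P_2)\;<\;w(P)-\ell\;\le\;(c-1)\ell,
\]
which puts $P_2$ squarely in the scope of the inductive hypothesis.

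The heart of the argument is checking injectivity of $P\mapsto(P_1,P_2)$. Given a candidate pair, the endpoint of $P_1$ and the starting vertex of $P_2$ are both prescribed, and since $G$ is simple there is at most one edge between them, which is then forced to be the skipped edge; concatenating $P_1$, this edge, and $P_2$ recovers $P$ uniquely (with the boundary case $k=m$ detected by $P_2$ being trivial at the endpoint of $P_1$). The main obstacle I anticipate is the bookkeeping around orientations: strictly speaking, the encoding uses \emph{oriented} sub-paths, which could introduce a factor of $2^c$ in the count. This can be absorbed either by fixing a canonical orientation scheme or, more cheaply, by noting that every subsequent application of this observation in the paper uses it only up to constants (e.g., to derive the $\OO(n^4)$ bound on paths of length at most $\ell$ in graphs of girth exceeding $\ell$ via \Cref{lem:girthpathbound}).
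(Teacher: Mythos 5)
Your proof is correct, and it takes a genuinely different route from the paper's. The paper disposes of the observation with the one-liner ``Any path of length at most $c\cdot\ell$ can be constructed as the union of at most $c$ paths of length at most $\ell$,'' i.e.\ it decomposes $P$ into overlapping consecutive subpaths $Q_1\cup\dots\cup Q_c$; injectivity of the resulting encoding is automatic because the tuple literally reconstructs $P$ as its union, so orientations never arise. Your induction instead splits $P$ by \emph{deleting} the boundary edge, and this changes both the arithmetic and the injectivity check. Two remarks. First, your skipped-edge split cleanly handles a case the union formulation glosses over: if $P$ contains an edge of weight exceeding $\ell$, no subpath of weight at most $\ell$ can cover that edge, so $P$ is not a union of such subpaths; in your decomposition such an edge is simply the skipped one, and the maximality condition $w(P_1)+w(\{v_k,v_{k+1}\})>\ell$ still gives $w(P_2)<(c-1)\ell$. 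Second, the orientation overhead you flag is real: an unordered pair of undirected subpaths does not determine $P$ in a simple graph (several endpoint-to-endpoint edges may exist), so one must encode oriented pieces, giving a bound of roughly $(2N)^c$ rather than $N^c$. You correctly note that this constant-factor-per-level loss is harmless, since every use of the observation in the paper (notably in \Cref{lem:mainlemminsumcyc}, where $N=n^2$ and $c=2k$) only needs $N^{\OO(c)}$; the union formulation, when it applies, avoids the factor entirely, which is presumably why the paper phrases it that way.
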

\begin{proof}
Any path of length at most $c \cdot \ell$ can be constructed as the union of at most $c$ paths of length at most $\ell$.
\end{proof}

We are now ready to prove our main graph-theoretical lemma, which 
is at the heart of our algorithm.

\begin{lemma}
\label{lem:mainlemminsumcyc}
If an edge-weighted $n$-vertex graph $(G,w)$ does not contain a collection of $k$ pairwise vertex-disjoint cycles of total length at most $\ell$, then the number of paths of length at most $\ell$ in $G$ is in~$n^{\OO(k^5)}$.
\end{lemma}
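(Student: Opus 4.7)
The plan is to structurally decompose every path of length at most $\ell$ in $G$. First, let $\tilde{C}$ be a maximal collection of pairwise vertex-disjoint cycles in $G$, each of length at most $\ell/k$. Since $G$ lacks a packing of $k$ vertex-disjoint cycles of total length at most $\ell$, we must have $|\tilde{C}| \le k-1$, and by maximality $G' := G - V(\tilde{C})$ contains no cycle of length at most $\ell/k$. \Cref{lem:girthpathbound} then bounds the number of paths in $G'$ of length at most $\ell/(2k)$ by $\binom{n}{2}$, and \Cref{lem:pathboundboost} (applied with $c=2k$) boosts this to $n^{\OO(k)}$ paths in $G'$ of length at most $\ell$.

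Next, fix an arbitrary path $P$ in $G$ of length at most $\ell$ and partition it into maximal subpaths of two types: \emph{outside segments}, whose internal vertices avoid $V(\tilde{C})$, and \emph{inside segments}, entirely contained in $\tilde{C}$. The two types alternate along $P$. To bound their count, apply \Cref{lem:chordbound} to $P$ against each cycle $C_i \in \tilde{C}$ separately: since $G$ does not admit $k$ disjoint cycles of total length at most $\ell$, $P$ must have strictly fewer than $128k^3$ chords relative to $C_i$. Via \Cref{obs:chordNumLikeComponentNum}, this bounds the number of connected components of $P \cap C_i$ by $128k^3$; summing over the at most $k-1$ cycles in $\tilde{C}$ yields at most $\OO(k^4)$ inside segments, and hence at most $\OO(k^4)$ outside segments as well.

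Finally, I bound the number of possibilities for each segment separately. An inside segment lies in a single cycle $C_i \in \tilde{C}$ and is determined by its two endpoints together with a traversal direction around $C_i$, for a total of $n^{\OO(1)}$ choices across all $i$. An outside segment is determined by its two endpoints (any pair from $V(G)$, giving $\OO(n^2)$ choices) together with its internal portion, which is either empty or a path of length at most $\ell$ in $G'$; by the first step, there are $n^{\OO(k)}$ such internal portions. Hence each segment admits $n^{\OO(k)}$ possibilities in total. Since an ordered sequence of segments uniquely specifies $P$, the total number of paths in $G$ of length at most $\ell$ is at most $\bigl(n^{\OO(k)}\bigr)^{\OO(k^4)} = n^{\OO(k^5)}$, which is the claimed bound.

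The main obstacle I expect is the careful accounting of boundary cases in the decomposition: tails of $P$ (outside segments whose endpoint is an endpoint of $P$ not in $V(\tilde{C})$) and degenerate single-edge outside segments between two vertices of $V(\tilde{C})$. The alternating structure of the decomposition, together with letting outside-segment endpoints range over all of $V(G)$ and allowing empty internal portions, absorbs these edge cases without disturbing either the $\OO(k^4)$ segment bound or the $n^{\OO(k)}$ per-segment bound. The real content of the proof lies in the interplay between \Cref{lem:chordbound}, which rules out many intersections between $P$ and any single cycle of $\tilde{C}$, and the girth bound of \Cref{lem:girthpathbound} combined with \Cref{lem:pathboundboost}, which controls the outside segments through $G'$.
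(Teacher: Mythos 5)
Your proof is correct and follows essentially the same approach as the paper: the same maximal packing $\tilde{C}$ of short cycles, the same use of \Cref{lem:girthpathbound} and \Cref{lem:pathboundboost} to get an $n^{\OO(k)}$ bound on paths in $G-V(\tilde{C})$, the same use of \Cref{lem:chordbound} together with \Cref{obs:chordNumLikeComponentNum} to bound the number of connected components of $P\cap\tilde{C}$ by $\OO(k^4)$, and the same segment-by-segment reconstruction of $P$ yielding $n^{\OO(k^5)}$. The one cosmetic difference is that you bound the components of $P\cap C_i$ directly per cycle and sum over the at most $k-1$ cycles, whereas the paper argues by contradiction via pigeonhole (if the total were too large, some single cycle would receive too many chords); these are equivalent, and if anything your per-cycle version is the more transparent phrasing.
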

\begin{proof}

Let $\tilde{C}$ be a subgraph of $G$ comprising a maximal collection of vertex-disjoint cycles each of length at most $\ell/k$.
Note that $\tilde{C}$ consists of at most $k-1$ cycles as otherwise~$G$ contains~$k$ vertex-disjoint cycles of total length at most~$\ell$.
Since  $G - V(\tilde{C})$ does not contain a cycle of length at most $\ell/k$, \Cref{lem:girthpathbound} implies that the number of paths of length at most~$\ell/(2 \cdot k)$ in~$G - V(\tilde{C})$ is at most ${n \choose 2} \leq n^2$.
Finally, \Cref{lem:pathboundboost} allows us to conclude that the number of paths of length at most $\ell$ in $G - V(\tilde{C})$ is at most~$n^{4k}$.
This also implies that the number of paths of length at most~$\ell$ in $G$ whose internal vertices are disjoint from $V(\tilde{C})$ is at most~$n^{8k}$.
Now, let $P$ be a path in $G$ of length at most $\ell$.

\begin{claim}
The number of connected components of $P \cap \tilde{C}$ is at most~$128k^4-1$.
\end{claim}
\begin{claimproof}
Suppose the number of connected components of $P \cap \tilde{C}$ is more than~${128k^3 \cdot (k-1)}$.
Since~$\tilde{C}$ is the union of at most~$k-1$~cycles of length at most $\ell/k$ each, by the pigeonhole principle, there is a cycle~$C \subseteq \tilde{C}$ such that the number of connected components of~$P \cap C$ is more than $128k^3$.
\Cref{obs:chordNumLikeComponentNum} now implies that~$P$ has at least $128k^3$ chords relative to $C$ and \Cref{lem:chordbound} therefore states that~$G$ contains a collection of $k$ pairwise vertex-disjoint cycles of total length at most~$\ell$, which is a contradiction.
\end{claimproof}

Now, $P$ can be constructed as the union of the subpaths of~$P$ that are internally vertex-disjoint from~$V(\tilde{C})$ and the connected components of~$P \cap \tilde{C}$, which are paths.
Notice that the number of subpaths of~$P$ that are internally disjoint with $V(\tilde{C})$
is by \Cref{obs:chordNumLikeComponentNum} at most the number of the connected components of~$P \cap \tilde{C}$ plus one, that is, at most~$128 k^4$.
On the other hand, the number of paths that are subgraphs of $\tilde{C}$ is at most $n^2$.
Hence, there are at most~$(n^2)^{128k^4}$ possible choices for subpaths of~$P$ that are connected components of~$P \cap \tilde{C}$ and for each of the at most~$128k^4$ chords and/or tails of~$P$ relative to~$\tilde{C}$, there are at most~$n^{8k}$ choices for paths which are internally vertex-disjoint from~$\tilde{C}$ as noted earlier.
The number of paths of length at most~$\ell$ in $G$ is therefore at most
\[(n^2)^{128k^4} \cdot (n^{8k})^{128k^4} = n^{256k^4} n^{1024k^5} \in n^{\OO(k^5)}.\qedhere\]
\end{proof}

Now, we are ready to translate \Cref{lem:mainlemminsumcyc} into an algorithm for \msc{}.
For this, we need the following easy lemma.

\begin{lemma}
\label{lem:pathgen}
There is an algorithm that, given an edge-weighted $n$-vertex graph $(G,w)$ and integers~$\ell$ and $N$, in time $N \cdot n^{\OO(1)}$ either returns all paths in $G$ of length at most $\ell$, or concludes that the number of such paths is more than $N$.
\end{lemma}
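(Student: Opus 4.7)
\medskip
\noindent\textbf{Proof plan.} The plan is to enumerate paths by a depth-first search that extends the current directed path one edge at a time and halts as soon as the number of distinct undirected paths found exceeds $N$. For each starting vertex $v \in V(G)$, I invoke a recursive routine whose state is a simple directed path $v_0 v_1 \cdots v_i$ with $v_0 = v$ and accumulated weight at most $\ell$; the routine computes a canonical representative of the underlying undirected path (say, the lexicographically smaller of its two orientations), inserts it into a dictionary $\mathcal{P}$ of paths found so far if not already present, and then recurses on every extension $v_0 \cdots v_i u$ such that $u \notin \{v_0, \ldots, v_i\}$ and the new accumulated weight is still at most $\ell$. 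Membership in the current path is maintained by a boolean array updated on entry and exit of each call, so each recursive call spends $\OO(n)$ time in total for iterating neighbors, checking the weight and simplicity constraints, canonicalizing, and updating the dictionary.

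The crucial observation will be that the tree of recursive calls contains exactly one node per directed simple path of weight at most $\ell$ in $G$, since the invariants ``simple'' and ``total weight at most $\ell$'' are preserved by every extension. Each undirected path of at least two vertices corresponds to exactly two such directed paths, and each single-vertex path to one. Hence, if $P$ denotes the true number of undirected paths of length at most $\ell$, the total number of recursive calls is at most $2P + n$, and the overall running time is $\OO(nP)$.

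Finally, to realize the budget, the algorithm inspects $|\mathcal{P}|$ after each insertion and aborts, outputting ``more than $N$'', the instant this count exceeds $N$. At that moment at most $2(N+1) + n$ recursive calls have been made, so the running time is $\OO(n(N+n)) = N \cdot n^{\OO(1)}$; if the procedure terminates normally then $P \le N$ and the same bound applies. The one point requiring a little care will be to justify that the size of the recursion tree is controlled by $P$ rather than by the exponentially larger number of walks in $G$, and this follows immediately from the simplicity and weight invariants, which ensure that every recursive call corresponds to a genuine directed simple path of weight at most $\ell$ (of which there are at most $2P$).
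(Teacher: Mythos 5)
Your proof is correct, and it realizes the same basic strategy as the paper's: enumerate paths incrementally and abort as soon as more than $N$ of them have been found, so that the work done is bounded by $\min(P,N+1)$ up to polynomial factors. The paper implements this as a layered (BFS-style) construction: it builds $\mathcal{P}_i$, the set of paths of length at most $i$, by extending every path in $\mathcal{P}_{i-1}$ by one edge, discarding non-paths, and deduplicating by bucket sort, stopping when the level set exceeds $N$. You instead run a depth-first search from each start vertex, maintaining a simple directed path under a running weight budget, canonicalizing the underlying undirected path, and storing it in a dictionary. The two are equivalent in spirit, but your DFS formulation is arguably cleaner on one technical point: the paper's description iterates the layer index $i$ up to $\ell$, which naively gives a factor of $\ell$ in the running time (one would want to iterate over the number of edges, at most $n-1$, rather than over weighted length); your recursion tree is instead directly bounded by the number of directed simple paths of weight at most $\ell$, which is at most $2P+n$ and hence at most $2(N+1)+n$ after the abort, so the stated $N\cdot n^{\OO(1)}$ bound falls out immediately without any issue about the magnitude of $\ell$. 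One small thing to make explicit if you expand this into a full proof: the dictionary must support insertion and membership queries in $n^{\OO(1)}$ time on keys of size $\OO(n)$, e.g., via a trie or by sorting the canonical vertex sequences, which is exactly what the paper's bucket-sort deduplication provides.
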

\begin{proof}
The set of all paths of length $0$ is easy to generate, as they are just the vertices of $G$.
Now, let~$i \ge 1$ and let $\pathcol_{i-1}$ be the set of paths of length at most $i-1$ in $G$.
Given $\pathcol_{i-1}$, we can generate in time $|\pathcol_{i-1}| \cdot n^{\OO(1)}$ the set $\pathcol_i$ of all paths of length at most $i$ by first generating~$|\pathcol_{i-1}| \cdot n^{\OO(1)}$ candidates by trying to extend each path by one edge (and also including all paths in $\pathcol_{i-1}$), then filtering out the obtained subgraphs that are not paths, and finally deduplicating the output by bucket sorting. 
By repeating this until either $i=\ell$ or $|\pathcol_i| > N$, we obtain the desired algorithm.
\end{proof}

We now prove the main theorem of the section. 

\thmminsum*

\begin{proof}
We demonstrate an algorithm for \msc and then explain how to adapt it for \msedc.

Let us first construct a decision algorithm, that given $(G,w,k,\ell)$, decides whether $G$ contains a collection of $k$ pairwise vertex-disjoint cycles of total length at most~$\ell$.
We first apply the algorithm of \Cref{lem:pathgen} with~$N =  n^{256k^4} n^{1024k^5} \in n^{\OO(k^5)}$ (as specified in the proof of~\Cref{lem:mainlemminsumcyc}).
If it concludes that the number of paths of length at most~$\ell$ is more than $N$, then we conclude that the answer is yes.
Otherwise, we obtain the collection of all paths of~$G$ of length at most~$\ell$, which has size at most~$N \in n^{\OO(k^5)}$.
By trying to extend each with an edge, we obtain the collection of size at most~$N$ of all cycles of~$G$ of length at most~$\ell$.
Now, we try all possibilities of selecting~$k$ cycles from this collection, yielding the running time $\binom{N}{k} \in n^{\OO(k^6)}$.

Now this algorithm can be turned into an algorithm that finds a collection of~$k$ pairwise vertex-disjoint cycles with total length at most~$\ell$ by self-reduction as follows.
We repeatedly attempt to remove edges and check if the solution still exists after an edge is removed.
These cause only a polynomial (in~$n$) overhead in the running time.
This concludes the proof for \msc.

For \msedc, we observe that if the above algorithm for \msc concludes that the graph contains~$k$ vertex-disjoint cycles of total length at most~$\ell$, then these~$k$ cycles are also edge disjoint. Otherwise, we have that
the number of cycles of length at most~$\ell$ is at most~$N$ and we can test for each selection of~$k$ of them whether they are edge-disjoint in~$N^{\OO(k)}$ time. This concludes the proof.
\end{proof}

Next, we generalize \Cref{thm:minsumxpalg} for \mvc{} and \mvedc{}.

\begin{theorem}
\label{the:minvecxpalg}
\mvc{} and \mvedc{} can be solved in~$n^{\OO(k^6)}$ time.
\end{theorem}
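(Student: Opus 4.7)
The plan is to reduce \mvc{} to a recursive procedure that peels off one cycle at a time, reusing the combinatorial bound of \Cref{lem:mainlemminsumcyc} and the enumeration routine of \Cref{lem:pathgen}. First, I would relabel the bounds so that $\ell_1\leq\ell_2\leq\cdots\leq\ell_k$. A standard greedy-exchange argument shows that a feasible assignment of $k$ pairwise disjoint cycles to these bounds exists if and only if, after sorting the cycle lengths increasingly, the $j$-th smallest cycle length is at most $\ell_j$. In particular, every yes-instance admits a solution whose smallest cycle has length at most $\ell_1$.

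Next, I would invoke \Cref{lem:pathgen} on $(G,w)$ with length bound $\ell_1$ and the threshold $N=n^{\OO(k^5)}$ coming from \Cref{lem:mainlemminsumcyc}. If more than $N$ paths of length at most $\ell_1$ exist, then \Cref{lem:mainlemminsumcyc} instantiated with $\ell:=\ell_1$ produces $k$ pairwise vertex-disjoint cycles of total length at most $\ell_1$, each of length at most $\ell_1\le\ell_i$, which is already a valid \mvc{} solution; return yes. Otherwise, I would extend each of the at most $N$ short paths by an edge between its endpoints to enumerate the at most $N$ cycles of length at most $\ell_1$ in $G$, branch on each as a candidate cycle $C$ assigned to $\ell_1$, delete $V(C)$, and recursively solve \mvc{} on $(G-V(C),\ell_2,\ldots,\ell_k)$. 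The base case $k=0$ is trivially yes, and by the exchange argument, if the original instance is feasible then at least one branch will pick the smallest cycle of an optimal solution and the recursion will succeed.

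The recurrence $T(k,n)\leq n^{\OO(k^5)}\cdot T(k-1,n)+\mathrm{poly}(n)$ telescopes to $n^{\OO(k^6)}$, matching the running time of \Cref{thm:minsumxpalg}. For \mvedc{}, the only change is to replace vertex deletion with edge deletion in the recursion, i.e.\ recurse on $(G-E(C),\ell_2,\ldots,\ell_k)$; the many-paths branch still yields a valid solution because vertex-disjoint cycles are a fortiori edge-disjoint, and the enumeration branch is unchanged.

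The main conceptual obstacle is the exchange argument justifying the smallest-first ordering of cycles along the recursion: one needs that if a bijection between cycles and bounds exists, then matching the $j$-th smallest cycle to the $j$-th smallest bound is also feasible, which follows from a classical swap proof on the number of inversions. A minor technicality is that \Cref{lem:mainlemminsumcyc} counts paths rather than cycles, but this is absorbed by the inexpensive path-to-cycle conversion described above and does not affect the asymptotic bound.
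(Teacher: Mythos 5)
Your proposal is correct and follows essentially the same approach as the paper: sort the bounds so $\ell_1$ is smallest, use \Cref{lem:mainlemminsumcyc} and \Cref{lem:pathgen} with parameters $k$ and $\ell_1$ to either declare a yes-instance or enumerate the $n^{\OO(k^5)}$ cycles of length at most $\ell_1$, branch on the first cycle, delete its vertices (or edges for \mvedc), and recurse, giving $n^{\OO(k^6)}$. The only cosmetic difference is that you justify the branch via a sort-and-exchange argument on cycle lengths, whereas the paper simply notes that the cycle assigned to $\ell_1$ is itself among the enumerated short cycles, so the correct branch trivially exists; your exchange argument is sound but not strictly needed.
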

\begin{proof}
Let us give a decision algorithm.
This can be turned into an algorithm for finding a solution similarly as in the proof of \Cref{thm:minsumxpalg}.

Let~$(\ell_1,\ldots,\ell_k)$ be the given vector, and assume without loss of generality that $\ell_1 \le \ell_i$ for all $i$.
Similarly as in the proof of \Cref{the:minvecxpalg}, we apply the combination of \Cref{lem:mainlemminsumcyc} and \Cref{lem:pathgen} with the parameters~$k$ and~$\ell_1$ to, in time $n^{\OO(k^5)}$, either conclude that~$G$ contains a collection of $k$ pairwise (vertex- and edge-)disjoint cycles of total length at most $\ell_1$, or enumerate all (at most $N \in n^{\OO(k^5)}$) cycles of~$G$ of length at most $\ell_1$.
In the former case, we are done as any such collection gives us a solution, and in the second case, we branch on the cycles to guess the first cycle, delete its vertices (or edges in the case of \mvedc) 
and recursively apply the algorithm to find the $k-1$ other cycles. 
This branching algorithm runs in a total time of $(n^{\OO(k^5)})^k = n^{\OO(k^6)}$.
\end{proof}

\section{Lower bound for \msc}\label{sec:lb}
In this section, we prove our computational lower bound for \msc. For convenience, we restate our result. 

\thmminsumlb*

\begin{proof}
    We reduce from \mcc which is well-known to be \classW{1}-complete~\cite{CyganFKLMPPS15}.
    We recall that the task of the problem is, given a graph $G$ whose set of vertices is partioned into~$\ell$ sets $V_1,\ldots,V_\ell$ (called \emph{color classes}), to decide whether $G$ has a clique of size $\ell$ with exactly one vertex from each color class.
    
    Let~$(G,\ell)$ be an instance of \mcc{} and let~$V_i = \{v^i_1,v^i_2,\ldots,v^i_{\nu_i}\}$ be the set of vertices of color~$i$ for each~$i \in [\ell]$.
    Let~$\nu$ be the maximum number of vertices of any color, let~$\Delta$ be the maximal degree of~$G$, and let~$\gamma = (\nu-1)(3\Delta+1)-1$.
    For each color~$i \in [\ell]$, we create a vertex-selection gadget as follows.
    We start with~$6\nu$ vertices~$w_a^{i,1}, u^{i,1}_a, w_a^{i,2}, u^{i,2}_a, w_a^{i,3}$, and~$u^{i,3}_a$ for each~$a \in [\nu]$.
    We add an edge~$\{w_a^{i,j},u^{i,j}_a\}$ for each~$i \in [\ell]$, $j \in [3]$, and each~$a \in [\nu]$.
    Next, for each~$i \in [\ell]$, $j \in [3]$, $a \in [\nu]$, 
    we add~$3\Delta-1$~vertices~$v^{i,j}_{a,b}$ where ~$b \in [3\Delta-1]$.
    We add the edge~$\{v^{i,j}_{a,b},v^{i,j}_{a,b+1}\}$ for each~${i \in [\ell]}$, ${j \in [3]}, {a \in [\nu]}$, and each~${b \in [3\Delta-2]}$.
    Moreover, we add the edges~$\{u^{i,j}_{a},v^{i,j}_{a,1}\}$ and~$\{v^{i,j}_{a,3\Delta-1},w^{i,j}_{a+1}\}$ for each~${i \in [\ell]}$, ${j \in [3]},$ and~$a \in [\nu-1]$ and the edges~$\{u^{i,j}_{\nu},v^{i,j}_{\nu,1}\}$ and~$\{v^{i,j}_{\nu,3\Delta-1},w^{i,j \bmod 3 + 1}_{1}\}$ for each~$i \in [\ell]$ and each~$j \in [3]$. \
    Note that the entire constructed cycle has length~$3\nu(3\Delta+1)$.
    Finally, we add paths of length~$2\gamma$ between all pairs of~$w$ and~$u$ vertices that have distance exactly~$(\nu-1)(3\Delta+1)-1 = \gamma$.
    We call these paths \emph{chords}.
    \Cref{fig:vsg} gives an illustration of the above construction.
    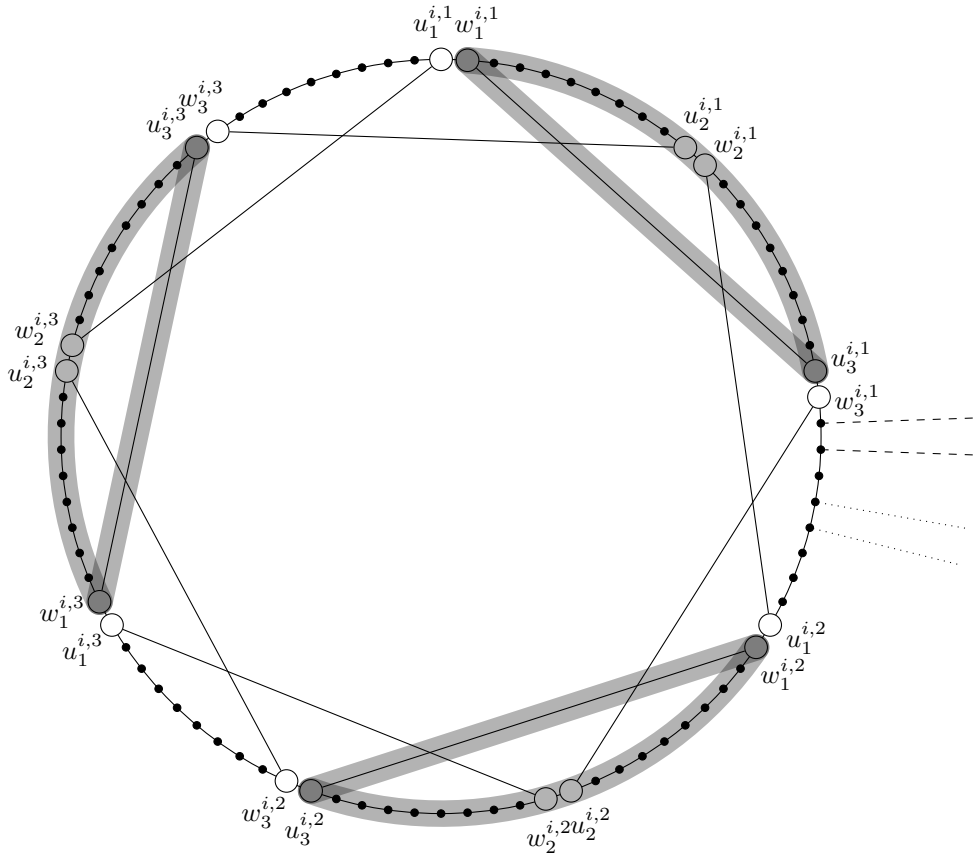
\begin{figure}[t]
        \centering
        \begin{tikzpicture}
            \draw circle (5);
            \foreach \x in {1,2,3} {
                \foreach \y in {1,2,3} {
                    \node[circle,inner sep=3pt,draw,fill=white] (w\x\y) at (246-120*\x-40*\y:5) {};
                    \node[circle,inner sep=3pt,draw,fill=white] (u\x\y) at (246-120*\x-40*\y+4:5) {};
                    \node at (246-120*\x-40*\y-1:5.5) {$w^{i,\x}_\y$};
                    \node at (246-120*\x-40*\y+5:5.5) {$u^{i,\x}_\y$};
                    \foreach \z in {1,2,3,4,5,6,7,8} {
                       \node[circle,draw,fill=black,inner sep=1pt] (v\x\y\z) at (246-120*\x-40*\y-4*\z:5) {};
                    }
                }
            }
            \foreach \x in {1,2} {
                \pgfmathtruncatemacro\a{\x + 1};
                \foreach \y in {1,2} {
                    \pgfmathtruncatemacro\z{\y + 1};
                    \draw (w\x\z) to (u\a\y);
                }
                \pgfmathtruncatemacro\z{1};
                \pgfmathtruncatemacro\y{3};
                \draw (w\x\z) to (u\x\y);
            }
            \pgfmathtruncatemacro\x{3};
            \pgfmathtruncatemacro\a{1};
            \foreach \y in {1,2} {
                \pgfmathtruncatemacro\z{\y + 1};
                \draw (w\x\z) to (u\a\y);
            }
            \pgfmathtruncatemacro\z{1};
            \pgfmathtruncatemacro\y{3};
            \draw (w\x\z) to (u\x\y);

            \draw[path] (w11.center) -- (u13.center);
            \draw[path] ($(86:5)$) arc (86:10:5);
            
            \draw[path] (w21.center) -- (u23.center);
            \draw[path] ($(206:5)$) arc (206:130:5);
            
            \draw[path] (w31.center) -- (u33.center);
            \draw[path] ($(326:5)$) arc (326:250:5);

            \draw[dashed] ($(2:7)$) -- (v131);
            \draw[dashed] ($(358:7)$) -- (v132);
            \draw[dotted] ($(350:7)$) -- (v134);
            \draw[dotted] ($(346:7)$) -- (v135);
        \end{tikzpicture}
        \caption{An example of the vertex-selection gadget with~$\nu = \Delta = 3$  for one color~$i$. The chords depict paths of length~$2(\nu-1)(3\Delta-1) = 32$ and one of the ways to pick three vertex-disjoint cycles using three chords is highlighted. This solution picks vertex~$v^i_3$. Paths encoding two edges incident to~$v^i_3$ are shown by dashed and dotted lines, respectively.}
        \label{fig:vsg}
    \end{figure}

    Next, we encode the edges of the input graph.
    For each color~$i\in [\ell]$ and each vertex~$v^i_a \in V_i$, arbitrarily assign a distinct number from~$[\Delta]$ to each incident edge.
    Let~$f(v^i_a,e)$ be the assigned number.
    For each edge~$e=\{v^i_a,v^j_b\} \in E$, we add paths of length~$\lceil\frac{8}{5}\gamma\rceil$ between~$v^{i,1}_{a,3f(v^i_a,e)-2}$ and~$v^{j,1}_{b,3f(v^j_b,e)-2}$ and between~$v^{i,1}_{a,3f(v^i_a,e)-1}$ and~$v^{j,1}_{b,3f(v^j_b,e)-1}$.
    We say that these two paths \emph{encode} the edge~$e$.
    Finally, we set~$k = 3\ell + {\ell \choose 2}$.

    Since the above construction takes polynomial time to compute,~$k \leq 4\ell^2$, and each vertex has degree at most three, it only remains to prove that the input instance contains a multicolored clique (of size~$\ell$) if and only if the constructed graph contains~$k$ vertex-disjoint cycles of total length at most~$L=9\ell\gamma + {\ell \choose 2}2(\lceil\frac{8}{5}\gamma\rceil+1)$. 
    For the backward direction, note that all cycles that use vertices from more than one vertex-selection gadget have length at least~$2(\lceil\frac{8}{5}\gamma\rceil+1) > 3 \gamma$.
    Moreover, all cycles within one vertex-selection gadget that use more than one chord have length at least~$4\gamma$.
    Hence, any solution of total length at most~$L$ contains at least~$3\ell$ cycles of total length at most~$9\ell\gamma$.
    By construction within one vertex-selection gadget, one can pick at most~$3$ vertex-disjoint cycles of average length at most~$3\gamma$ and this is only achievable if one picks three equally spaced chords and all vertices from the initial cycle of the vertex-selection gadget with the exception of three sets of~$v$ vertices between two consecutive~$w$ and~$u$ vertices as depicted in \Cref{fig:vsg}.
    We say that such a solution avoiding the~$v$ vertices between~$w^{i,1}_a$ and~$u^{i,1}_{a+1}$ (or~$u^{i,2}_1$ if~$a = \nu$) \emph{picks} vertex~$v^i_a$.
    If we select a solution that picks a vertex for each vertex-selection gadget, then it remains to find~${\ell \choose 2}$ vertex-disjoint cycles that use vertices from at least two vertex-selection gadgets of total length at most~${\ell \choose 2}2(\lceil\frac{8}{5}\gamma\rceil+1)$.
    Since each path between two vertex selection gadgets is of length~$\lceil\frac{8}{5}\gamma\rceil$, each vertex in a vertex-selection gadget has at most one incident edge leaving the vertex-selection gadget, and two paths between vertex-selection gadgets have adjacent endpoints if and only if they encode an edge in~$G$, it follows that any solution of total length at most~$L$ contains~${\ell \choose 2}$ pairs of paths encoding edges of~$G$.
    By construction, this corresponds to a set of~${\ell \choose 2}$ edges between~$\ell$ vertices of different colors, that is, to a multicolored clique (of size~$\ell$).

    For the forward direction, note that if there exists a multicolored clique, then choosing the respective vertex in each vertex-selection gadget and taking the cycles encoding all edges between the chosen vertices results in a set of~$k$ vertex-disjoint cycles of total length exactly~$L$.
    This concludes the proof.
\end{proof}

Since two cycles in subcubic graphs are edge-disjoint if and only if they are vertex-disjoint, we can defer the same hardness result for \msedc.

\begin{corollary}
    \msedc is \classW{1}-hard in subcubic graphs with unit edge weights when parameterized by~$k$.
\end{corollary}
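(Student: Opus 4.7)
The plan is to observe that, on subcubic graphs, vertex-disjointness and edge-disjointness of cycles coincide, so the \classW{1}-hardness established in \Cref{thm:minsum-lb} can be transferred verbatim without constructing a new reduction. Concretely, I would reuse the very instance produced by the proof of \Cref{thm:minsum-lb}: starting from an instance $(G',\ell)$ of \mcc, that reduction outputs a subcubic graph $G$ with unit edge weights together with integers $k$ and $L$ such that $(G',\ell)$ is a yes-instance of \mcc iff $G$ contains $k$ pairwise vertex-disjoint cycles of total length at most $L$. I would hand exactly the same triple $(G,k,L)$ to \msedc.

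The core step is the following equivalence on subcubic graphs: $k$ pairwise vertex-disjoint cycles of total length at most $L$ exist in $G$ iff $k$ pairwise edge-disjoint cycles of total length at most $L$ exist in $G$. The forward direction is immediate, since vertex-disjoint cycles are edge-disjoint. For the reverse, suppose two cycles $C_1$ and $C_2$ share a vertex $v$. Each cycle uses exactly two of the edges incident to $v$, so together they claim four edge-slots at $v$; but $\deg_G(v)\le 3$, so by pigeonhole $C_1$ and $C_2$ must share at least one edge incident to $v$. Hence any family of pairwise edge-disjoint cycles in $G$ is automatically pairwise vertex-disjoint.

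Combining these two observations, the same instance $(G,k,L)$ is simultaneously a yes-instance of \msc and of \msedc, or of neither. Since the reduction from \mcc is a parameterized reduction with the output parameter bounded by a function of $\ell$ and can be computed in polynomial time, the \classW{1}-hardness of \mcc parameterized by $\ell$~\cite{CyganFKLMPPS15} transfers to \msedc parameterized by $k$ on subcubic unit-weight graphs. There is no real obstacle here; the only thing to be careful about is to state the argument in terms of the instance produced in the proof of \Cref{thm:minsum-lb} rather than an arbitrary subcubic graph, so that the subcubicity hypothesis is available when invoking the vertex-disjoint/edge-disjoint equivalence.
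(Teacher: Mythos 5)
Your proposal matches the paper's argument exactly: the corollary follows by observing that edge-disjointness and vertex-disjointness of cycles coincide in subcubic graphs, so the reduction and instance from \Cref{thm:minsum-lb} carry over verbatim. Your pigeonhole justification for that equivalence (four edge-slots at a shared vertex versus degree at most three) is the standard argument the paper leaves implicit.
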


\section{Packings  shortest cycles in planar graphs}\label{sec:planar}
In this section, we consider packings of disjoint shortest cycles in planar graphs and prove \Cref{thm:mincycles-vd,thm:mincycles-ed}. In \Cref{sec:laminar}, we construct a tree structure of a laminar family of shortest cycles that represent all minimum cycles. In~\Cref{sec:dsc-FPT}, we prove \Cref{thm:mincycles-vd} and \Cref{sec:dsce-kern} contains the proof of~\Cref{thm:mincycles-ed}. 

\subsection{Laminar decomposition for shortest cycles}\label{sec:laminar}
In this subsection, we construct a laminar family of disjoint cycles representing all shortest cycles in a planar graph and decompose this family into a tree. Throughout this subsection, we assume that considered graphs are not forests, that is, they have cycles.

We use the following folklore properties of shortest cycles which we prove for completeness. Given two distinct cycles $C_1$ and $C_2$ of a graph $G$ with a nonempty intersection, we say that~$C_1$ and~$C_2$ \emph{touch} if $C_1\cap C_2$  is a path (possibly trivial, that is, having a single vertex).

\begin{lemma}\label{lem:trivial}
Let $G$ be a weighted graph and let $C_1$ and $C_2$ be distinct shortest cycles with at least one common vertex. Then 
\begin{itemize}
\item either $C_1$ and $C_2$ touch,
\item or $V(C_1\cap C_2)=\{s,t\}$ for distinct $s$ and $t$ at distance $g(G)/2$ in both cycles, and 
$C_1=P_1\cup P_2$ and $C_2=Q_1\cup Q_2$ where $P_1$, $P_2$, $Q_1$, and $Q_2$ are distinct internally vertex-disjoint $s$-$t$-paths of length $g(G)/2$.
\end{itemize}
\end{lemma}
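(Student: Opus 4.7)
The plan is to dichotomize on whether $C_1 \cap C_2$ is connected. If it is, then it is a proper connected subgraph of the cycle $C_1$ (since $C_1 \ne C_2$ and no cycle properly contains another cycle in the subgraph sense), and hence a path, possibly trivial; this places us in the first case. For the main case I would enumerate the connected components $K_1, \ldots, K_r$ of $C_1 \cap C_2$ with $r \ge 2$ in their cyclic order along $C_1$. Each $K_j$ is a (possibly trivial) path, and between consecutive components $K_j, K_{j+1}$ (indices modulo $r$) the arc $A_j$ of $C_1$ joining the endpoint of $K_j$ closest to $K_{j+1}$ with the endpoint of $K_{j+1}$ closest to $K_j$ has the property that its interior is disjoint from $V(C_2)$; this follows from the maximality of the components.

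The central technical step is to show that $|A_j| \ge g(G)/2$ for every $j$. Fixing $j$, I would let $s, t$ denote the endpoints of $A_j$ and observe that $s$ and $t$ are distinct (they live in distinct components of $C_1 \cap C_2$) vertices of $V(C_2)$. They split $C_2$ into two $s$-$t$-arcs $B, B'$ with $|B| + |B'| = g(G)$, each of length at least one. Because the interior of $A_j$ avoids $V(C_2)$, both $A_j \cup B$ and $A_j \cup B'$ are simple cycles, so $|A_j|+|B| \ge g(G)$ and $|A_j|+|B'| \ge g(G)$; adding these yields $2|A_j|+g(G) \ge 2g(G)$, i.e.\ $|A_j| \ge g(G)/2$.

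This per-arc bound closes everything out quickly. Summing over $j$ and using $\sum_j |A_j| + \sum_j |K_j| = g(G)$ (the total length of $C_1$) gives $rg(G)/2 \le g(G)$, whence $r = 2$, and tightness forces $|K_1|=|K_2|=0$ and $|A_1|=|A_2|=g(G)/2$. Thus $V(C_1 \cap C_2)=\{s,t\}$ with $s,t$ at distance $g(G)/2$ in $C_1$; running the same argument with the roles of $C_1$ and $C_2$ swapped shows that the two $s$-$t$-arcs of $C_2$ also have length $g(G)/2$. Since $V(C_1)\cap V(C_2)=\{s,t\}$, any arc of $C_1$ has interior disjoint from $V(C_2)$ (and vice versa), so the four arcs are pairwise internally vertex-disjoint, completing the second case. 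I anticipate the only delicate point to be the bookkeeping of the cyclic orientation of the $K_j$'s along $C_1$ and identifying which endpoint of each $K_j$ faces which neighboring component; however, since the key inequality depends only on the local pair $(A_j, C_2)$, the argument decouples nicely into per-arc inequalities that are simply summed, and no interaction with the cyclic order of components along $C_2$ is needed.
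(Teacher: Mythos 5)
Your proof is correct and follows essentially the same route as the paper's: the heart of both arguments is the per-arc girth bound showing that every arc of $C_1$ lying strictly outside $C_2$ must have length at least $g(G)/2$ (via the two cycles $A_j \cup B$ and $A_j \cup B'$). The paper picks two such arcs, uses the length budget of $C_1$ to get one of length at most $g(G)/2$, and then forces equality; you instead bound every arc from below and sum, which makes the conclusion $r=2$ and the triviality of the components of $C_1 \cap C_2$ entirely explicit rather than implicit in the tightness. Both are sound; yours is a mildly more systematic bookkeeping of the same idea.
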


\begin{proof}
Since~$C_1$ and $C_2$ do not touch, $C_1$ has two distinct internally vertex-disjoint paths~$P_1$ and~$P_2$ such that the endpoints of these paths are in $C_2$ and the internal vertices and all the edges are outside $C_2$. Then, the length of one of these paths, say, $P_1$ is upper-bounded by~$g(G)/2$. Let $s$ and $t$ be the endpoints of $P_1$. Denote by $Q_1$ and $Q_2$ two distinct $s$-$t$-paths in~$C_2$. Since~$S_1=P_1\cup Q_1$ and $S_2=P_1\cup Q_2$ are cycles, the length of $S_1$ and $S_2$ is at least~$g(G)$. Therefore, the paths $P_1$, $Q_1$, and $Q_2$ have the same length $g(G)/2$. Also, we have that the length of $P_2$ is~$g(G)/2$. Thus, $P_1$, $P_2$, $Q_1$, and $Q_2$ are distinct internally vertex-disjoint $s$-$t$-paths of length~$g(G)/2$. This concludes the proof.
\end{proof}

We need the following additional notation for plane graphs.

\begin{definition}[\textbf{Laminar family}]
We say that two cycles $C_1$ and $C_2$ in $G$ \emph{cross} if $C_1$ has at least one edge 
in the internal face 
of $C_2$ and, symmetrically, $C_2$ has at least one edge in the internal face of $C_1$. A family $\mathcal{C}$ of cycles is \emph{laminar} if cycles in $\mathcal{C}$ do not cross.
\end{definition}

Let $G$ be a weighted plane graph. 
We introduce the following partial order on the family of all shortest cycles of $G$. For two shortest cycles $C_1$ and $C_2$, we write $C_1\leq C_2$ if every vertex and edge of~$C_1$ is a vertex or an edge of~$C_2$ or is embedded in the internal face of $C_2$. We also write $C_1<_v C_2$ if~$C_1\leq C_2$ and $V(C_1)\cap V(C_2)=\emptyset$ (that is, $C_1$ is completely inside $C_2$), and we write $C_1<_e C_2$ if~$C_1\leq C_2$ and~$E(C_1)\cap E(C_2)=\emptyset$ (i.e., $C_1$ and $C_2$ may share vertices but not edges). 

For a cycle $C$ in $G$, we denote by $G_C$ the subgraph of $G$ induced by the vertices of $C$ and the vertices of $G$ embedded in the internal face of $C$.  
We say that $G$ is \emph{clean} if each vertex and each edge of $G$ is included in some shortest cycle. 
We show that a clean graph always has a facial shortest cycle in the same way as other uncrossable families~\cite{schlomberg2023packing}.

\begin{lemma}\label{lem:facial}
Let $G$ be a clean weighted plane graph. Then, there is an internal face $f\in F(G)$ whose frontier is a shortest cycle.  
\end{lemma}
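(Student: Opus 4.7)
The plan is to select a shortest cycle $C$ of $G$ that minimizes the ``interior size'' $|V(G_C)| + |E(G_C)|$ over all shortest cycles, and show that any such $C$ bounds an internal face on its interior side. Assuming for contradiction that $C$ is not facial, the open interior region bounded by $C$ must contain some element $x$ of $G$ (a vertex strictly inside $C$, or an edge drawn in the interior whose endpoints may lie on $C$). Since $G$ is clean, there is a shortest cycle $C' \ni x$, and automatically $C' \neq C$.

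First I would rule out the case $V(C) \cap V(C') = \emptyset$ and $E(C) \cap E(C') = \emptyset$: this can only occur when $x$ is an interior vertex, and then $C'$ is a Jordan curve disjoint from $C$ passing through a point of the open interior of $C$, which by planarity gives $C' \leq C$ and $C' \neq C$ directly. In the remaining case $C \cap C' \neq \emptyset$ and \Cref{lem:trivial} applies. If $C$ and $C'$ \emph{touch} along a path $P$, then the complementary $s$-$t$-path $R = C' \setminus P$ (with the endpoints of $P$ added back) is internally disjoint from $C$ and contains $x$; planarity then forces $R$ to lie in the closed interior of $C$, so $C' \leq C$. Otherwise $V(C \cap C') = \{s,t\}$, and $C = P_1 \cup P_2$, $C' = Q_1 \cup Q_2$ split into internally disjoint $s$-$t$-paths of length $g(G)/2$; the $Q_i$ containing $x$, say $Q_1$, lies in the closed interior of $C$, and $P_1 \cup Q_1$ is a shortest cycle satisfying $P_1 \cup Q_1 \leq C$ and $P_1 \cup Q_1 \neq C$.

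In each branch I obtain a shortest cycle $C'' \neq C$ with $C'' \leq C$. To derive the contradiction with the choice of $C$, I would verify that $G_{C''} \subsetneq G_C$: every edge of $C$ separates the interior region from the exterior region of $C$ in the embedding, so an edge of $C$ can belong to $G_{C''}$ only if it belongs to $C''$ itself; since $C''$ omits at least one edge of $C$, we obtain strict containment. This contradicts the minimality of $|V(G_C)| + |E(G_C)|$, so $C$ must be facial. Its interior region then contains no vertex or edge of $G$ and is therefore an internal face of $G$ whose frontier is $C$.

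The main technical obstacle is the second alternative of \Cref{lem:trivial}: one has to pick the correct pair of halves to produce a shortest cycle that is strictly smaller, and argue via planarity that the half of $C'$ containing $x$ actually lies on the interior side of $C$. A secondary subtlety is the strict-containment step $G_{C''} \subsetneq G_C$, which relies on the planar-embedding observation that edges of $C$ sit on the outer boundary of the embedded subgraph $G_C$ and hence cannot be absorbed into $G_{C''}$.
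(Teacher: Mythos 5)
Your approach is essentially the paper's: pick a shortest cycle $C$ that is extremal, argue by contradiction that if $C$ is not facial then its interior contains an element $x$ of $G$, use cleanness to obtain a shortest cycle $C'$ through $x$, and apply \Cref{lem:trivial} in cases to build a shortest cycle $C''$ with $C'' \le C$ and $C'' \ne C$. Your case analysis is correct, and the planarity arguments placing the relevant arcs of $C'$ inside $C$ are fine.

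The gap is in the final step, where you claim $G_{C''} \subsetneq G_C$ to derive the contradiction. You argue that an edge $e$ of $C$ can belong to $G_{C''}$ only if $e \in E(C'')$, on the grounds that $e$ lies on the outer boundary of the embedded $G_C$. But $G_{C''}$ is defined as the subgraph of $G$ \emph{induced} by $V(C'')$ together with the vertices drawn strictly inside $C''$; whether an edge of $G$ belongs to $G_{C''}$ depends only on where its endpoints lie, not on where the edge itself is drawn. If the portion $Q$ of $C$ omitted from $C''$ is a single edge, then both its endpoints lie on $C''$, so $Q \in E(G_{C''})$ even though $Q$ is drawn outside $C''$; and if, in addition, no vertex of $G$ is drawn in the region enclosed between $Q$ and the new arc of $C''$, then $V(G_{C''}) = V(G_C)$ and hence $G_{C''} = G_C$, despite $C'' \le C$ and $C'' \ne C$. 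In that situation your measure $|V(G_C)|+|E(G_C)|$ does not decrease and the contradiction does not materialize. This situation genuinely arises, e.g.\ in a theta graph on $u,v$ with arcs $A=\{u,v\}$ of weight $2$ and two length-$2$ paths $B$ (through $x$) and $D$ (through $w$) of unit weights: with $A$ drawn between $B$ and $D$, the non-facial shortest cycle $C=B\cup D$ and the facial cycle $C''=A\cup D$ satisfy $C''<C$ yet $G_{C''}=G_C=G$.

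The paper sidesteps the issue by choosing $C$ minimal with respect to the partial order $\le$ itself: producing any shortest cycle $C''$ with $C'' \le C$ and $C'' \ne C$ already contradicts minimality, with no auxiliary numerical measure to compare. (Antisymmetry of $\le$ on plane cycles follows from the Jordan curve theorem, and well-foundedness from finiteness, so minimal elements exist.) If you prefer a numerical potential, counting the vertices and edges of $G$ that are drawn \emph{strictly} in the internal face of $C$ does work: the new arc of $C''$ is drawn strictly inside $C$ but lies on $C''$, so it is counted for $C$ and not for $C''$, giving strict decrease in every case.
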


\begin{proof}
Let $C$ be a shortest cycle that is minimal with respect to the partial order $(\leq)$. We claim that the internal face of $C$ is a face of $G$. For the sake of contradiction, assume  that this is not the  case. Then, $G$ either has a vertex $v$ embedded in the internal face of $C$ or an edge $e$ with its endpoints in $C$ which is embedded in the internal face of $C$. In both cases, because $G$ is clean, there is a cycle $C'$ containing either $v$ or $e$. Since $C$ is minimal, $C'\not\leq C$. Therefore, $C$ and $C'$ cross and $C'$ has either a vertex or an edge in the external face of $C$. By~\Cref{lem:trivial}, $V(C\cap C')=\{s,t\}$ for distinct $s$ and $t$ at distance $g(G)/2$ in both cycles, and 
$C=P_1\cup P_2$ and $C'=Q_1\cup Q_2$ where $P_1$, $P_2$, $Q_1$, and $Q_2$ are distinct internally vertex-disjoint $s$-$t$-paths of length $g(G)/2$. Notice that either $Q_1$ or $Q_2$ contain $v$ or $e$. By symmetry, assume that this holds for~$Q_1$. Because $P_1$, $P_2$, and $Q_1$ are distinct internally vertex-disjoint, we have that~$Q_1$ is drawn in the internal face of $C$. This means that $S=P_1\cup Q_1$ is a shortest cycle and $S\leq C$. However, this contradicts the assumption that $C$ is minimal. This concludes the proof.   
\end{proof}

Let $C$ be a shortest cycle in a weighted plane graph $G$. We say that $C$ is \emph{splittable} if there are two vertices $s,t\in V(C)$ at distance $g(G)/2$ from each other in $C$ such that $G_C$ has an $s$-$t$-path $P$ of length~$g(G)/2$ distinct from the two $s$-$t$-paths in $C$; we say that $C$ is \emph{unsplittable}, otherwise. We call $s$ and $t$ \emph{poles}. 
Notice that for poles $s$ and $t$ on a splittable shortest cycle, the distance between them in $G$ is $g(G)/2$. We need the following observation about splittable cycles.

\begin{lemma}\label{lem:split}
Let $C$ be a splittable shortest cycle in a weighted plane graph $G$ with poles~$s$ and~$t$. Then, any two distinct shortest $s$-$t$-paths in $G_C$ are internally vertex-disjoint and $\{s,t\}$ is a unique pair of poles on $C$.
\end{lemma}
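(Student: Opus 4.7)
I plan to prove the two claims in order: first, the internal vertex-disjointness of distinct shortest $s$-$t$-paths in $G_C$, and then the uniqueness of the pole pair on $C$, invoking the first claim along the way.

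For the disjointness claim, let $Q_1 \neq Q_2$ be shortest $s$-$t$-paths in $G_C$, both of length $g/2$. The edge-symmetric difference $Q_1 \triangle Q_2$ has even degree at every vertex, so it decomposes into edge-disjoint cycles; since each cycle has weight at least $g$ but the total weight of $Q_1 \triangle Q_2$ is at most $|Q_1|+|Q_2|=g$, the paths $Q_1$ and $Q_2$ must be edge-disjoint. Suppose they share an internal vertex $u$. Splitting each $Q_i$ at $u$ into an $s$-$u$ piece $Q_i^s$ and a $u$-$t$ piece $Q_i^t$, the tight inequality chain $g/2 = d_{G_C}(s,t) \le d_{G_C}(s,u) + d_{G_C}(u,t) \le |Q_i^s| + |Q_i^t| = g/2$ forces both sub-pieces to be shortest paths, and writing $\delta := d_{G_C}(s,u)$ we get $0 < \delta < g/2$ because $u \notin \{s,t\}$. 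Since $Q_1 \neq Q_2$ we may assume $Q_1^s \neq Q_2^s$, so $Q_1^s$ and $Q_2^s$ are two distinct edge-disjoint $s$-$u$-paths of length $\delta$ each; their union is a subgraph of weight $2\delta < g$ containing a cycle, contradicting the girth.

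For uniqueness, assume toward a contradiction that $C$ has another pole pair $\{s',t'\} \neq \{s,t\}$ with associated interior shortest path $P'$ of length $g/2$. Any vertex on $C$ has a unique antipode on $C$, so $\{s,t\}$ and $\{s',t'\}$ are disjoint, and after relabeling we may assume the four vertices occur in cyclic order $s, s', t, t'$ on $C$, with consecutive arc lengths $\alpha, g/2-\alpha, \alpha, g/2-\alpha$ for some $\alpha \in (0, g/2)$. Applying the first claim to both pole pairs yields $V(P) \cap V(C) = \{s,t\}$ and $V(P') \cap V(C) = \{s',t'\}$, so $P$ and $P'$ both lie in the closed disc bounded by $C$; since their endpoints interleave on the boundary, by the Jordan curve theorem $P$ and $P'$ must share a vertex $v$, which is internal to both. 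Write $a$ for the length of the $s$-to-$v$ sub-piece of $P$ and $a'$ for the length of the $s'$-to-$v$ sub-piece of $P'$; both lie in $(0, g/2)$.

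For each $(X,Y) \in \{s,t\} \times \{s',t'\}$, form the closed walk $W_{XY}$ by concatenating the $X$-to-$v$ sub-piece of $P$, the $v$-to-$Y$ sub-piece of $P'$, and the $C$-arc from $Y$ back to $X$. Direct computation with the arc lengths gives the four walk lengths $a+a'+\alpha$, $a+(g/2-a')+(g/2-\alpha)$, $(g/2-a)+a'+(g/2-\alpha)$, and $(g/2-a)+(g/2-a')+\alpha$, which sum to $3g$; and each is strictly positive because $a, a', \alpha \in (0, g/2)$. Hence some $W_{XY}$ has length strictly less than $g$. Its three sub-pieces have pairwise disjoint edge sets, because the $C$-arc shares no edge with $P$ or $P'$: any such edge would have both endpoints in $V(P) \cap V(C) = \{s,t\}$ or in $V(P') \cap V(C) = \{s',t'\}$, forcing $g = 2$, which is impossible as $C$ contains at least the three distinct vertices $s, s', t$. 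Consequently the $C$-arc and the concatenation of the two interior sub-pieces are two edge-disjoint walks between $X$ and $Y$ in the underlying subgraph of $W_{XY}$, so their union contains a cycle of weight at most $|W_{XY}| < g$, contradicting the girth of $G$. The main subtlety is extracting a bona fide cycle (rather than a degenerate retracing) from the short closed walk; the edge-disjointness of the three sub-pieces — itself inherited from the first claim together with the girth bound — is what makes this step immediate.
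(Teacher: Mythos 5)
Your proof is correct in substance and, for the uniqueness claim, takes a genuinely different route than the paper's. Here is the comparison and one small flaw to fix.

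\emph{First claim.} The paper's argument is a one-liner: if $P_1$ and $P_2$ share an internal vertex $u$, then (since the two sub-pieces at $u$ of both paths are forced to be shortest and hence of equal length strictly below $g/2$) the union of two distinct $s$-$u$ sub-pieces already has weight $< g$ and contains a cycle. You insert an extra step, first proving edge-disjointness via the symmetric difference, but that detour is not needed: you can run your ``split at $u$'' argument directly, since $w(Q_1^s \cup Q_2^s) \le 2\delta < g$ holds whether or not the sub-pieces are edge-disjoint. The detour is harmless, just longer.

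\emph{Second claim.} Here your approach diverges more substantially. The paper constructs two shortest cycles $S = P \cup Q$ and $S' = P' \cup Q'$ (with $Q$, $Q'$ appropriately chosen arcs of $C$), observes that $S \cap S'$ contains both a nontrivial sub-arc of $C$ and the interior crossing vertex $v$, and then applies \Cref{lem:trivial} to derive a contradiction. You instead make the topological crossing explicit (Jordan curve, with the observation that $V(P)\cap V(C)=\{s,t\}$ follows from the first claim), and then run a direct weight-averaging argument: the four closed walks $W_{XY}$ have total weight $3g$, so one is shorter than $g$, and you extract a short cycle to contradict the girth. Your route is more self-contained and arguably more transparent (it does not reuse \Cref{lem:trivial}, and the crossing argument is made explicit rather than asserted), at the cost of some bookkeeping with the four walks and the extraction step.

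\emph{One genuine slip.} The justification ``forcing $g = 2$'' for the edge-disjointness of the $C$-arc from the two interior sub-pieces does not make sense in the weighted setting: $g$ is a weight, not a vertex count, and the existence of the edge $\{s,t\}$ on $C$ with weight $g/2$ is perfectly consistent with any $g$. The correct argument is short: a shared edge must have both endpoints in $V(P)\cap V(C)=\{s,t\}$ (or in $\{s',t'\}$), hence equals $\{s,t\}$ (resp.\ $\{s',t'\}$); but the $C$-arc from $X$ to $Y$ has weight $\alpha$ or $g/2-\alpha$, both strictly less than $g/2$, whereas any sub-arc of $C$ joining two antipodal vertices such as $s$ and $t$ has weight at least $g/2$; so $\{s,t\}$ cannot lie on that arc. (Equivalently: if $\{s,t\}\in E(P)$ then $P$ is the single-edge path, which coincides with an arc of $C$ when $\{s,t\}\in E(C)$, contradicting that $P$ is distinct from both arcs.) With that replacement your proof is complete.
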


\begin{proof}
Suppose that $P_1$ and $P_2$ are distinct $s$-$t$-paths in $G_C$ of length $g(G)/2$. To show that~$P_1$ and~$P_2$ are internally vertex-disjoint, note that if $P_1$ and $P_2$ have a common vertex distinct from $s$ and $t$ then~$P_1\cup P_2$ has a cycle whose length is less than $g(G)$. This proves that any two shortest $s$-$t$-paths in $G_C$ are internally vertex-disjoint.

Consider the second property and let $P$ be a shortest $s$-$t$-path in $G_C$ that is distinct from the two $s$-$t$ paths along $C$. Assume that there is a pair of vertices $\{s',t'\}\neq \{s,t\}$ on $C$ at distance~$g(G)/2$ such that there is an $s'$-$t'$-path $P'$ of length $g(G)/2$ in $G_C$ that is not a path in $C$. Note that~$\{s',t'\} \cap \{s,t\} = \emptyset$. Otherwise, one of the two $s$-$t$-paths in $C$ or one of the two $s'$-$t'$-paths in $C$ is strictly shorter than~$g(G)/2$.  Let $Q$ be the $s$-$t$-path in $C$ containing $s'$ and let $Q'$ be the  $s'$-$t'$-path in $C$ containing $s$. 
By the first claim of the lemma, $P$ and $Q$ are internally vertex-disjoint, and, similarly, $P'$ and $Q'$ are internally vertex-disjoint as well. Then, $S=P\cup Q$ and $S'=P'\cup Q' $ are shortest cycles. Notice that $S\cap S'$ contains an~$s$-$s'$-path in $C$ which is nontrivial (contains at least two vertices) because~$s\neq s'$. Also, because $P$ and~$P'$ are paths in $G_C$, the paths have a common vertex $v$ in the internal face of $C$ and, therefore, $S\cap S'$ is not a path. By \Cref{lem:trivial}, we have that $S$ and $S'$ intersect in exactly two vertices, implying that~$s=s'$; a contradiction. This concludes the proof.   
\end{proof}

\begin{figure}[t]
\centering
\scalebox{0.7}{
\input{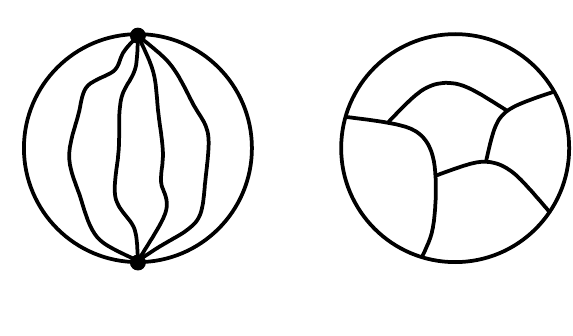_t}}
\caption{Construction of  $\cP_s(C)$ and $\cC_s(C)$ for a splittable cycle (a) and construction of $\cC_u(C)=\{C_1,C_2,C_3,C_4\}$ for an unsplittable cycle (b). 
}
\label{fig:decomp}
\end{figure} 

Let $C$ be a splittable shortest cycle in a weighted plane graph $G$ with poles $s$ and $t$. We denote by~$\cP_s(C)=\{P_0,\ldots,P_\ell\}$ the inclusion-maximal family of distinct $s$-$t$-paths in $G_C$. By \Cref{lem:split}, $\cP_s(C)$ is unique. We assume that the paths on $\cP_s(C)$ are ordered in the clockwise order from the perspective of $s$ (see \Cref{fig:decomp}(a)) and $C=P_0\cup P_\ell$. For $i\in[\ell]$, we define the cycle $C_i=P_{i-1}\cup P_i$, and set~$\cC_s(C)=\{C_1,\ldots,C_\ell\}$. We use the following crucial property of splittable shortest cycles.

\begin{lemma}\label{lem:split-decomp}
Let $C$ be a splittable shortest cycle in a weighted plane graph $G$. Then, for any shortest cycle $S$ in $G_C$, either $S=P\cup Q$ for two paths~$P,Q\in\cP_s(C)$ or $S\leq R$ for some cycle~$R\in \cC_s(C)$.
\end{lemma}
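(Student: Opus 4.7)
I plan to classify a shortest cycle $S\subseteq G_C$ by how it interacts with the paths $P_0,\ldots,P_\ell\in\mathcal{P}_s(C)$. By \Cref{lem:split} these paths are pairwise internally vertex-disjoint, so in the planar embedding they split the closed disk $D$ bounded by $C$ into closed sub-disks $D_1,\ldots,D_\ell$, with $D_i$ bounded by $C_i=P_{i-1}\cup P_i$. A further consequence of \Cref{lem:split} (combined with inclusion-maximality in the definition of $\mathcal{P}_s(C)$) is that $\mathcal{P}_s(C)$ contains \emph{all} shortest $s$-$t$-paths in $G_C$, not merely a maximal internally disjoint subset; I will use this completeness repeatedly.

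The first case to handle is when $V(S)\cap V(P_j)\subseteq\{s,t\}$ for every $j$, i.e., $S$ avoids the relative interior of every $P_j$. If neither pole lies in $V(S)$, then $V(S)$ lies in $\bigcup_i\operatorname{int}(D_i)$ and by connectedness $S$ sits inside a single $\operatorname{int}(D_i)$, so $S\leq C_i$. If exactly one pole, say $s$, is in $V(S)$, the same argument applied to $S-s$ places $S-s$ in a single $\operatorname{int}(D_i)$, and since $s\in\partial D_i$ for every $i$ we get $S\leq C_i$. If both $s,t\in V(S)$, then the two $s$-$t$-arcs of $S$ each have length at least $\operatorname{dist}_G(s,t)=g(G)/2$ and sum to $g(G)$, hence each has length exactly $g(G)/2$; by completeness both lie in $\mathcal{P}_s(C)$, giving $S=P\cup Q$.

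The second case is that $V(S)$ contains a vertex $v$ interior to some $P_j$. I would apply \Cref{lem:trivial} to $S$ and the shortest cycle $C_i$ containing $P_j$ (so $i\in\{j,j+1\}$; take the only possible choice when $j\in\{0,\ell\}$). Either $S=C_i$, so $S\leq C_i$ trivially; or $S$ and $C_i$ touch along a path $R\ni v$, in which case the complementary arc of $S$ must lie in $D_i$ (otherwise splicing it with an arc of $C_i$ between the endpoints of $R$ would yield either a shortest $s$-$t$-path in $G_C$ not in $\mathcal{P}_s(C)$, violating completeness, or a second pair of poles for $C$, violating \Cref{lem:split}), giving $S\leq C_i$; or $V(S\cap C_i)=\{s',t'\}$ with four internally disjoint shortest $s'$-$t'$-paths, two on $C_i$ and two inside $S$. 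In this last sub-sub-case $v=s'$, and a length computation on $C_i$ shows $t'$ must be interior to the other path of $C_i$; splicing a subpath of $P_j$ from $s$ to $v$, one of the two $s'$-$t'$-paths from $S$, and a subpath of the other path of $C_i$ from $t'$ back to $s$ produces an auxiliary shortest cycle $T\subseteq G_C$ passing through $s$ but not $t$. A final application of \Cref{lem:trivial} to $T$ and $C$ together with the uniqueness of the poles of $C$ from \Cref{lem:split} then yields a contradiction, ruling out this sub-sub-case.

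The main obstacle I expect is the last sub-sub-case: constructing $T$ and showing its interaction with $C$ produces a new pair of poles on $C$ distinct from $\{s,t\}$ demands careful planar bookkeeping, and the edge cases when $j\in\{0,\ell\}$ (so $v$ lies on $C$) or when $t'$ lies on a different part of $C_i$ will need small but analogous modifications.
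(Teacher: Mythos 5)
Your overall decomposition differs from the paper's. The paper assumes $S\not\leq R$ for all $R\in\cC_s(C)$, supposes $S$ crosses some $R\in\cC_s(C)$, invokes \Cref{lem:trivial} once, and splits on the three possibilities for how $\{s',t'\}$ meets $\{s,t\}$, each dispatched in a couple of lines (the $\{s',t'\}=\{s,t\}$ case contradicts maximality of $\cP_s(C)$; the one-common-pole case contradicts the equal $g/2$ arc lengths; the disjoint case places $s$ and $t$ in different faces of $S$, impossible since both lie on the outer boundary of $G_C$ and $S$ avoids them). Then it concludes that if $S$ crosses no $C_i$ and $S\not\leq C_i$, $S$ must be a union of two paths from $\cP_s(C)$. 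Your observation that $\cP_s(C)$ is the set of \emph{all} shortest $s$-$t$-paths in $G_C$ is correct and is in fact the same maximality the paper relies on, and your Case~1 is fine. But the route you take in Case~2 is both longer and has a genuine error.

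The error is in Case~2(b). You claim that if $S$ and $C_i$ touch along a path $R\ni v$, the complementary arc of $S$ must lie in $D_i$, so $S\leq C_i$. This is false. Take $\ell\geq 3$ and $S=P_0\cup P_2$; this is a legitimate shortest cycle in $G_C$ with an internal vertex of $P_0$, so it lands in your Case~2 with $j=0$, $C_i=C_1=P_0\cup P_1$. Here $S\cap C_1=P_0$, a path, so $S$ and $C_1$ touch, yet the complementary arc $P_2$ lies outside $D_1$ (it is the wall between $D_2$ and $D_3$). Your parenthetical contradiction never materializes: splicing $P_2$ with the arcs of $C_1$ between the endpoints of $R$ (which are $s,t$) yields $S$ itself or $P_1\cup P_2=C_2\in\cC_s(C)$, neither a new shortest $s$-$t$-path nor new poles. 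The correct conclusion in this situation is the \emph{first} alternative of the lemma ($S=P_0\cup P_2$), not $S\leq C_i$. So your Case~2(b) argument proves a false statement and, more importantly, fails to notice that when the arc $R=S\cap C_i$ runs between the two poles, the complementary arc is itself a shortest $s$-$t$-path and belongs to $\cP_s(C)$ by the very completeness you set up. Case~2 needs a third branch that lands in ``$S=P\cup Q$''.

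Case~2(c) you explicitly flag as unfinished, and the auxiliary-cycle $T$ construction you sketch there is more machinery than the paper needs: the paper gets its contradiction just by observing that $s$ and $t$ would sit in different faces of $S$, while both must lie outside $S$ because they are on the outer boundary of $G_C$ and are not on $S$. I would recommend abandoning the ``touch vs.\ cross'' case analysis against a single $C_i$ and instead assume $S\not\leq R$ for every $R\in\cC_s(C)$ and is not a union of two $P_j$'s, deduce that $S$ must have an edge strictly in the interior of some $D_i$ and an edge outside $D_i$, conclude it crosses $C_i$, and apply \Cref{lem:trivial} as the paper does.
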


\begin{proof} 
Let $S$ be a shortest cycle in $G_C$ and assume that $S\not\leq R$ for any $R\in \cC_s(C)$. 
Suppose that there is $R\in\cC_s(C)$ such that $S$ and $R$ cross. Then by \Cref{lem:trivial}, 
$V(S\cap R)=\{s',t'\}$ for distinct~$s'$ and $t'$ at distance $g(G)/2$ in both cycles, and 
$S=P_1\cup P_2$ and $R=Q_1\cup Q_2$ where~$P_1$,~$P_2$,~$Q_1$, and~$Q_2$ are distinct internally vertex-disjoint $s'$-$t'$-paths of length $g(G)/2$. Because $S$ and $R$ cross, we can assume without loss of generality that the edges and internal vertices of $P_1$ are embedded in the internal face of~$R$ and  the edges and internal vertices of~$P_2$ are in the external face of $R$. If $\{s',t'\}=\{s,t\}$ then we have that $P_1$ is an $s$-$t$-path in~$G_C$ internally vertex disjoint with the $s$-$t$-paths of $\cP_s(C)$ forming $R$. Then $P_1$ is internally vertex disjoint with every path of $\cP_s(C)$. However, this would contradict the maximality of~$\cP_s(C)$. Thus, $\{s',t'\}\neq\{s,t\}$. Moreover, note that $\{s',t'\}\cap \{s,t\}=\emptyset$ as otherwise, we would contradict the fact that~$Q_1$ and~$Q_2$ are paths of length $g(G)/2$. But now, we have that $s$ and  $t$ are embedded in distinct faces of~$S$. This is impossible because $s,t\in V(C)$ and $C$ is the frontier of the external face of~$G_C$ (recall that $S$ cannot contain $s$ or $t$ as it already contains $s'$ and $t'$). This contradiction shows that~$S$ and $R$ do not cross for any $R\in \cC_s(C)$. 

Since $S$ and $R$ do not cross for any $R\in \cC_s(C)$, the definition of~$\cC_s(C)$ implies that~${S=P\cup Q}$ for two paths $P,Q\in\cP_s(C)$. This concludes the proof.
\end{proof}

Now we deal with unsplittable shortest cycles. Consider an unsplittable shortest cycle $C$ in a weighted plane graph $G$. We define $\cC_u(C)$ to be the set of all shortest cycles of~$G_C$ distinct from~$C$, that are maximal cycles of this type with respect to $(\leq)$ (see~\Cref{fig:decomp}(b)). We use the following property of~$\cC_u(C)$.

\begin{lemma}\label{lem:unsplit-decomp}
Let $C$ be an unsplittable shortest cycle in a weighted plane graph $G$. Then $\cC_u(C)$ is laminar and for any shortest cycle $S\neq C$ in $G_C$, $S\leq R$ for some cycle $R\in \cC_u(C)$.
\end{lemma}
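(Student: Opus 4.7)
The plan is to prove the two statements separately, each by exploiting the unsplittability of $C$ together with \Cref{lem:trivial}.

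For the laminarity of $\cC_u(C)$, I would argue by contradiction. Suppose $R_1, R_2 \in \cC_u(C)$ cross. By \Cref{lem:trivial}, their intersection consists of exactly two vertices $\{s,t\}$ lying at distance $g(G)/2$ in each cycle, and there exist four pairwise internally vertex-disjoint shortest $s$-$t$-paths $P_1, P_2, Q_1, Q_2$, each of length $g(G)/2$, with $R_1 = P_1 \cup P_2$ and $R_2 = Q_1 \cup Q_2$. The next step is to inspect the plane embedding of $H = R_1 \cup R_2$: since $R_1$ and $R_2$ cross, the cyclic order of $P_1, Q_1, P_2, Q_2$ around $s$ must alternate between paths of $R_1$ and paths of $R_2$, which implies that each face of $H$ is bounded by one path from $R_1$ and one from $R_2$. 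In particular, the outer face of $H$ is bounded by a cycle $S$ consisting of one path from $R_1$ and one from $R_2$, so $S \neq R_1$ and $S \neq R_2$, while by construction $R_1 \leq S$ and $R_2 \leq S$. Note that $S$ is itself a shortest cycle since it is the union of two paths of length $g(G)/2$.

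The crux is to rule out $S = C$. If $S$ were equal to $C$, then $\{s,t\} \subseteq V(C)$, and the two paths among $\{P_1, P_2, Q_1, Q_2\}$ not forming $S$ would be $s$-$t$-paths of length $g(G)/2$ in $G_C$ distinct from the two $s$-$t$-paths of $C$; this would exhibit $s,t$ as a pair of poles on $C$, contradicting the unsplittability of $C$. Thus $S$ is a shortest cycle in $G_C$ with $S \neq C$ and $R_1 \leq S$, $R_1 \neq S$, contradicting the maximality of $R_1$ in $\cC_u(C)$. Hence no two members of $\cC_u(C)$ cross and the family is laminar.

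For the second claim, given any shortest cycle $S \neq C$ in $G_C$, the set of shortest cycles of $G_C$ distinct from $C$ is finite and partially ordered by $(\leq)$, so starting from $S$ one can iteratively replace the current cycle by a strictly larger shortest cycle (distinct from $C$) as long as maximality fails. This chain must terminate at some maximal element $R$ with $S \leq R$, and by definition $R \in \cC_u(C)$.

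The main obstacle I expect is the step $S \neq C$ in the crossing analysis: without it, the argument collapses, and it is precisely the unsplittability of $C$ that rules out this case. Once this point is handled, the rest of the proof reduces to the routine face-structure analysis of a planar multigraph with four parallel paths and the finiteness of the poset of shortest cycles in $G_C$.
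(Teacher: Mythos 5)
Your proof is correct and takes essentially the same approach as the paper: apply \Cref{lem:trivial} to decompose the crossing pair $R_1, R_2$ into four internally vertex-disjoint $s$-$t$-paths of length $g(G)/2$, exhibit a shortest cycle $S$ (one path from $R_1$ and one from $R_2$) with $R_1 \le S$ and $R_2 \le S$, and then contradict maximality if $S \neq C$ or unsplittability if $S = C$; the second claim is the same routine maximality observation for the finite poset of shortest cycles in $G_C$ distinct from $C$. Your ``outer face of $H$'' framing cleanly and unambiguously identifies $S$ as the cycle formed by the two paths lying outside the other cycle, which under the paper's stated WLOG would be $P_2 \cup Q_2$ (the paper writes $P_1 \cup Q_2$, which appears to be a typo, as $P_1 \subseteq \mathrm{int}(R_2)$ so $R_1 \not\le P_1 \cup Q_2$).
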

\begin{proof}
To show that $\cC_u(C)$ is laminar, consider distinct $R_1,R_2\in\cC_u(C)$ and assume that~$R_1$ and~$R_2$ cross. By \Cref{lem:trivial}, $V(R_1\cap R_2)=\{s',t'\}$ for distinct $s'$ and $t'$ at distance $g(G)/2$ in both cycles, and~$R_1=P_1\cup P_2$ and~$R_2=Q_1\cup Q_2$ where $P_1$, $P_2$, $Q_1$, and $Q_2$ are distinct internally vertex-disjoint~$s'$-$t'$-paths of length $g(G)/2$. Because~$R_1$ and $R_2$ cross, we can assume without loss of generality that the edges and internal vertices of $Q_1$ are embedded in the internal face of~$R_1$ and the edges and internal vertices of $P_2$ are in the external face of $R_2$. Consider~${R=P_1\cup Q_2}$. We have that $R_1\leq R$ and $R_2\leq R$. Recall that~$R_1$ and $R_2$ are maximal with respect to $(\leq)$ shortest cycles distinct from $C$. Since $R\notin \cC_u(C)$, $R=C$. However, we have that~$s,t\in V(C)$ and $G_C$ has four internally vertex-disjoint $s$-$t$-paths of length~$g(G)/2$. This means that $C$ is splittable and contradicts our assumptions. Thus, $\cC_u(C)$ is laminar.

To see the second claim, it is sufficient to observe that already by the definition of $\cC_u(C)$,
for any shortest $S\neq C$ in $G_C$, $S\leq R$ for some maximal (with respect to $(\leq)$) shortest cycle of~$G_C$ that is distinct from $C$. This concludes the proof.
\end{proof}

Now we are ready to construct a tree representing a laminar family of shortest cycles.

Let $G$ be a clean planar weighted graph. Consider an arbitrary embedding of $G$ in the plane. 
By \Cref{lem:facial}, there is a face $f$ in this embedding whose frontier is a shortest cycle $C$. Then, there is another planar embedding of $G$ such that $C$ is a facial cycle of the external face of the embedded graph. We fix this embedding and construct the rooted tree $\cT(G)$, called the \emph{Laminar Shortest Cycle Tree} (LSCT), whose nodes are shortest cycles. 

We start constructing $\cT(G)$ from the facial cycle $C$ of the external face which is set to be the root. Then, we construct $\cT(G)$ by the recursive analysis of shortest cycles $C$ that are current leaves of the already constructed tree.
\begin{itemize}
\item If $C$ is a facial cycle for an internal face then we set $C$ to be a leaf of $\cT(G)$.
\item If $C$ a splittable cycle then we construct $\cC_s(C)$ and set the cycles of $\cC_s(C)$ to be the children of $C$ in $\cT(G)$; we say that $C$ is an $S$-node in this case.
\item If $C$ is an unsplittable cycle then we construct $\cC_u(C)$ and set the cycles of $\cC_u(C)$ to be the children of $C$ in $\cT(G)$; we say that $C$ is a $U$-node.
\end{itemize}

The properties of
LSCTs are summarized in the following lemma.

\begin{lemma}\label{lem:tree}
Suppose that the LSCT $\cT(G)$ is constructed for a clean planar weighted graph $G$. Then the nodes of $\cT(G)$ form a laminar family of shortest cycles such that 
\begin{itemize}
\item[(i)] for any two nodes $C$ and $C'$ of $\cT(G)$, $C\leq C'$ in the planar embedding used in the construction of $\cT(G)$ if and only if $C$ is a descendant of $C'$ in $\cT(G)$,
\item[(ii)] for any shortest cycle $C$ of $G$, either $C$ is a node of $\cT(G)$ or there is an $S$-node $R$ of~$\cT(G)$ and two paths $P,Q\in \cP_s(R)$ such that $C=P\cup Q$,
\item[(iii)] the facial shortest cycles of $G$ distinct from the root of $\cT(G)$ are the leaves of $\cT(G)$.
\end{itemize}
Furthermore, the LSCT
can be constructed in polynomial time.
\end{lemma}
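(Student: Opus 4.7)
The plan is to establish all four assertions in tandem by induction on the tree construction, with termination and well-definedness as prerequisites. Starting point: by \Cref{lem:facial}, I can re-embed $G$ so that a shortest facial cycle bounds the outer face, which supplies the root. At each $S$-node the pair of poles is unique by \Cref{lem:split}, so both $\cP_s(C)$ and $\cC_s(C)$ are canonical; at each $U$-node the set $\cC_u(C)$ is canonical by definition. Every child is strictly enclosed in its parent with respect to $(\leq)$, so the recursion terminates at facial cycles, yielding a finite tree.

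For property (i), I would proceed by induction on processing order, maintaining that the already-built portion of $\cT(G)$ is laminar and that $(\leq)$ matches tree ancestry on it. When the children of a node $C$ are added, they must first be shown pairwise non-crossing: for a $U$-node this is exactly \Cref{lem:unsplit-decomp}, and for an $S$-node the children $C_i = P_{i-1} \cup P_i$ occupy pairwise interior-disjoint ``slices'' between consecutive paths of the fan $\cP_s(C)$, hence cannot cross. Second, each child lies inside $C$ by construction, and any previously-built cycle outside $C$ cannot cross a subcycle of $C$ without also crossing $C$, so global laminarity is preserved. The forward direction of the order/ancestry equivalence is by construction; the backward direction follows by routing a nested cycle through the unique deepest ancestor that contains it.

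Property (ii) I would derive directly from \Cref{lem:split-decomp} and \Cref{lem:unsplit-decomp}. For an arbitrary shortest cycle $S$ of $G$, since the root encloses all of $G$, the set of tree nodes $R$ with $S \leq R$ is nonempty; pick the deepest such $R$. If $S = R$ we are done, so assume $S \neq R$. If $R$ is a $U$-node, \Cref{lem:unsplit-decomp} forces $S \leq R'$ for some $R' \in \cC_u(R)$, contradicting the deepest choice. If $R$ is an $S$-node, \Cref{lem:split-decomp} gives either $S \leq R' \in \cC_s(R)$ (same contradiction) or $S = P \cup Q$ for two paths $P,Q \in \cP_s(R)$, which is precisely the alternative claimed in (ii). Property (iii) can be read off the construction: facial cycles are exactly the terminating case, and both $S$- and $U$-node branches always create at least one child (by cleanness, every non-facial cycle contains an enclosed edge or vertex that lies on a shortest subcycle). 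Polynomial-time construction follows because $g(G)$, $G_C$, splittability testing with pole identification, the fan $\cP_s(C)$ (built by iteratively extracting internally vertex-disjoint shortest $s$-$t$-paths of length $g(G)/2$ in $G_C$), and $\cC_u(C)$ (built by iteratively locating maximal shortest cycles inside $C$) are all polynomial-time subroutines, and the tree has polynomially many nodes since each recursion strictly shrinks the enclosed vertex/edge count.

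The step I expect to require the most care is property (i), in particular verifying that children of an $S$-node are non-crossing despite the shared paths along the fan, and that these children do not cross cycles produced at ancestral $U$-nodes in some configuration I have not anticipated. The key lever here will be the uniqueness of poles from \Cref{lem:split} together with the cyclic ordering of $\cP_s(C)$ around $s$, which should pin the slices down unambiguously and rule out any crossing with ancestral material.
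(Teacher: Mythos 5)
Your proposal is correct and follows essentially the same approach as the paper's proof, which simply asserts that (i)--(iii) ``follow directly from'' \Cref{lem:split-decomp}, \Cref{lem:unsplit-decomp}, and the construction, and then focuses on the polynomial-time argument (planarity testing, Dijkstra for splittability, Karger--Stein for enumerating shortest cycles in the dual). You fill in the justifications the paper leaves implicit — the induction for laminarity, the deepest-ancestor argument for~(ii), and the termination argument — all of which are valid, so this is an elaboration rather than a different route.
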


\begin{proof}
Properties (i)--(iii) follow directly from \Cref{lem:split-decomp}, \Cref{lem:unsplit-decomp}, and the construction of $\cT(G)$. To prove that $\cT(G)$ can be constructed in polynomial time, recall that a planar embedding $G$ can be found in linear time~\cite{HopcroftT74}. Then we can consider all the faces and find a face~$f$ whose frontier is a shortest cycle $C$. This allows us to construct an embedding of~$G$ where the frontier of the external face is a shortest cycle, in polynomial time. Then, following the steps of the algorithm, for each $C$ that is a current leaf of the already constructed tree, we verify whether $C$ is splittable or not. This can be done in polynomial time using Dijkstra's algorithm~\cite{Dijkstra59}. If~$C$~is splittable, then we construct the unique family of paths~$\cP_s(C)$ by greedily finding shortest $s$-$t$-paths. This allows us to construct $\cC_s(C)$ in polynomial time. If~$C$~is unsplittable, then we can, for example, list all shortest cycles in $G_C$ using the well-known fact that the number of such cycles is quadratic in the number of vertices and they can be enumerated in polynomial time (e.g., by using the algorithm of Karger and Stein~\cite{KargerS96} for the enumeration of all minimum cuts in the dual graph). Then, we can find all maximal shortest cycles with respect to ($\leq$) distinct from $C$ in $G_C$ and obtain $\cC_u(C)$ in polynomial time. Summarizing, we conclude that the overall running time is polynomial. This concludes the proof.   
\end{proof}

We conclude this section with some structural observations about solutions of \dsc and \dsce on planar graphs. Let $G$ be a clean weighted planar graph and let $k$ be a positive integer. Recall that we assume that~$G$ has a fixed planar embedding with the frontier of the external face being a shortest cycle. For LSCT $\cT(G)$, we define special cycle packings.
\begin{definition}
A packing $\cC=\{C_1,\ldots,C_k\}$ of $k$ (vertex/edge)-disjoint shortest cycles is \emph{$\cT$-maximal} if
\begin{itemize}
\item[(i)] the number of facial cycles of internal faces in the packing is maximum,
\item[(ii)] there is no packing $\cC'=\{C_1',\ldots,C_k'\}$ distinct from $\cC$ such that $C_i'\leq C_i$ for all $i\in[k]$.
\end{itemize}
\end{definition}

Our algorithm uses the following property. 

\begin{lemma}\label{lem:T-max}
Suppose that the LSCT $\cT(G)$ is constructed for a clean weighted planar graph~$G$ and assume that $G$ 
is embedded in the plane according to the construction of $\cT$.
Let~$\cC$ be a $\cT$-maximal packing of $k$ (vertex/edge)-disjoint shortest cycles. Then, $\cC$ is laminar and for each~$C\in\cC$, either $C$ is a facial cycle of an internal face or there is a~$C'\in \cC$ such that~$C'\leq C$.
\end{lemma}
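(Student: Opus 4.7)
The plan is to prove both parts by contradicting condition~(ii) of $\cT$-maximality: in each hypothetical violation, I will construct a packing $\cC^*\ne\cC$ with $C_i^*\le C_i$ for all $i$.

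For laminarity, the vertex-disjoint case is immediate from the Jordan curve theorem: two vertex-disjoint cycles in the fixed planar embedding must lie one inside the other or in disjoint regions, and therefore cannot cross. For the edge-disjoint case, suppose $C_1,C_2\in\cC$ cross. The same Jordan curve argument rules out crossings between vertex-disjoint cycles, so $C_1$ and $C_2$ share at least one vertex, and Lemma~\ref{lem:trivial} applies. The touching alternative, under edge-disjointness, forces $C_1\cap C_2$ to be a single vertex $v$; but producing a crossing then requires the four edges at $v$ to interleave in the rotation, which in turn forces the path $C_2-v$ to connect two points on opposite sides of the Jordan curve $C_1$ without meeting $C_1$ (since $V(C_2-v)\cap V(C_1)=\emptyset$), contradicting planarity. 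Hence we are in the second alternative of Lemma~\ref{lem:trivial}: $V(C_1\cap C_2)=\{s,t\}$ with $C_1=P_1\cup P_2$ and $C_2=Q_1\cup Q_2$ decomposing into four internally vertex-disjoint $s$-$t$-paths of length $g(G)/2$, and the crossing assumption lets me take $P_1$ in the internal face of $C_2$ and $Q_1$ in the internal face of $C_1$. Then $C_1^*:=Q_1\cup P_2$ and $C_2^*:=P_1\cup Q_2$ are shortest cycles with $C_1^*\le C_1$, $C_2^*\le C_2$, are edge-disjoint (their edges partition $E(C_1)\cup E(C_2)$), and remain edge-disjoint from every other cycle in $\cC$ (which was already edge-disjoint from $E(C_1)\cup E(C_2)$). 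Thus $(\cC\setminus\{C_1,C_2\})\cup\{C_1^*,C_2^*\}$ is an edge-disjoint packing distinct from $\cC$, contradicting (ii).

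For the containment claim, fix $C\in\cC$ that is not an internal facial cycle and assume for contradiction that no $C'\in\cC\setminus\{C\}$ satisfies $C'\le C$. I would first exhibit a shortest cycle $S\ne C$ with $S\le C$ using Lemma~\ref{lem:tree}. If $C$ is a node of $\cT(G)$, then combining Lemma~\ref{lem:tree}(iii) with the construction of $\cT(G)$ shows that the leaves of $\cT(G)$ are exactly the internal facial shortest cycles, where Lemma~\ref{lem:facial} takes care of the corner case in which the root is itself a leaf (and is then also internal facial); since $C$ is not internal facial, $C$ is not a leaf and therefore has a child $S$ in $\cT(G)$ with $S\le C$, $S\ne C$. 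If $C$ is not a node of $\cT(G)$, then Lemma~\ref{lem:tree}(ii) writes $C=P_i\cup P_j$ for two paths of $\cP_s(R)$ at some $S$-node $R$, and these indices must satisfy $j-i\ge 2$ (otherwise $C$ would be a child of $R$ in $\cT(G)$), so $S:=P_i\cup P_{i+1}\in\cC_s(R)$ is a shortest cycle with $S\le C$ and $S\ne C$. It remains to verify that substituting $S$ for $C$ preserves disjointness. By the first claim together with the failure of containment, every cycle of $\cC\setminus\{C\}$ lies in the closed exterior of $C$ (meeting $C$, if at all, only at a single shared vertex in the edge-disjoint case), while $S$ lies in the closed interior of $C$; since interior and exterior of the Jordan curve $C$ are disjoint, $S$ is vertex-disjoint (resp.\ edge-disjoint) from every cycle of $\cC\setminus\{C\}$. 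Hence $(\cC\setminus\{C\})\cup\{S\}$ is a distinct valid packing with $C_i^*\le C_i$ for all $i$, contradicting (ii).

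The main technical subtlety will be ruling out the edge-disjoint single-vertex touching configuration in the laminarity argument, which requires a careful appeal to the planar embedding via the Jordan curve theorem; the remainder of the proof is a straightforward case analysis driven by the LSCT structure.
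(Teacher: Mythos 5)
Your proof is correct but takes a genuinely different route for the containment claim, and is more explicit than the paper in the laminarity claim. For laminarity, your surgery cycles $C_1^* = Q_1\cup P_2$ and $C_2^*=P_1\cup Q_2$ coincide with the paper's $P_1\cup Q_1$ and $P_2\cup Q_2$ after swapping the roles of $P_1$ and $P_2$; both arguments derive a contradiction with condition (ii) of $\cT$-maximality. Where you go beyond the paper is in explicitly ruling out, via the Jordan curve theorem and the rotation at a shared vertex, the possibility that two crossing edge-disjoint cycles merely touch at a single vertex — the paper jumps straight to the two-vertex alternative of Lemma~\ref{lem:trivial}, leaving that step implicit (it is valid: a touching intersection is connected, so the rest of the second cycle stays on one side of the first). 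For the containment claim, the paper applies Lemma~\ref{lem:facial} to $G_{C_i}$ to produce an \emph{internal facial} shortest cycle $C_i'<C_i$ and then contradicts condition (i), since the facial count goes up; this silently relies on $G_{C_i}$ being clean, a fact the paper does not verify on the spot. You instead read a proper descendant $S\leq C$, $S\neq C$, directly off the LSCT $\cT(G)$ (a child if $C$ is a node, or a sibling of the form $P_i\cup P_{i+1}$ if $C=P_i\cup P_j$ is a non-tree cycle of an $S$-node), and contradict condition (ii). This has the advantage of invoking Lemma~\ref{lem:tree} only on the clean graph $G$ itself, avoiding the cleanliness check for $G_{C_i}$ altogether, at the cost of a short case analysis on $C$'s position in the tree. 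Both routes are sound, and your disjointness argument for the substitution $S$ for $C$ (using laminarity plus the failure of containment to locate every other cycle in the closed exterior of $C$) is correct in both the vertex- and edge-disjoint settings.
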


\begin{proof}
If $\cC=\{C_1,\ldots,C_k\}$ is a packing of vertex-disjoint cycles, then $\cC$ is trivially laminar. Suppose that~$\cC$ is a packing of edge-disjoint cycles and assume that there are distinct $i,j\in[k]$ such that $C_i$ and~$C_j$ cross. Then by \Cref{lem:trivial}, $V(C_i\cap C_j)=\{s,t\}$ for distinct $s$ and $t$ at distance $g(G)/2$ in both cycles, and~$C_i=P_1\cup P_2$ and $C_j=Q_1\cup Q_2$ where $P_1$, $P_2$, $Q_1$, and~$Q_2$ are distinct internally vertex-disjoint~$s$-$t$-paths of length $g(G)/2$. Since the cycles cross, we can assume without loss of generality that the edges and internal vertices of $Q_1$ are in the internal face of $C_i$ and the edges and internal vertices of $P_2$ are in the internal face of~$C_j$.
Let~$C_i'=P_1\cup Q_1$ and $C_j'=P_2\cup Q_2$. Notice that~$C_i$ and~$C_j$ are not facial cycles and it holds that~$C_i'\leq C_i$ and~$C_j'\leq C_j$.
Consider~$\cC'$ obtained by the replacement of~$C_i$ and~$C_j$ with~$C_i'$ and~$C_j'$, respectively. We have that~$\cC'\neq \cC$ is a packing of $k$ edge-disjoint shortest cycles such that the number of facial cycles is at most as large as in~$\cC$. Moreover, because~$C_i'\leq C_i$ and~$C_j'\leq C_j$, the existence of~$\cC'$ contradicts that~$\cC$ is a $\cT$-maximal packing. Thus, $\cC$ is laminar. 

To prove the second claim, suppose that $C_i\in\cC$ is not a facial cycle of an internal face of $G$ for some~$i\in[k]$ and there is no $j\in[k]$ distinct from $i$ such that $C_j\leq C_i$. Consider $G_{C_i}$. Then by \Cref{lem:facial}, there is an internal face $f$ of $G_{C_i}$ whose frontier is a shortest cycle. Denote by~$C_i'$ such a cycle. We have that $C_i'\neq C_i$ and $C_i'\leq C_i$. Furthermore, because $\cC$ is laminar by the already proved first claim, $\cC'$ obtained from $\cC$ by the replacement of $C_i$ with $C_i'$ is a packing of disjoint cycles. However, the number of facial cycles of internal faces in $\cC'$ is bigger than the number of such cycles in $\cC$. This contradicts the assumption that $\cC$ is $\cT$-maximal. This proves the second claim and completes the proof of the lemma. 
\end{proof}

\subsection{FPT algorithm for \dsc}\label{sec:dsc-FPT}
In this subsection, we prove~\Cref{thm:mincycles-vd} which we restate here.

\thmmincyclesvd*

\begin{proof}
Let $(G,w,k)$ be an instance of \dsc where $G$ is a planar graph. If~$G$~is a forest, then we have a trivial no-instance.  Thus, we assume that this is not the case. 
Then, we preprocess~$G$ and delete every edge and every vertex that is not included in a shortest cycle as these edges and vertices are irrelevant for our problem. From now on, we assume that~$G$~is clean.

Next, we construct a planar embedding of $G$ such that the frontier of the external face is a shortest cycle. From this point, we assume that $G$ is a plane graph. We then construct the LSCT~$\cT(G)$.
Our aim is to find a $\cT$-maximal solution by using the properties given by \Cref{lem:T-max}. 

By \Cref{lem:tree} (ii), for any shortest cycle $C$ of $G$, either  
$C$ is a node of $\cT(G)$ or there is an $S$-node~$R$ of $\cT(G)$ and two paths $P,Q\in \cP_s(R)$ such that $R=P\cup Q$ where $C\neq P\cup Q$ and~$P\cup Q\notin \cC_s(R)$. If~$C$ is a shortest cycle of the second type, then we say that $C$ is a \emph{non-tree} cycle.

In the following, we wish to achieve the property that each cycle of a solution is a node of the tree. Since this may not always be the case, we ensure this property by modifying the tree at selected nodes. 

Let $C$ be an $S$-node of $\cT(G)$. We say that $C$ is \emph{large} if $|\cC_s(C)|\geq 3k+3$ and $C$ is \emph{small}, otherwise. We prove that for large $S$-nodes, we already have the desired property. 

\begin{claim}\label{cl:large}
Let $\cC$ be a $\cT$-maximal solution to $(G,w,k)$. Then for any large $S$-node~$C$ of~$\cT(G)$, if $\cC$ contains a cycle $R=P\cup Q$ for two distinct paths $P,Q\in\cP_s(C)$ 
then either $R=C$ or~$R\in\cC_s(C)$.
\end{claim}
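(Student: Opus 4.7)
My plan is a proof by contradiction through condition (i) of $\cT$-maximality. Suppose $R = P_i \cup P_j \in \cC$ with $j - i \geq 2$ and $(i,j) \neq (0,\ell)$; I will construct a $k$-packing $\cC'$ of disjoint shortest cycles containing strictly more facial cycles of internal faces than $\cC$, contradicting the fact that $\cC$ maximizes this count. Two preliminary structural observations come first. Every cycle of the form $P_a \cup P_b$ with $P_a, P_b \in \cP_s(C)$ contains both poles $s, t$, so by pairwise vertex-disjointness $R$ is the unique such cycle in $\cC$, and in particular no $C_m \in \cC_s(C)$ belongs to $\cC$. Second, by \Cref{lem:split-decomp} every other $C'' \in \cC \setminus \{R\}$ inside $C$ is strictly $\leq C_m$ for a unique $m \in [\ell]$. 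Collecting these indices into $I \subseteq [\ell]$ I then have $|I| \leq k - 1$.

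The central quantitative step uses the largeness $\ell \geq 3k + 3$. Each element of $I$ invalidates at most three values of $m$ (namely $m-1, m, m+1$), so the number of good $m \in [\ell]$ satisfying $\{m-1,m,m+1\} \cap I = \emptyset$ is at least $\ell - 3|I| \geq (3k+3) - 3(k-1) = 6 > 0$. Fix such a good $m$. I will then verify that any shortest cycle $F$ with $F \leq C_m$ is vertex-disjoint from every $C'' \in \cC \setminus \{R\}$: cycles outside $C$ are trivially disjoint, and for $C'' \leq C_{m''}$ with $m'' \in I$ the condition $m'' \notin \{m-1,m,m+1\}$ forces internal vertices of $P_{m''-1}, P_{m''}$ onto boundaries of regions $D_{m''-1}, D_{m''}, D_{m''+1}$ disjoint from $D_m$ and from the paths $P_{m-1}, P_m$ that form $C_m$, while $s, t \in V(R)$ combined with $V(C'') \cap V(R) = \emptyset$ rules out the poles. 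Together these give $V(C'') \cap (V(C_m) \cup D_m^\circ) = \emptyset$, so $V(C'') \cap V(F) = \emptyset$ whenever $F \leq C_m$.

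Now let $F$ be any leaf of the subtree of $\cT(G)$ rooted at $C_m$; by \Cref{lem:tree}(iii) this $F$ is a facial shortest cycle of an internal face with $F \leq C_m$, and $F \notin \cC$ (if $F = C_m$ then $C_m \notin \cC$ by the first observation; if $F < C_m$ strictly and $F \in \cC$ then $m \in I$, contradicting the choice of $m$). Setting $\cC' = (\cC \setminus \{R\}) \cup \{F\}$ therefore produces a valid $k$-packing of pairwise vertex-disjoint shortest cycles. Since $j - i \geq 2$, the path $P_{i+1}$ lies strictly inside $R$, so $R$ is not facial; but $F$ is facial of an internal face. Thus $\cC'$ contains strictly more facial internal cycles than $\cC$, contradicting condition (i). The main delicate step will be the vertex-chasing disjointness argument, where I must exploit both the pairwise disjointness of the open regions $D_{m'}^\circ$ and the fact that the poles $s, t$ are trapped inside $V(R)$; the rest is straightforward bookkeeping with the LSCT structure.
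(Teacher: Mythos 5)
Your proof follows essentially the same approach as the paper: assume by contradiction that the $\cT$-maximal solution $\cC$ contains $R=P_i\cup P_j$ with $j-i\geq 2$ and $(i,j)\neq(0,\ell)$; observe that $R$ is the unique cycle in $\cC$ through the poles and that the remaining cycles inside $C$ each sit below a unique $C_{m''}$; use the largeness $\ell\geq 3k+3$ and a pigeonhole to find an index $m$ such that the region of $C_m$ and its two neighbors are uncontaminated; then replace the non-facial cycle $R$ with a facial shortest cycle inside $C_m$ and contradict condition~(i) of $\cT$-maximality. This matches the paper's argument step for step.

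There is, however, one genuine gap. You let the good index $m$ range over all of $[\ell]$ with $\{m-1,m,m+1\}\cap I=\emptyset$, and then assert that cycles of $\cC\setminus\{R\}$ lying outside $C$ are ``trivially disjoint'' from the facial cycle $F\leq C_m$. That claim fails when $m\in\{1,\ell\}$: in those cases $C_m$ has $P_0$ (resp.\ $P_\ell$) on its boundary, so the facial cycle $F$ may use internal vertices of $P_0$ (resp.\ $P_\ell$). A cycle of $\cC\setminus\{R\}$ sitting entirely outside $C$ can touch that same boundary path (it need not be disjoint from $V(C)$, only from the internal face of $C$), and unless $R$ itself contains $P_0$ or $P_\ell$, this touching need not conflict with $R$. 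So $F$ and such a cycle could share vertices. The paper sidesteps this by explicitly restricting the chosen index to $[2,\ell-1]$, ensuring that the boundary of $C_m$ consists of interior paths $P_{m-1},P_m$ which meet $V(C)$ only in the poles (and the poles are excluded since they lie on $R$). This restriction costs two more excluded indices but still leaves at least $\ell-2-3(k-1)\geq 4$ good choices, so the pigeonhole survives. Add this restriction and your argument closes.
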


\begin{claimproof}
Let $C$ be a large $S$-node of $\cT(G)$ with poles $s$ and $t$.
Let ${\cP_s(C)=\{P_0,\ldots,P_\ell\}}$ and~$\cC_s(C)=\{C_1,\ldots,C_\ell\}$. We assume that the paths in $\cP_s(C)$ are ordered in the clockwise order from the perspective of pole $s$ and $C_i=P_{i-1}\cup P_i$ for each $i\in [\ell]$, as shown in \Cref{fig:decomp}(a). Suppose that there is~$R\in \cC$ distinct from $C$ such that $R=P_i\cup P_j$ for $i,j\in[0,\ell]$ with $i<j-1$. Notice that $R$ is not a facial cycle. 

Consider $\cS=\cC\setminus \{R\}$. By the fact that $s,t\in V(R)$, \Cref{lem:tree}, and the laminarity of $\cC$,  we have that for every $S\in \cS$, $s,t\notin V(S)$ and either $S$ has no edges and vertices in the internal face of $C$ or there is $i\in[\ell]$ such that $S\leq C_i$. Because $C$ is large, $\ell\geq 3k+3$. Then by the pigeon hole principle, there is $i\in[2,\ell-1]$ such that there is no $S\in \cS$ with  $S\leq C_{i-1},C_i,C_{i+1}$. This means that $\cS'=\cS\cup\{C_i\}$ is a packing of $k$ vertex-disjoint shortest cycles. By \Cref{lem:facial}, there is a facial shortest cycle $R'$ of an internal face of $G_{C_i}$. As there is no $S\in\cS$ with $S\leq C_i$, we have that 
$\cC'=\cS\cup\{R'\}$ is a packing of $k$ vertex-disjoint shortest cycles. However, $\cC'$ is obtained from $\cC$ by replacing a non-facial cycle $R$ with a facial cycle $R'$. This contradicts the assumption that $\cC$ is a $\cT$-maximal solution and proves the claim.
\end{claimproof}

By \Cref{cl:large}, it remains to deal with non-tree cycles formed by two paths of $\cP(C)$ for small $S$-nodes. 
For this, we apply the \emph{random separation} technique~\cite{CaiCC06,CyganFKLMPPS15}. For simplicity, we give a randomized Monte Carlo procedure and explain how to derandomize this step in the conclusion of the proof of the theorem.

Let $\cP=\bigcup_C \cP_s(C)$ where the union is taken over all small $S$-nodes $C$ of $\cT(G)$, that is, $\cP$ is the family of all paths that may be parts of non-tree cycles in a solution. We color the paths of $\cP$ randomly by two colors \emph{red} and \emph{blue} in such  a way that a path $P\in \cP$ is colored red with probability $\frac{2}{3k+3}$ and~$P$ is colored blue with probability $\frac{3k+1}{3k+3}$. 
Consider a solution $\cC$ such that  
if~$\cC$ contains a non-tree cycle~$R=P\cup Q$ for two distinct paths $P,Q\in\cP_s(C)$ for 
a small $S$-node~$C$ of $\cT(G)$ then (i) $P$ and $Q$ are colored red and (ii) all other paths in $\cP_s(C)$ are blue. We say that such a solution $\cC$ is \emph{colorful}. 

To see the reason behind the coloring, assume that $\cC$ is a $\cT$-maximal solution to the considered instance. Let $R_1,\ldots,R_\ell$ be the non-tree cycles in $\cC$ formed by pairs of paths from~$\cP_s(C_1),\ldots,\cP_s(C_\ell)$ for small $S$-nodes~$C_1,\ldots,C_\ell$ of $\cT(G)$. Let $\cP'=\bigcup_{i=1}^\ell \cP_s(C_i)$. Clearly, $\ell\leq k$. The probability that all~$2\ell$ paths from $\cP'$ forming the non-tree cycles~$R_1,\ldots,R_\ell$ are red is at 
least~$\big(\frac{2}{3k+3}\big)^{2\ell}\geq \big(\frac{2}{3k+3}\big)^{2k}$.
Since $C_i$ is a small $S$-node for~$\cT(G)$, $|\cP_s(C_i)|\leq 3k+3$ for each~$i\in[\ell]$. Hence, the total number of paths in $\cP'$ that are not parts of non-tree cycles is at most~$(3k+1)\ell\leq (3k+1)k$. The probability that all these paths are colored blue is at least
\begin{equation*}
\Big(\frac{3k+1}{3k+3}\Big)^{(3k+1)k}\geq \Big(1-\frac{2}{3k+3}\Big)^{\frac{3k+3}{2}2k}\geq\Big(\frac{8}{27}\Big)^{2k}.
\end{equation*}
Thus, the probability that all paths in $\cP'$ forming non-tree cycles are red and all other paths are blue is at least $\Big(\frac{16}{81(k+1)}\Big)^{2k}$. 
This implies that if we try 
$N=\Big(\frac{81(k+1)}{16}\Big)^{2k}=k^{\OO(k)}$ random colorings, then the probability that the desired property that all paths in $\cP'$ forming non-tree cycles are red and all other paths are blue is not fulfilled for any of the colorings is at most~$\Big(1-\frac{1}{N}\Big)^N\leq\frac{1}{e}$. 

From now on, we assume that a coloring of the paths of $\cP$ is given such that if there is a solution, then a $\cal T$-maximal solution is colorful.
\begin{figure}[t]
\centering
\includegraphics[]{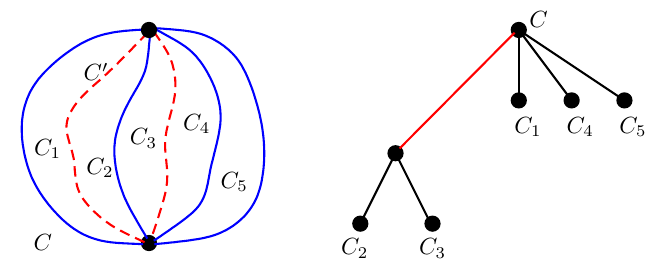}
\caption{Construction of  $\cT^*$. In the tree on the right side of the figure, the parent node of $C_2$ and $C_3$ corresponds to the cycle $C'$ (red dashed lines) on the left.}
\label{fig:T-star}
\end{figure}

We construct the tree $\cT^*$ from $\cT(G)$ by splitting selected small $S$-nodes. For every small $S$-node
$C$ such that there are two red (dashed) paths in~$\cP_s(C)$ forming a non-tree cycle $C'$ and all other paths in~$\cP_s(C)$ are blue (solid), we do the following (see~\Cref{fig:T-star}):
\begin{itemize}
\item create a new node $C'$ and make it a child of $C$,
\item make every cycle $S\in \cC_s(C)$ such that $S\leq C'$ a child of $C'$; the other cycles in $\cC_s(C)$ remain to be children of $C$.
\end{itemize}

Observe that every node of $\cT(G)$ is a node of $\cT^*$ and, by~\Cref{lem:tree} and the construction of $\cT^*$, we have that the  nodes of $\cT^*$ form a laminar family of shortest cycles such that the following two properties hold.

\noindent
{\bf Property (i):} For any two nodes $C$ and $C'$ of $\cT^*$, $C\leq C'$ in the planar embedding of $G$ if and only if~$C$ is a descendant of $C'$ in $\cT^*$. 

\noindent
{\bf Property (ii):} The facial shortest cycles of $G$ distinct from the root of $\cT^*$ are the leaves of~$\cT^*$.

Furthermore, if a colorful $\cT$-maximal solution to $(G,w,k)$ exists, then each cycle of the solution is a node of $\cT^*$. We use these properties to construct a recursive branching algorithm that finds a solution if a colorful $\cT$-maximal solution exists.

The algorithm takes $G$, $k$, and $\cT^*$ as its parameters; these parameters are modified in the recursive calls. 

In the first step, we construct the map graph $M$ with $G$ as its planar embedding whose  vertices are the faces of $G$ with their frontiers being shortest facial cycles. Then we call the algorithm from \Cref{prop:ind-map} to check whether $M$ has an independent set of size $k$. If such a set exists then the frontiers of the faces of $G$ in the independent set compose a packing of $k$ vertex-disjoint shortest cycles. Thus, we can conclude that $(G,w,k)$ is a yes-instance and stop. Assume that this is not the case.

Because there is no packing of $k$ vertex-disjoint shortest cycles that contains only facial cycles, by \Cref{lem:T-max}, we conclude that if there is a colorful $\cT$-maximal solution then any solution contains a non-facial cycle $C$ and a facial cycle~$C'$ such that $C'<_v C$. Moreover, by properties (i) and (ii) of $\cT^*$, there is a node $C$ of $\cT^*$ in the solution such that (a) $C$ is a non-facial cycle, (b) there is a facial cycle~$C'$ in the solution such that $C'<_v C$, and moreover, we can choose $C$ such that 
(c) any cycle $C''\neq C$ in the solution such that $C''\leq C$ is facial.
Our goal is to identify $C$ among all those shortest cycles that satisfy conditions (a)-(c), 

For this, we first mark the nodes of $\cT^*$ that are candidates to be $C$. Namely, we mark every node $C$ of $\cT^*$ such that there is a facial shortest cycle, that is, a leaf $C'$ of $\cT^*$, with the property that $C'<_v C$.   
\renewcommand{\star}{\alpha}
If no node was marked, then we conclude that there is no colorful $\cT$-maximal solution and so, we return that $(G,w,k)$ is a no-instance and stop. We now assume that this is not the case and consider the subgraph $\cT^{\star}$ of $\cT^*$ induced by the marked nodes. Notice that~$\cT^{\star}$ is a subtree of~$\cT^*$ with the same root as $\cT^*$, since any ancestor of a marked node is marked by property (i) of $\cT^*$. 

Denote by $\cL$ the set of leaves of $\cT^{\star}$ and let $\cU$ be the set of nodes of $\cT^\star$ that includes the root, the nodes of $\cL$, and all the nodes with at least two children.
Notice that because of \Cref{lem:facial}, for each~$C\in\cL$, there is a facial shortest cycle $C'<_v C$. Then by the laminarity of the cycles of~$\cT^\star$ and property (i) of~$\cT^\star$, we have that $|\cL|<k$. Otherwise, the map graph~$M$ would have an independent set of size $k$. However, this was already checked and ruled out. Since $|\cL|<k$, we have that 
$|\cU|\leq 2|\cL|-1\leq 2k-3$. Denote by $\cS$ the set of all nontrivial~$S_1$-$S_2$-paths
in $\cT^\star$ with their endpoints $S_1$ and $S_2$ in $\cU$ such that the internal vertices have degree two in~$\cT^\star$. Note that $|\cS|\leq |\cU|-1\leq 2k-4$. 

For each node $C$ in $\cT^\star$, consider the map graph $M_C$ such that $G_C$ is its planar embedding
and the vertices of $M_C$ are those faces of $G_C$ 
with shortest cycles as their frontiers and 
such that for any of their facial cycles $R$, we have that $R<_v C$, that is, all these faces are in the internal face of $C$ and the facial cycles do not touch $C$.

Recall that we aim to find the non-facial cycle $C$ satisfying conditions (a)--(c) stated above. For this, we guess the path in $\cS$ containing $C$ and the number $r$ of facial cycles $C'$ in the solution 
such that~$C'<_v C$. This is done by branching on all possible choices of a path in $\cS$ and a positive integer~$r\leq k-1$. 

Assume that $r$ and an $S_1$-$S_2$-path $P\in\cS$ are given. Since $S_1,S_2\in \cU$, we can assume without loss of generality that $S_2$ is a descendant of $S_1$ in $\cT^\star$. For each node $S$ in $P$, we check whether there is a packing of $r$ vertex-disjoint facial cycles $C'$ such that $C'<_v S$.
This is done by calling the algorithm from~\Cref{prop:ind-map} for $M_S$. If such a node does not exist, then we conclude that
either $C$ does not lie on $P$ or the solution does not have $r$ vertex-disjoint facial cycles $C'$ such that $C'<_v C$. So, we discard this choice of $P$ and $r$. 
On the other hand, if we found a cycle $S$ and $r$ vertex-disjoint facial cycles~$C'$ such that~$C'<_v S$ for~$r=k-1$, then these~$r$ facial shortest cycles together with~$S$ compose a packing of~$k$~vertex-disjoint shortest cycles. Thus, we return yes and stop.
Otherwise, among all nodes $S$ with the above property (lying on $P$ and having $r$ vertex-disjoint facial cycles $C'$ such that $C'<_v S$), we select the node $S^*$ closest to $S_2$ in $P$.

To argue that the choice of $S^*$ is feasible in the sense that 
if the guess of the path containing~$C$ and~$r$ was correct, then we can take~$C=S^*$,
assume that $C$ is a non-facial cycle in a hypothetical colorful~$\cT$-maximal solution $\cC$ 
such that $C$ is a node of $P$, conditions (a)--(c) are fulfilled, and there are exactly~$r$~facial cycles $C'$ in the solution with~$C'<_v C$. We additionally assume that~$\cC$~is a solution where $C$ is at a minimum distance from $S_2$ in $\cT^*$. Let $\cC'\subseteq \cC$ be the set of all the $r+1$ cycles $C'$ in the solution with~$C'\leq C$. By the choice of $S^*$, $S^*\leq C$ and~$G_{S^*}$ has a packing~$\cC''$ of $r+1$ vertex-disjoint shortest cycles that includes $S^*$ and $r$ facial cycles~$C'<_v S^*$. Consider $\cC^*=(\cC\setminus \cC')\cup \cC''$, that is, we replace the cycles of $\cC'$ with the cycles of $\cC''$. Because~$\cC$~is laminar and $S^*\leq C$, we  obtain that $\cC^*$ is a packing of $k$ vertex-disjoint cycles. Because we choose $\cC$ where $C$ is at a minimum distance from $S_2$ in $\cT^\star$, we obtain that $S^*=C$. This means that the choice of $S^*$ is feasible if $P$ and $r$ were correctly guessed.

Next,  we call our algorithm recursively. Since we would like to avoid rebuilding~$\cT^*$ because we are looking for a colorful solution defined for~$\cT^*$,
we construct 
new parameters $G'$, $k'$, and~$(\cT^*)'$ as follows:  
\begin{itemize}
\item construct $G'$ from $G$ by deleting the vertices and edges of $G$ that are embedded in the internal face of $C^*$,
\item set $k'=k-r$,
\item construct $(\cT^*)'$ from $\cT^*$ by deleting the nodes  that are proper descendants of $C^*$ (i.e., descendants of $C^*$ distinct from $C^*$) in $\cT^*$.
\end{itemize}
Observe that because we do not delete $C^*$, 
we can use $(\cT^*)'$ to represent the shortest cycles in the obtained graph $G'$. 

Then, we call the algorithm for $G'$, $k'$, and $(\cT^*)'$. If the algorithm returns yes, then we conclude that $(G,w,k)$ is a yes-instance and stop. If the algorithm returns no, then we discard the current choice of $P$ and $r$. 
To complete the description of the branching algorithm, we observe that we branch on at most $2k-4$ paths $P$ in $\cS$ and at most $k-1$ choices of $r$. If the algorithm fails to find a solution for all choices, then we conclude that there is no solution and return no. 

Since $r\geq 1$, we have that the depth of the recursion is at most $k$, that is, the algorithm is finite. Notice that if the algorithm concludes that there is a packing $\cC$ of $k$ vertex-disjoint shortest cycles, then it may happen that $\cC$ is not a $\cT$-maximal solution. However, it is sufficient for us that $\cC$ is a solution to~$(G,w,k)$. From the other side, if $(G,w,k)$ is a yes-instance, then the instance has a $\cT$-maximal solution. Hence, if there is a colorful solution of this type, then the algorithm returns yes. This concludes the correctness proof for the branching algorithm. 

Recall that we call this  algorithm for at most $N=\Big(\frac{81(k+1)}{16}\Big)^{2k}$ random colorings of $\cP$. If for one of the colorings, we obtain that $(G,w,k)$ is a yes-instance then we return yes and stop. Otherwise, if we fail to find a solution for all colorings, then we return no. The algorithm may return a false negative answer but the probability of this event is at most~$\frac{1}{e}$. 
This concludes the description and the correctness proof of the algorithm.

To evaluate the running time, note that we can verify whether $G$ is a forest, compute the girth, and 
delete the vertices and edges that are not included in shortest cycles in polynomial time by Dijkstra's algorithm~\cite{Dijkstra59}. Then, $\cT(G)$ and the corresponding planar embedding of~$G$ can be constructed in polynomial time by~\Cref{lem:tree}. Given $\cT(G)$, it is straightforward to construct the set of paths $\cP$ in polynomial time. Then, for each random coloring of $\cP$, it is also easy to construct~$\cT^*$  in polynomial time.  Then, we call our branching algorithm. 

For the evaluation of the running time of the branching algorithm, we note the map graph~$M$ can be constructed in polynomial time for the given planar embedding of $G$. The algorithm from~\Cref{prop:ind-map} runs in $2^{\OO(k)}\cdot n^2$ time. If the algorithm reports that $M$ has no independent set of size $k$, we construct~$\cT^\star$ from~$\cT^*$ and this can be done in polynomial time. Further, 
we branch on at most~$2k-4$ choices of~$P\in\cS$ and at most $k-1$ choices of $r$, that is, we have~$\OO(k^2)$~possibilities. Because $P$ has at most $n$ nodes, we call the algorithm from  \Cref{prop:ind-map} 
at most $n$ times. Thus, we can find the cycle $C^*$ (if exists) in  $2^{\OO(k)}\cdot n^3$ time.  
Then we call our branching algorithm recursively for $G'$, $k'$, and $(\cT^*)'$ which can be easily constructed in polynomial time. Since the depth of the recursion tree is at most $k$, we obtain that the total running time of the branching algorithm is $k^{\OO(k)}\cdot n^{\OO(1)}$.

We run the branching algorithm for at most $N=\Big(\frac{81(k+1)}{16}\Big)^{2k}=k^{\OO(k)}$ random colorings. Then the total running time is $k^{\OO(k)}\cdot n^{\OO(1)}$. Taking into account that the running time of the previous steps is polynomial, the overall running time is $k^{\OO(k)}\cdot n^{\OO(1)}$.

The described algorithm is a randomized Monte Carlo algorithm with one-sided error due to applying the random separation technique. More precisely, we consider random colorings of the paths in the set~$\cP=\bigcup_C \cP_s(C)$ where the union is taken over all small $S$-nodes $C$ of $\cT(G)$. Notice that we use a non-uniform probability distribution. To derandomize the algorithm, one can use the results 
of Fomin et al.~\cite{FominLPS16} tailored for dealing with random colorings of this type. More precisely, we can replace random colorings by \emph{separating collections}; we refer to~\citet{FominLPS16} for the definition of separating collections and the details of their construction. 
The obtained deterministic algorithm has a slightly worse running time. However, the overall running time still can be written as $k^{\OO(k)}\cdot n^{\OO(1)}$.
This concludes the proof.
\end{proof}

\subsection{Kernelization for \dsce}\label{sec:dsce-kern}
In this subsection, we prove~\Cref{thm:mincycles-ed} which we restate here.

\thmmincyclesed*

\begin{proof}

Let $(G, w, k)$ be an instance of \dsce where $G$ is a planar graph. We assume that $G$ is clean, as we can safely remove all vertices and edges that are not part of any shortest cycle.
If the graph is empty, we return a trivial no-instance. Otherwise, $G$ is clean and non-empty, therefore by \Cref{lem:facial}, there is also an internal face $f$ whose frontier is a shortest cycle, in some fixed planar embedding of $G$. We consider another embedding of $G$ where $f$ is the outer face, and compute the LSCT $\cT(G)$ as described in the proof of \Cref{thm:mincycles-vd}.

We first bound the number of leaves in $\cT(G)$.
\begin{claim}
    \label{claim:T_ed_leaves}
    If $\cT(G)$ has at least $4k$ leaves, then $(G, w, k)$ is a yes-instance.
\end{claim}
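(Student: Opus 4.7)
The plan is to exploit the duality between edge-disjoint cycles in $G$ and independent sets in the dual graph $G^*$, combined with the planar independent-set bound of \Cref{obs:ind-planar} (a consequence of the Four-Color Theorem). The first step is to identify what the leaves of $\cT(G)$ actually are: by property (iii) of \Cref{lem:tree}, every leaf of $\cT(G)$ is a facial shortest cycle of $G$ (with respect to the fixed planar embedding used to build $\cT(G)$). Hence the hypothesis supplies a set $\cL$ of at least $4k$ pairwise distinct facial shortest cycles of $G$, each bounding an internal face.

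Next I would pass to the dual graph $G^*$. Each $C \in \cL$ is the frontier of a unique internal face $f_C$ of $G$ and therefore corresponds to a vertex of $G^*$. Let $H$ be the subgraph of $G^*$ induced by $\{f_C : C \in \cL\}$. By the definition of the planar dual recalled in \Cref{sec:preliminaries}, two distinct faces of $G$ are adjacent in $G^*$ precisely when their frontiers share an edge of $G$. Consequently, any independent set of $H$ corresponds to a collection of pairwise edge-disjoint facial shortest cycles of $G$.

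Since $G^*$ is planar, so is its subgraph $H$, and $|V(H)| = |\cL| \geq 4k$. Applying \Cref{obs:ind-planar} to $H$ yields an independent set of size at least $|V(H)|/4 \geq k$; the $k$ facial shortest cycles corresponding to this independent set form an edge-disjoint packing witnessing that $(G, w, k)$ is a yes-instance of \dsce.

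The step I expect to require the most care is the passage to $G^*$: one has to check that the ``share an edge'' relation between facial cycles is faithfully captured by adjacency in $G^*$ even though $G^*$ need not be simple (parallel dual edges arise when two faces share several edges of $G$, and loops arise at bridges). Fortunately, since we only need \emph{non-}adjacency in $H$ to imply edge-disjointness in $G$, this subtlety is benign. The remainder of the argument is either by construction of $\cT(G)$ (via \Cref{lem:tree}) or a direct invocation of the planar independent-set bound.
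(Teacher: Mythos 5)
Your argument is correct and is essentially the paper's own proof: identify the leaves of $\cT(G)$ as the facial shortest cycles via \Cref{lem:tree}, pass to the induced subgraph of the planar dual $G^*$ (the paper calls it $G_F$), and invoke \Cref{obs:ind-planar} to extract an independent set of size $k$ whose faces yield edge-disjoint shortest cycles. Your remark that $G^*$ may be a multigraph and that this is harmless for arguing non-adjacency is a small additional piece of care the paper leaves implicit, but it does not change the substance of the argument.
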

\begin{claimproof}
    Consider the graph $G_F$, where the vertex set is the set of shortest cycles in $G$ that are found in the leaves of $\cT(G)$, and a pair of vertices is adjacent if and only if the respective cycles share edges. Since the shortest cycles in the leaves of $\cT(G)$ are frontiers of internal faces in the embedding of~$G$, $G_F$ is a subgraph of the dual graph $G^*$ of $G$, and is therefore planar. By~\Cref{obs:ind-planar}, a planar graph with at least $4k$ vertices has an independent set of size $k$. The vertices of this independent set in~$G_F$ correspond to $k$ edge-disjoint shortest cycles in $G$, which proves the claim.
\end{claimproof}

By \Cref{claim:T_ed_leaves}, if there are at least $4k$ leaves in $\cT(G)$, we stop and return a trivial yes-instance. Thus, in the following we assume that $\cT(G)$ has less than $4k$ leaves.

For a shortest cycle $C$ in $G$, we say that its
\emph{extension} is any shortest cycle $C'$ with $C <_e C'$, and we say that $C'$ is its
\emph{lowest extension} if $C'$ is an extension of $C$ and for every other extension~$C''$ of~$C$, it holds that~$C' \le C''$.
\begin{claim}\label{claim:extensions}
Every shortest cycle $C$ in $G$ that has at least one extension, has a unique lowest extension.
\end{claim}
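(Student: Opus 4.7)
The plan is to show that the finite set $\mathcal{E}(C)$ of all extensions of $C$, ordered by $\leq$, is downward-directed: for any two extensions, there is an extension lying below both. In a finite downward-directed poset a unique minimum exists (any minimal element $M$ and any other element $X$ admit a common lower bound in the poset, which by minimality must equal $M$, showing $M \leq X$), and this minimum is the desired lowest extension.

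The crux is establishing downward-directedness: given extensions $C_1', C_2'$ of $C$, I would produce an extension $C''$ with $C'' \leq C_1'$ and $C'' \leq C_2'$. If $V(C_1') \cap V(C_2') = \emptyset$, then the closed internal faces of both $C_1'$ and $C_2'$ must contain $V(C)$, which forces the two cycles to be nested rather than lying in each other's external face (otherwise the closed internal faces would be disjoint). In this case $C''$ can be taken to be the inner of the two. Otherwise I would apply \Cref{lem:trivial}. In the touching subcase, $C_1' \cap C_2' = P$ is a path and $C_i' = P \cup R_i$; the two internally disjoint arcs $R_1, R_2$ either lie on the same side of $P$ in the embedding, in which case one of $C_1', C_2'$ is nested in the other, or they lie on opposite sides, in which case the internal faces of $C_1'$ and $C_2'$ are disjoint --- contradicting that every edge of $C$, being disjoint from $E(C_1') \cup E(C_2')$, sits strictly inside both.

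In the remaining subcase, $V(C_1' \cap C_2') = \{s, t\}$ and \Cref{lem:trivial} gives four internally vertex-disjoint $s$-$t$-paths $P_1, P_2, Q_1, Q_2$ of length $g(G)/2$ with $C_1' = P_1 \cup P_2$ and $C_2' = Q_1 \cup Q_2$. If the cyclic order at $s$ does not alternate between $P$s and $Q$s --- equivalently, both $P_i$'s lie on the same side of $C_2'$ --- then one cycle is nested inside the other. In the alternating case, the four paths partition the region bounded by $C_1' \cup C_2'$ into four quadrants, and the intersection of the closed internal faces of $C_1'$ and $C_2'$ is precisely the closed internal face of a shortest cycle $C'' = P_i \cup Q_j$, with $(i, j)$ determined by which quadrant contains $C$. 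Since $E(C'') \subseteq E(C_1') \cup E(C_2')$ is disjoint from $E(C)$ and $C$ lies inside $C''$, the cycle $C''$ is an extension of $C$ lying below both $C_1'$ and $C_2'$.

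The main obstacle is the planar region-bookkeeping in the latter two subcases --- the disjoint-interior contradiction in the opposite-sides touching configuration, and the identification of the correct quadrant cycle $P_i \cup Q_j$ in the alternating case --- but these are standard Jordan-curve manipulations in the spirit of \Cref{lem:split} and \Cref{lem:unsplit-decomp}, and should not present a conceptual difficulty.
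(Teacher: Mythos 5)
Your proof is correct in outline and takes a genuinely different route from the paper's. The paper's argument walks down the LSCT $\cT(G)$: it first observes (via \Cref{lem:tree}) that the tree-node extensions of $C$ form a rooted subpath, picks the lowest such node $C^*$, and then uses the decompositions $\cC_u(C^*)$ or $\cC_s(C^*)$ together with \Cref{lem:split-decomp,lem:unsplit-decomp} to identify the lowest extension explicitly -- $C^*$ itself when $C^*$ is a $U$-node, and $P_{i-1}\cup P_{j+1}$ for appropriate $i,j$ when $C^*$ is an $S$-node. Your argument, by contrast, never touches the tree: it works directly with the poset of extensions ordered by $\le$ and shows it is downward-directed by a case analysis on how two extensions meet in the plane (vertex-disjoint, touching, or the four-path crossing case of \Cref{lem:trivial}), in each case exhibiting either nesting or the ``quadrant'' cycle $P_i\cup Q_j$ as a strictly smaller common lower bound. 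This buys independence from the LSCT machinery -- only \Cref{lem:trivial} and Jordan-curve arguments are needed -- whereas the paper's proof is constructive and dovetails with the surrounding kernelization, which traces extension chains through $\cT(G)$ in any case. One imprecision to fix when writing this out: the four internally disjoint paths divide the plane into four regions of which exactly one is unbounded, so they do not ``partition the region bounded by $C_1'\cup C_2'$ into four quadrants''; the correct statement is that the intersection of the closed interiors equals the closed interior of the unique cycle formed by the arc of $C_1'$ that lies inside $C_2'$ together with the arc of $C_2'$ that lies inside $C_1'$, and $C$ lands in that quadrant automatically because it is inside both interiors, rather than its position determining which quadrant cycle to pick.
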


\begin{claimproof}
    Observe that for any shortest cycles $C' \le C''$ in $G$, if $C <_e C'$, then~$C <_e C''$.
    Consider first extensions of $C$ that are cycles in $\cT(G)$.  By \Cref{lem:tree}, all cycles $C'$ of $\cT(G)$ with~$C \le C'$ form a rooted subpath in~$\cT(G)$; by the above, the set of extensions of $C$ in~$\cT(G)$ forms a rooted subpath of this path. If the subpath is empty, then the root cycle itself shares edges with~$C$, in which case $C$ has no extension. Otherwise, let $C^*$ be the lowest cycle on this subpath; for each other extension~$C'$ in~$\cT(G)$, $C^* \le C'$. If $C^*$ is a U-node, by \Cref{lem:unsplit-decomp}~$C \le C'$ for some $C' \in \cC_u(C^*)$, and for every other shortest cycle $C''$ in $G_{C^*}$ with~$C \le C''$, it holds that~$C'' \le C'$.
    Since~$C' \in \cT(G)$ is not an extension of~$C$, there are no other extensions in $G_{C^*}$ and~$C^*$ is indeed the unique lowest extension.

    Now, consider the case where~$C^*$ is an S-node. Let $s,t$ be the poles of $C^*$ and let~${\cP_s(C) = \{P_0, \ldots, P_\ell\}}$ be the inclusion-wise maximal family of $s$-$t$-paths in $G_{C'}$ as in the construction of $\cC_s(C^*)$. By \Cref{lem:split-decomp}, either there exists $C' \in \cC_s(C^*)$ such that $C \le C'$, or~$C = P_i \cup P_j$ for some~$i < j \in [\ell]_0$.
    In the latter case, observe that $i > 0$ and $j < \ell$ since~$C$ shares no edges with $C^*$, and all non-tree cycles in $G_{C^*}$ that are extensions of $C$ are of the form~$P_{i'} \cup P_{j'}$ for some $i' < i$ and $j < j' \le \ell$. Hence, in this case $P_{i - 1} \cup P_{j + 1}$ is the lowest extension of $P_i \cup P_j$.
    Otherwise, if $C \le C'$ for some $C' \in \cC_s(C^*)$, then $C'$ shares edges with~$C$ by the choice of $C^*$ as~$C' \in \cT(G)$. Let $h \in [\ell]$ be such that $C' = P_{h - 1} \cup P_h$, let $i$ be the smallest index in~$\{h - 1, h\}$ such that $P_i$ shares edges with $C$, and let $j$ be the largest. By the same argument as in the previous case, $P_{i - 1} \cup P_{j + 1}$ is the lowest extension of $C$.
\end{claimproof}

Due to \Cref{claim:extensions}, for every shortest cycle $C$ in $G$ with at least one extension, we denote by~$L(C)$, the unique lowest extension of $C$. 
We say that the \emph{extension chain} $\cU(C)$ of $C$ is the sequence~$C_0, C_1, \ldots, C_r$ of shortest cycles in $G$ such that $C_0 = C$, $C_i = L(C_{i - 1})$ for each $i \in [r]$, and the cycle $C_r$ has no extension. 
Notice that, given $\cT(G)$, $\cU(C)$ can be found in polynomial time by tracing the path from $C$ to the root in the tree.

We now construct the set of \emph{marked} cycles $\cC_M$, which is a subset of all shortest cycles of $G$. We start by describing the set of \emph{base} cycles $\cC_B \subseteq \cC_M$, which contains
\begin{itemize}
    \item every leaf cycle of $\cT(G)$,
    \item every $U$-node that has at least two children,
    \item every $S$-node together with the cycles $P_0\cup P_{\ell - 1}$ and $P_1 \cup P_\ell$, where $\{P_0, \ldots, P_\ell\} = \cP_s(C)$.
\end{itemize}
If there exists a cycle $C \in \cC_B$ with a large enough extension chain, i.e., with $|\cU(C)| \ge k$, we stop and return a trivial yes-instance.
This is safe, since by definition all of the cycles in $\cU(C)$ are edge-disjoint.
Otherwise, we define $\cC_M$ to be all cycles of $\cC_B$ together with their extension chains, plus the root of~$\cT(G)$ and the set of all non-tree cycles: $\cC_M = \{r\} \cup \cC_N \cup \bigcup_{C \in \cC_B} \cU(C)$, where $\cC_N$ denotes the set of all shortest cycles in $G$ that are not in $\cT(G)$. We first observe that~$\cC_M$ is closed under taking the lowest extension.

\begin{claim}\label{claim:cM_closed}
    For every cycle $C \in \cC_M$, its lowest extension $L(C)$ also belongs to $\cC_M$, if it exists.
\end{claim}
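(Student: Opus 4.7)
The plan is to prove the claim by case analysis on why $C$ belongs to $\cC_M = \{r\} \cup \cC_N \cup \bigcup_{C' \in \cC_B} \cU(C')$. Two cases are immediate: if $C = r$, every shortest cycle lies in the closure of $r$'s internal face, so $r$ has no proper extension and $L(C)$ does not exist; and if $C$ is the $s$-th entry of an extension chain $\cU(C') = (C_0, \ldots, C_t)$, the construction of the chain gives $L(C) = C_{s+1} \in \cU(C') \subseteq \cC_M$ when $s < t$ and leaves $L(C)$ undefined when $s = t$.

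The main case is $C \in \cC_N$. By \Cref{lem:tree}(ii), $C = P_a \cup P_b$ at some $S$-node $S \in \cT(G)$ with $\cP_s(S) = \{P_0, \ldots, P_\ell\}$, $b - a \ge 2$, and $(a, b) \ne (0, \ell)$. I would split into the ``interior'' sub-case $1 \le a < b \le \ell - 1$ and the ``boundary'' sub-case $a = 0$ (the case $b = \ell$ is symmetric via the base cycle $P_1 \cup P_\ell$). In the interior sub-case, $C$ is edge-disjoint from $S$, so $S$ itself is a tree-extension of $C$ and the proof of \Cref{claim:extensions} gives $L(C) = P_{a-1} \cup P_{b+1}$; this cycle is either $S$ (when $(a, b) = (1, \ell - 1)$, in $\cC_B$) or another non-tree cycle at $S$ (in $\cC_N$), so in both cases $L(C) \in \cC_M$.

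The boundary sub-case $a = 0$ is the crux. Here $C = P_0 \cup P_b$ shares $P_0$ with $S$, so $S$ fails to be an extension of $C$ and $L(C)$ must be sought above $S$ in $\cT(G)$. My plan is to compare $C$ with the base cycle $C' = P_0 \cup P_{\ell - 1} \in \cC_B$ and show $L(C) = L(C')$, whence $L(C) \in \cU(C') \subseteq \cC_M$. The crucial structural observation is that for every ancestor $A$ of $S$ in $\cT(G)$, the cycle $A$'s edges lie outside the open interior of $S$ (since $S \le A$ in the planar embedding and $A$'s boundary sits in the closure of $S$'s exterior); hence $E(A) \cap E(P_b) = E(A) \cap E(P_{\ell-1}) = \emptyset$, and the condition ``$A$ shares edges with $C$'' reduces to ``$E(A) \cap E(P_0) \ne \emptyset$'', exactly the same condition as for $C'$. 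Consequently the lowest tree-extension $C^*$ of $C$ coincides with that of $C'$, and when $C^*$ is an $S$-node with $S$ sitting inside it as the child $A_{h_0} = Q_{h_0 - 1} \cup Q_{h_0}$, both $C$ and $C'$ share exactly one of $\{Q_{h_0-1}, Q_{h_0}\}$ (namely the one equal to $P_0$) and neither shares the other (which is $P_\ell$), so the index selection rule in the proof of \Cref{claim:extensions} produces the same pair $(i, j)$ for both cycles, yielding $L(C) = L(C')$.

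I expect the main obstacle to be making the ``ancestor edges lie outside the interior of $S$'' observation watertight: it relies on the laminarity of $\cT(G)$ together with how shortest cycles can touch each other (per \Cref{lem:trivial}), and requires a brief case split according to whether intermediate ancestors on the path from $S$ to $C^*$ are $U$-nodes or $S$-nodes, together with a check that the formula of \Cref{claim:extensions} for the boundary case gives the same output on $C$ and $C'$ regardless of whether $h_0 = 1$, $h_0 = \ell'$, or an interior value.
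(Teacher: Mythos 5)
Your proposal matches the paper's proof in both structure and substance: base cycles and their extension chains are handled directly by the construction of $\cU(\cdot)$, and for non-tree cycles the interior case invokes the explicit formula $P_{a-1}\cup P_{b+1}$ from \Cref{claim:extensions} while the boundary case $a=0$ is reduced to the base cycle $P_0\cup P_{\ell-1}$ by noting that ancestors of $S$ avoid $S$'s interior, so any relevant overlap is confined to $E(P_0)$ and hence is identical for $C$ and $P_0\cup P_{\ell-1}$. One small imprecision: the parenthetical asserting that $P_0$ and $P_\ell$ literally equal $Q_{h_0-1}$ and $Q_{h_0}$ holds only when $S$ is itself the child $D_{h_0}$ of $C^*$, but since your argument really only needs that $E(A)\cap E(C)=E(A)\cap E(P_0)=E(A)\cap E(C')$ for every ancestor $A$, this does not affect correctness.
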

\begin{claimproof}
    For each $C \in \cC_B$, the statement is immediate for $C$ and all cycles in $\cU(C)$, since their lowest extensions are added to $\cU(C)$ exhaustively by the definition.
    
    It remains to consider non-tree cycles. 
    Let $C$ be an S-node in $\cT(G)$ with ${\cP_s(C) = \{P_0, \ldots, P_\ell\}}$ and consider a non-tree cycle $C' = P_i \cup P_j$ for some $i < j \in [\ell]_0$.
    By the proof of \Cref{claim:extensions}, if $i > 0$ and~$j < \ell$, the lowest extension of $C'$ is $P_{i - 1} \cup P_{j + 1}$, which is either another non-tree cycle or the S-node~$C$; both are part of $\cC_M$ by definition. Finally, if $i = 0$ we claim that~$L(C') = L(P_0 \cup P_{\ell - 1})$, where~$P_0 \cup P_{\ell - 1} \in \cC_B$ and $L(P_0 \cup P_{\ell - 1}) \in \cC_M$. This holds since both lowest extensions are above $C$, while the intersection between $C$ and $C'$ is $P_0$, which is the same as the intersection between $C$ and $P_0 \cup P_{\ell - 1}$.
    The argument for $j = \ell - 1$ and $L(C') = L(P_1 \cup P_\ell)$ is analogous.
\end{claimproof}

By construction, we have an upper-bound on the size of $\cC_M$, which we prove in the next claim.

\begin{claim}\label{claim:marked_size}
    The size of $\cC_M$ is $\OO(k^2)$.
\end{claim}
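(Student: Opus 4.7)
The plan is to bound each piece of $\cC_M = \{r\} \cup \cC_N \cup \bigcup_{C \in \cC_B} \cU(C)$ separately, aiming for $\OO(k^2)$ on each part.

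First I would bound $|\cC_B|$ linearly in $k$. By \Cref{claim:T_ed_leaves}, under our no-instance assumption the number of leaves of $\cT(G)$ is less than $4k$, and a standard tree argument gives that the number of branching nodes (those with $\ge 2$ children) is at most the number of leaves minus one, and hence $\OO(k)$. Since every $S$-node has $|\cC_s(C)| \ge 2$ by construction, every $S$-node is branching, so the number of $S$-nodes is also $\OO(k)$. Thus $|\cC_B| \le (\text{leaves}) + (\text{$U$-branching nodes}) + 3 \cdot (\text{$S$-nodes}) = \OO(k)$. Combined with the preceding reduction rule that $|\cU(C)| < k$ for every $C \in \cC_B$, this yields $|\bigcup_{C \in \cC_B} \cU(C)| \le k \cdot |\cC_B| = \OO(k^2)$.

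The main remaining piece is bounding $|\cC_N|$. By \Cref{lem:tree}(ii), each non-tree cycle is of the form $P_i \cup P_j$ for some $S$-node $C$ with $\cP_s(C) = \{P_0, \ldots, P_{\ell_C}\}$, so the number of non-tree cycles associated with $C$ is at most $\binom{\ell_C + 1}{2}$. The key combinatorial step is to show that $\sum_C \ell_C = \OO(k)$, where the sum ranges over all $S$-nodes. To see this, let $T'$ be the tree obtained from $\cT(G)$ by contracting every maximal chain of nodes having exactly one child; then $V(T')$ consists precisely of the leaves, the branching nodes, and the root of $\cT(G)$, so $|V(T')| = \OO(k)$. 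For each $S$-node $v$, its $\ell_v$ children in $\cT(G)$ correspond bijectively to its $\ell_v$ children in $T'$, since each child of $v$ in $\cT(G)$ either already lies in $V(T')$ or starts a unary chain that ends at a node of $V(T')$. Hence $\sum_v \ell_v \le \sum_{v \in V(T')} \text{children}(v) = |V(T')| - 1 = \OO(k)$. In particular $\max_C \ell_C \le \sum_C \ell_C = \OO(k)$, so
\[
|\cC_N| \le \sum_C \binom{\ell_C + 1}{2} \le (\max_C \ell_C + 1)\cdot \sum_C (\ell_C + 1) = \OO(k^2).
\]

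Combining the bounds yields $|\cC_M| \le 1 + |\cC_N| + |\bigcup_{C\in \cC_B} \cU(C)| = \OO(k^2)$, as desired. The main subtle point is the bound on $|\cC_N|$: a single $S$-node with $\Omega(k)$ paths in $\cP_s(C)$ could a priori contribute $\Omega(k^2)$ non-tree cycles, and there can be $\Omega(k)$ $S$-nodes, which would naively suggest an $\Omega(k^3)$ bound; the contracted-tree argument prevents this by bounding the total $\sum_C \ell_C$ linearly in $k$, forcing any quadratic blow-up to happen at essentially one $S$-node.
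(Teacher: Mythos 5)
Your proof is correct and follows essentially the same route as the paper: bound $|\cC_B|$ linearly in $k$ via the $<4k$ leaves and the corresponding bound on branching nodes, bound the extension-chain contribution by $k\cdot|\cC_B|$, and bound $|\cC_N|$ quadratically from the observation that the total number of children over all $S$-nodes is $\OO(k)$. The only cosmetic difference is that you make the ``total children is $\OO(k)$'' step explicit via the contracted tree $T'$, whereas the paper states the $<8k$ bound directly.
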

\begin{claimproof}
    We first bound the number of base cycles, i.e., the size of $\cC_B$. 
    Recall that we have already bounded the number of leaves in $\cT(G)$ by 
    $4k-1$. This also implies that the number of internal vertices of $\cT(G)$ that have at least two children is less than $4k$. Therefore, the total number of $U$-nodes and $S$-nodes in $\cC_B$ is less than $4k$. By definition, $\cC_B$ contains two additional cycles for each $S$-node; from the above, the number of such cycles is less than $8k$. The total size of $\cC_B$ is therefore less than $16k$.
    
    For every $C \in \cC_B$, the size of its extension chain $\cU(C)$ is at most $k - 1$. Thus, the cycles of~$\cC_B$ together with their extension chains amount to less than $16k^2$ cycles. In addition to these cycles, $\cC_M$ contains the root and the non-tree cycles $\cC_N$. By the above, we have less than $4k$ $S$-nodes, and in total they have less than $8k$ children. Since an $S$-node with $p$ children contains at most $p \cdot (p + 1) / 2$ non-tree cycles, there are at most $4k \cdot (8k + 1)$ non-tree cycles in $G$ in total. The size of $\cC_M$ can be therefore upper-bounded by $52 \cdot k^2 = \OO(k^2)$.
\end{claimproof}

We now show the main property of the set of marked cycles $\cC_M$, that no other cycles are needed in order to find a $\cT$-maximal solution.

\begin{claim} \label{claim:marked}
    For every $\cT$-maximal solution $\cC$, it holds that $\cC \subseteq \cC_M$.
\end{claim}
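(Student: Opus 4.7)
My plan is to argue by contradiction and $(\leq)$-minimality. Suppose some $\cT$-maximal solution $\cC$ contains a cycle $C \notin \cC_M$, and pick such a $C$ that is minimal with respect to $(\leq)$ in $\cC \setminus \cC_M$. First I determine the tree position of $C$. Since $\cC_N \subseteq \cC_M$ and the root of $\cT(G)$ is in $\cC_M$, $C$ must be a non-root node of $\cT(G)$. Since $\cC_B \subseteq \cC_M$ contains every leaf, every $S$-node, and every $U$-node with at least two children, the only possibility left is that $C$ is a $U$-node whose child set $\cC_u(C) = \{C_1\}$ is a singleton.

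Next, since $C$ is unsplittable and hence not facial, \Cref{lem:T-max} supplies some $D \in \cC \setminus \{C\}$ with $D \leq C$. Applying \Cref{lem:unsplit-decomp} to the $U$-node $C$, every shortest cycle in $G_C$ other than $C$ lies below some element of $\cC_u(C) = \{C_1\}$, so $D \leq C_1 < C$. By the minimality of $C$ in $\cC \setminus \cC_M$, we conclude $D \in \cC_M$.

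To close the proof, I would show $C \in \cU(D)$, which together with the definition of $\cC_M$ contradicts $C \notin \cC_M$. Since $D, C \in \cC$ are edge-disjoint and $D \leq C$, the cycle $C$ is an extension of $D$, so the iterated lowest-extension chain $D, L(D), L^2(D), \ldots$ is well defined, and every iterate lies in $\cC_M$ by \Cref{claim:cM_closed}. The key structural fact is that no shortest cycle of $G$ is strictly interposed between $C_1$ and $C$ with respect to $(\leq)$; this follows from \Cref{lem:unsplit-decomp} together with $\cC_u(C) = \{C_1\}$. Consequently, as soon as the chain reaches a cycle $X_i$ with $D \leq X_i \leq C_1$ that is edge-disjoint from $C$, the definition of the lowest extension forces $X_{i+1} = C$, and we are done. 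The main obstacle is the subcase in which $C_1$ itself shares edges with $C$, so that the chain cannot pass through $C_1$ cleanly. Here I would derive the contradiction directly from $\cT$-maximality rather than from the extension chain: using \Cref{lem:facial} to produce a facial shortest cycle $F$ with $F \leq C$, and then using a planarity-based edge-disjointness analysis (cycles of $\cC$ outside $C$ cannot share edges with anything strictly inside $C$; the remaining cycles of $\cC$ inside $C$ all lie below $C_1$) to show that $F$, or some suitably chosen cycle in the narrow region between $C_1$ and $C$, yields a strictly $(\leq)$-smaller packing with no fewer facial cycles, contradicting clause (ii) (and often also clause (i)) of $\cT$-maximality. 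This edge-disjointness bookkeeping is where the bulk of the technical effort will sit.
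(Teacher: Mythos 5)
Your setup matches the paper exactly: picking $C$ to be $(\le)$-minimal in $\cC\setminus\cC_M$, identifying $C$ as a $U$-node with a single child $C_1$, and aiming to land $C$ in some extension chain $\cU(D)$ with $D\in\cC_M$ so that \Cref{claim:cM_closed} closes the argument. However, the pivotal step is wrong as stated and the acknowledged obstacle is exactly where the work has to happen. The assertion ``as soon as the chain reaches a cycle $X_i$ with $D\le X_i\le C_1$ that is edge-disjoint from $C$, the definition of the lowest extension forces $X_{i+1}=C$'' does not hold: $L(X_i)$ is the $(\le)$-\emph{smallest} extension, and if $X_i$ is also edge-disjoint from $C_1$, then $C_1$ itself (or something below it) is an extension and $L(X_i)\le C_1<C$. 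So the chain can dwell strictly inside $G_{C_1}$ for many steps. Worse, when some chain element $X_j\le C_1$ shares an edge with $C$ (necessarily an edge of $E(C)\cap E(C_1)$), $C$ ceases to be an extension of $X_j$ and the chain can skip over $C$ entirely, so $C\notin\cU(D)$ for your arbitrary $D$. Your proposed repair via \Cref{lem:facial} and a ``strictly $(\le)$-smaller packing'' is a sketch, not a proof, and it is not how the paper closes this case.

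The missing idea is to \emph{choose} $D$ so that $L(D)=C$ in a single step, rather than to iterate from an arbitrary $D$. The paper splits on whether $C_1\in\cC$. If $C_1\in\cC$, take $D=C_1$: by \Cref{lem:unsplit-decomp} and $\cC_u(C)=\{C_1\}$, every shortest cycle $\ne C$ in $G_C$ lies $\le C_1$, so $C$ is the unique extension of $C_1$ inside $G_C$, hence $L(C_1)=C$. If $C_1\notin\cC$, this is where $\cT$-maximality enters: $\cC'=\cC\cup\{C_1\}\setminus\{C\}$ has $C_1\le C$, so condition~(ii) of $\cT$-maximality forces $\cC'$ not to be a packing, i.e.\ $C_1$ shares an edge with some $C''\in\cC$ with $C''\le C_1$. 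Take $D=C''$. The shared edge of $C''$ and $C_1$ now blocks every candidate extension strictly below $C$: any extension $X$ with $C''\le X\le C_1$ would, by the nested-cycle ``sandwich'' argument, contain that shared edge and thus fail edge-disjointness with $C''$. Hence $L(C'')=C$ and \Cref{claim:cM_closed} gives $C\in\cC_M$, the desired contradiction. In short, the obstacle you flagged (cycles sharing edges with $C_1$) is not a nuisance to be worked around but precisely the structure that makes $L(D)=C$ hold; the chain argument should be collapsed to a single step with $D$ chosen via the replacement-and-$\cT$-maximality device.
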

\begin{claimproof}
    Consider a $\cT$-maximal solution $\cC$, and assume by contradiction that it is not contained in $\cC_M$. Let $C$ be the minimal cycle in $\cC$ with respect to ($\le$) that is not in $\cC_M$. Since $\cC_M$ contains all leaves and internal vertices of $\cT(G)$ with at least two children and all non-tree shortest cycles of $G$, we have that $C$ is a U-node in $\cT(G)$ with exactly one child $C'$. If $C' \in \cC$, then $C' \in \cC_M$ by the choice of $C$; but then $C' <_e C$, and $C$ is the lowest extension of $C'$, which contradicts the assumption that $C \notin \cC_M$, since by \Cref{claim:cM_closed}, $\cC_M$ is closed under taking the lowest extension.

    Therefore, $C' \notin \cC$, and let $\cC' = \cC \cup \{C'\} \setminus \{C\}$ be a set of $k$ shortest cycles obtained from~$\cC$ by replacing $C$ with $C'$. Since $\cC$ is $\cT$-maximal, it cannot be that $\cC'$ is a solution, thus $C'$ must share edges with some other cycle $C'' \in \cC \setminus\{C\}$; this cycle must be contained in $G_C$, since otherwise it also shares edges with $C$. By the choice of $C$, $C''$ is in the set of marked cycles $\cC_M$. We claim that $C$ is then the lowest extension of $C''$, which again contradicts the assumption that $C \notin \cC_M$. Indeed, since both $C$ and~$C''$ are part of the solution $\cC$, and $C'' \le C$, we have that $C'' <_e C$. On the other hand, $C$ shares edges with $C'$, which is the only other maximal cycle in $G_C$ with respect to ($\le$). Thus, no other shortest cycle in $G_C$ is an extension of $C''$, and~$C = L(C'')$. This finishes the proof of the claim.
\end{claimproof}

For a set of cycles $\cC$ in $G$, let $\bigcup \cC$ be a shortcut for $\bigcup_{C \in \cC} C$, i.e., the graph obtained by taking the union of all cycles in $\cC$, which is a subgraph of $G$.
Let $H = \bigcup \cC_M$; in other words, $H$ is the subgraph of $G$ obtained by removing all vertices and edges that are not in any of the cycles in~$\cC_M$. Since $H$ is a subgraph of $G$, we treat the weight function $w$ as a weight function on edges of~$H$ as well.
From \Cref{claim:marked}, it follows immediately that $(G, w, k)$ is a yes-instance of \dsce if and only if $(H, w, k)$ is a yes-instance. Therefore, we focus on $(H, w, k)$ for the remainder of the proof.

Observe also that the LSCT $\cT(H)$ can be seen as the result of dissolving degree-$2$ nodes in~$\cT(G)$. Here, dissolving a degree-$2$ node means removing the node from the graph while making its neighbors adjacent. Indeed, by construction of $\cC_M$, the only shortest cycles in $G$ that are not necessarily in $\cC_M$ are $U$-nodes with exactly one child. Removing vertices and edges of such a cycle $C$ that are not part of~$\cC_M$ results in the child of $C$ being directly adjacent to the parent of $C$ in the new LSCT. Note that all leaves, $S$-nodes, $U$-nodes with at least two children and the root node are preserved between $\cT(G)$ and $\cT(H)$.

While $H$ is constructed from only $\OO(k^2)$ cycles of $G$, it is not necessarily a kernel yet, since the number of vertices involved in these cycles may be large.
We now argue that the number of vertices in~$H$ can be reduced by showing that only a few vertices in $H$ have degree more than $2$.
We identify the vertices of degree at least $3$ by following the LSCT $\cT(H)$.
For each $C \in \cT(H)$, we define the set of vertices $B(C) \subseteq V(H_C)$ as follows.
If $C$ is an $S$-node, then $B(C) = \{s, t\}$, where $s, t$ are the poles of $C$. If $C$ is a $U$-node, let $B(C)$ be the set of vertices of degree at least three in the graph $\bigcup \{C\} \cup \cC_u(C)$, i.e., we consider only the cycle $C$ and its children in $\cT(H)$.

\begin{claim}\label{claim:U_nodes_degree}
    For each $U$-node $C$ of $\cT(H)$ with $p$ children, the size of $B(C)$ is at most $p \cdot (p + 1)$.
\end{claim}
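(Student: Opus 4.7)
The plan is to bound $|B(C)|$ by accounting for the pairwise intersections of cycles in the collection $\{C, C_1, \ldots, C_p\}$, where $C_1, \ldots, C_p$ are the children of $C$ in $\cT(H)$. Since each cycle contributes at most $2$ to the degree of any vertex, every $v \in B(C)$ must lie on at least two cycles from this collection. There are $\binom{p+1}{2}$ pairs of such cycles, and I will show that each pair contributes at most $2$ vertices to $B(C)$, yielding the bound $|B(C)| \leq 2\binom{p+1}{2} = p(p+1)$.

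The first key step is to show that for any two cycles $D_1, D_2 \in \{C, C_1, \ldots, C_p\}$ sharing at least one vertex, the intersection $D_1 \cap D_2$ is a (possibly trivial) path, i.e., $D_1$ and $D_2$ touch. By \Cref{lem:trivial}, the only alternative is that $V(D_1 \cap D_2) = \{s, t\}$ with four internally vertex-disjoint $s$-$t$-paths of length $g/2$, two on each cycle. If $D_1 = C$ and $D_2 = C_i$, then the two paths forming $C_i$ lie in $H_C$ (as $C_i$ is inside $C$), which would witness that $C$ is splittable in $H$, contradicting $C$ being a $U$-node of $\cT(H)$. If $D_1 = C_i$ and $D_2 = C_j$, I use a planarity argument: the cyclic orderings of the four paths around $s$ and $t$ are constrained to match by planarity, and a case analysis shows that in any valid embedding one of the $C_j$-paths must lie in the internal face of $C_i$, contradicting the fact that $C_i$ and $C_j$ are pairwise outside each other (a property inherited from their being siblings in the laminar family $\cC_u(C)$).

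The second key step is to prove that every $v \in B(C)$ is an endpoint of the intersection path $D_1 \cap D_2$ for at least one pair $(D_1, D_2)$ of cycles through $v$. For if instead $v$ is an internal vertex of $D_1 \cap D_2$ for every such pair, then for each cycle $D$ passing through $v$, the two edges of $D$ at $v$ coincide with the two shared edges along $D \cap D'$ for any other cycle $D'$ through $v$. Consequently, all cycles through $v$ use the same two edges at $v$, so $v$ has degree exactly $2$ in $\bigcup \{C\} \cup \cC_u(C)$, contradicting $v \in B(C)$. Since each pairwise intersection is a path with at most two endpoints, summing over all $\binom{p+1}{2}$ pairs yields $|B(C)| \leq p(p+1)$.

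The main obstacle I expect is the planar argument in the second case of the first step: carefully ruling out the ``splittable'' configuration of \Cref{lem:trivial} between two children $C_i$ and $C_j$ using only the pairwise-outside property. The standard way to handle this is to fix the cyclic order of the four paths around $s$, note that planarity forces a matching (reversed) order around $t$, and observe that in either admissible order the two paths of one of the cycles must separate the two paths of the other, forcing one path into the internal face of the other cycle and hence violating pairwise outside-ness.
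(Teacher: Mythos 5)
Your proof is correct and follows the same underlying approach as the paper: reduce the bound to counting, over all $\binom{p+1}{2}$ pairs of cycles in $\{C\}\cup\cC_u(C)$, the at-most-two endpoints of the shared path of each touching pair. The paper's version of this argument is more condensed --- it simply asserts, citing \Cref{lem:trivial}, that any two cycles in the family either touch or are disjoint, without spelling out why the second (non-touching, splittable-configuration) alternative of that lemma cannot arise --- whereas you explicitly rule it out, which is a genuine gap-fill: for $C$ versus a child the configuration would witness splittability of $C$, and for two children it would force one to intrude into the other's internal face, contradicting laminarity/maximality of $\cC_u(C)$. One small imprecision in your closing remark: in the non-alternating cyclic order of the four paths the two cycles do not ``separate'' each other, they nest; but the conclusion you actually use --- that at least one $C_j$-path lands in the internal face of $C_i$ (or vice versa) --- holds in both cases, so the argument goes through. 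Your second step (showing each vertex of $B(C)$ is an endpoint of some pairwise intersection path) is a slightly different phrasing of the paper's observation that a degree-$\ge 3$ vertex in the full union already has degree $\ge 3$ in the union of some two of the cycles; the two are logically equivalent and yield the identical bound $2\binom{p+1}{2}=p(p+1)$.
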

\begin{claimproof}
   The family $\{C\} \cup \cC_u(C)$ contains $p + 1$ cycles. By \Cref{lem:trivial}, every two of these cycles either touch or do not intersect. Therefore, for every two cycles $C_1$ and $C_2$ there are at most two vertices of degree at least $3$ in $C_1 \cup C_2$, which are the endpoints of the shared path $C_1 \cap C_2$. On the other hand, every vertex of degree at least three in  $\bigcup \cC_u(C) \cup \{C\}$ has two incident edges that belong to some distinct cycles $C_1, C_2 \in \{C\} \cup \cC_u(C)$, which means that this vertex has degree at least three also in~$C_1 \cup C_2$. Thus, the size of $B(C)$ is at most twice the number of pairs of cycles in $\{C\} \cup \cC_u(C)$, showing the claim.
\end{claimproof}

\begin{claim}\label{claim:high_degree}
    The number of vertices of degree at least three in $H$ is in $\OO(k^2)$.
\end{claim}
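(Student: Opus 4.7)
The plan is to show that every vertex of degree at least three in $H$ is captured by $B(C)$ for some node $C$ of $\cT(H)$, and then to sum $|B(C)|$ across the tree to reach the $\OO(k^2)$ bound. First I would collect the structural bounds on $\cT(H)$: by \Cref{claim:T_ed_leaves} the tree has fewer than $4k$ leaves; every $S$-node has at least two children (from the definition of $\cP_s(\cdot)$), and one-child $U$-nodes of $\cT(G)$ have been dissolved in $\cT(H)$; hence every internal node of $\cT(H)$ has at least two children, giving fewer than $4k$ internal nodes and total children count $\sum_{C} p(C) = \OO(k)$, where $p(C)$ denotes the number of children of $C$.

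Next I would prove the key ingredient: for each $v \in V(H)$ with $\deg_H(v) \ge 3$, pick the lowest node $C$ of $\cT(H)$ such that $\deg_{H_C}(v) \ge 3$. Such a $C$ exists because the root realizes $\deg_H(v) \ge 3$, and by the choice of $C$ every child $C'$ of $C$ satisfies $\deg_{H_{C'}}(v) \le 2$. The node $C$ cannot be a leaf, since for a facial leaf cycle $R$ one has $H_R = R$ and every vertex of $R$ has degree exactly $2$ in $R$. If $C$ is a $U$-node, the minimality argument forces $v$ to appear either in $C$ together with at least one child in $\cC_u(C)$, or in two distinct children, so $v$ has degree at least three in $\bigcup \{C\} \cup \cC_u(C)$, that is, $v \in B(C)$. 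If $C$ is an $S$-node with poles $s,t$ and children $\cC_s(C) = \{C_1,\ldots,C_\ell\}$, I would first observe that $H_C = \bigcup_i H_{C_i}$, since the non-tree cycles $P_i \cup P_j$ inside $C$ live on paths from $\cP_s(C)$ which are already contained in $\bigcup_i C_i$ and therefore contribute no new edges to $H_C$; then a short case split shows that if $v$ is in the strict interior of some $P_j$ the only edges of $H_C$ at $v$ are the two edges of $P_j$ incident to $v$ (contradicting $\deg_{H_C}(v) \ge 3$), and if $v$ is strictly inside some $C_i$ then minimality of $C$ is violated, leaving only the possibility $v \in \{s,t\} = B(C)$.

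Finally, summing $|B(C)|$ over $C \in \cT(H)$ contributes $2$ per $S$-node, for a total of $\OO(k)$, and at most $p(C)(p(C)+1)$ per $U$-node by \Cref{claim:U_nodes_degree}, for a total of $\OO(k^2)$ via $\sum_C p(C)^2 \le \bigl(\sum_C p(C)\bigr)^2 = \OO(k^2)$. The main obstacle will be the $S$-node case in the second step: non-tree cycles formally belong to $\cC_M$ and therefore contribute to $H$, so the delicate point is to verify that they cannot generate a branching point in the strict interior of a path $P_j \in \cP_s(C)$ that would escape both the pole set $\{s,t\}$ and the descendant subtrees rooted at the $C_i$.
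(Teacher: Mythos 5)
Your covering argument---picking, for each $v$ with $\deg_H(v)\ge 3$, the lowest node $C$ of $\cT(H)$ with $\deg_{H_C}(v)\ge 3$ and concluding $v\in B(C)$---is a genuinely different and arguably cleaner route than the paper's, which instead locates a closest pair of tree cycles containing distinct edges at $v$ and argues about their parent/common-parent structure. That part of your proposal is sound (including the $S$-node case: minimality of $C$ does rule out interior vertices of some $P_j$ or of some $C_i$).

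However, there is a genuine gap in the final counting step. You assert that ``one-child $U$-nodes of $\cT(G)$ have been dissolved in $\cT(H)$; hence every internal node of $\cT(H)$ has at least two children,'' from which you deduce $\sum_C p(C) = \OO(k)$. This is false: only one-child $U$-nodes \emph{not} in $\cC_M$ are dissolved, while one-child $U$-nodes that lie on an extension chain $\cU(C')$ for some base cycle $C'\in\cC_B$ survive in $\cT(H)$. The paper explicitly accommodates this (``For each $C\in\cT(H)$ that is either an $S$-node or a $U$-node with one child, it holds that $|B(C)|\le 2$'') and there may be $\Theta(k^2)$ such nodes, since $|\cC_M|=\OO(k^2)$. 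With $\sum_C p(C)$ potentially of order $k^2$, your bound $\sum_C p(C)^2 \le (\sum_C p(C))^2$ only yields $\OO(k^4)$. The fix is to split the count as the paper does: the $\OO(k^2)$ one-child $U$-nodes and $S$-nodes each contribute at most $2$ to $B$ for a total of $\OO(k^2)$, while the $U$-nodes with at least two children have total child count $\OO(k)$ (bounded by the $<4k$ leaves), so \Cref{claim:U_nodes_degree} gives them a total contribution of $\OO(k^2)$ as well.
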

\begin{claimproof}
    Let $B = \bigcup_{C \in \cT(H)} B(C)$.
    Let $v$ be a vertex of degree at least $3$ in $H$, we claim that $B$ contains $v$. By definition, $B$ contains the poles of all splittable cycles, so it remains to consider the case where $v$ is not a pole of any splittable cycle.
    Since the degree of $v$ is at least three in $H$, there exist two distinct shortest cycles~$C_1$, $C_2$ in $H$ that contain distinct edges incident to $v$. Assume~$C_1$ is a non-tree cycle, then by \Cref{lem:tree}, (ii), there exist a splittable cycle $C \in \cT(H)$ with $\cP_s(C) = \{P_0, \ldots, P_\ell\}$ and~$C_1 = P_i \cup P_j$ for some $0 \le i < j \le \ell$. Since~$v$ is not a pole of $C$,~$v$ is an internal vertex of~$P_i$ or~$P_j$; in any case, there exists a cycle in~$\cC_s(C)$ that contains the same edges incident to~$v$ as~$C_1$, so~$C_1$ may be replaced by this cycle.
    Since an analogous argument can be applied to $C_2$, from now on we assume that both~$C_1$ and~$C_2$ are nodes of $\cT(H)$.

    Let $C_1$, $C_2$ be the closest pair of cycles with this property, where the distance is measured over the tree $\cT(H)$, and let $C_1$ be the highest node in $\cT(H)$ among $C_1$, $C_2$. We claim that either $C_1$ is the parent of $C_2$ in $\cT(H)$, or $C_1$ and $C_2$ have a common parent in $\cT(H)$. Assume the contrary, and consider two cases depending on the locations of $C_1$ and $C_2$ in $\cT(H)$.
    
    \noindent\textbf{Case 1} ($C_2 \le C_1$): Consider the parent $C$ of $C_2$ in $\cT(H)$, $C \le C_1$ and $C \ne C_1$. The cycle~$C$ contains~$v$, since otherwise $v$ is an inner vertex of $H_C$ and cannot be on the cycle $C_1$.
    If $C$ contains the same edges incident to $v$ as $C_2$, then $C_1$, $C$ fulfill the same property but are closer in $\cT(H)$ than $C_1$, $C_2$, which is a contradiction. Otherwise, $C$ contains an edge incident to $v$ that is not contained in $C_2$, and the pair~$(C, C_2)$ fulfills the property, leading to a contradiction.
    
    \noindent\textbf{Case 2} ($C_2 \nleq C_1$ and $C_1 \nleq C_2$): Let $C$ be the parent of $C_2$ in $\cT(H)$. By the assumption, $C$ is not the parent of $C_1$, and also $C_1 \nleq C$ since otherwise $C_1$ has lower depth in $\cT(H)$ than $C_2$. Similarly to the previous case, $C$ contains $v$, as otherwise $C_1$ and $C_2$ cannot share the vertex. Again, either $C$ contains the same edges incident to $v$ as $C_2$, or it contains an edge incident to~$v$ that is not part of $C_2$. Thus, either $C_1$ and $C$ or $C$ and $C_2$ fulfill the property while being closer in $\cT(H)$ than $C_1$ and $C_2$, which is a contradiction.

    We now have that either $C_1$ is the parent of $C_2$, or there is a cycle $C$ that is the parent of both $C_1$ and $C_2$ in $\cT(H)$.
    If $C_1$ is the parent of $C_2$, and $C_1$ is an $S$-node, then $v$ is a pole, which we assume is not the case; the same holds if $C$ is an $S$-node and $C_1, C_2 \in \cC_s(C)$. If $C_1$ is the parent of $C_2$ and $C_1$ is a $U$-node, then $C_1 \cup C_2$ is a subgraph of $\bigcup \cC_u(C_1) \cup \{C_1\}$, so $v \in B(C_1)$. By the same argument, if $C$ is a $U$-node and $C_1, C_2 \in \cC_u(C)$, then $v$ has degree at least three in~$C_1 \cup C_2$ and so in $\bigcup \cC_u(C) \cup \{C\}$, thus $v \in B(C)$.
    This concludes the proof that $B$ contains all vertices of degree at least $3$ in $H$.

    It remains to bound the number of elements in $B$. For each $C \in \cT(H)$ that is either an $S$-node or a $U$-node with one child, it holds that~$|B(C)| \le 2$.
    Since $|\cC_M| \in \OO(k^2)$, the total contribution of these nodes to $B$ is $\OO(k^2)$.
    Observe that $\cT(H)$ has at most $4k$ leaves, since the leaves of $\cT(H)$ and the leaves of $\cT(G)$ are the same.
    Therefore, the total number of children of all $U$-nodes with at least two children in $\cT(H)$ is at most $8k$. Since a $U$-node with $p$ children contributes at most $p \cdot (p + 1)$ vertices to $B$ by \Cref{claim:U_nodes_degree}, all $U$-nodes with at least two children contribute in total at most $8k \cdot (8k + 1) \in \OO(k^2)$ vertices to $B$. The total size of $B$ is therefore also $\OO(k^2)$.
\end{claimproof}

We now exhaustively apply the following reduction rules, in order to remove all degree-$2$ vertices in~$H$.
\begin{redrule}\label{redrule:dissolve}
If there is a vertex of degree $2$ in $H$, dissolve it: remove the vertex and add an edge between its former neighbors, such that the weight of the new edge is equal to the sum of weights of the two removed edges.
\end{redrule}
Note that \ref{redrule:dissolve} may create loops and parallel edges. Thus we allow for intermediate instances to be multigraphs. We introduce two more reduction rules so that the resulting instance is a simple graph.
\begin{redrule}\label{redrule:loops}
If there is a loop $e$ in $H$, remove it from the graph. If $w(e) = g(G)$, additionally decrease $k$ by $1$. If this increases the length of the shortest cycle, return a trivial no-instance unless $k$ becomes $0$, in which case return a trivial yes-instance.
\end{redrule}
\begin{redrule}\label{redrule:parallel}
If there are two parallel edges $e_1$ and $e_2$ in $H$, remove both and decrease~$k$ by $1$ if $w(e_1) + w(e_2) = g(G)$. If this increases the length of the shortest cycle, return a trivial no-instance unless $k$ becomes $0$, in which case return a trivial yes-instance. Otherwise, if~$w(e_1) + w(e_2) > g(G)$, then remove one of the parallel edges with the highest weight without changing $k$.
\end{redrule}

\begin{claim}\label{claim:redrules}
    \Cref{redrule:dissolve,redrule:loops,redrule:parallel} are safe.
\end{claim}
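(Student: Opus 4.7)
The plan is to verify safety of each reduction rule separately, showing in each case that the reduced instance $(H',w',k')$ is a yes-instance of \dsce if and only if $(H,w,k)$ is, while also justifying the branches where the rule returns a trivial answer.

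For \Cref{redrule:dissolve}, I would exhibit a canonical bijection between cycles in $H$ and cycles in the dissolved graph $H'$: cycles through the degree-$2$ vertex $v$ correspond to cycles using the new edge $\{u,w\}$ whose weight equals $w(\{u,v\})+w(\{v,w\})$, and all other cycles are carried over unchanged. This bijection preserves both total weight and edge-disjointness, so the girth is preserved and the equivalence of the instances is immediate. For \Cref{redrule:loops}, the loop $e$ itself is a cycle of weight $w(e)\ge g(H)$. If $w(e)>g(H)$, the loop belongs to no packing of shortest cycles, so deleting it does not change the answer. If $w(e)=g(H)$, the loop is a shortest cycle using only the edge $e$, hence edge-disjoint from every other cycle, and I would pair $k$-solutions of $H$ containing the loop with $(k-1)$-solutions of $H'$. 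If deleting $e$ strictly increases the girth, then the loop is the unique shortest cycle of $H$, so $(H,w,k)$ is a yes-instance precisely when $k\le 1$, matching the rule's behavior of returning yes when $k'=0$ and no otherwise.

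The main obstacle is \Cref{redrule:parallel}, which I would split by the value of $w(e_1)+w(e_2)$. When the sum exceeds $g(H)$, assume WLOG $w(e_1)\le w(e_2)$. If $w(e_1)<w(e_2)$, any cycle using $e_2$ could be strictly shortened by swapping $e_2$ for $e_1$, so no shortest cycle uses $e_2$ and deleting it is safe. If $w(e_1)=w(e_2)$, then $w(e_1)>g(H)/2$, and the key observation is that no packing can contain both a shortest cycle through $e_1$ and a shortest cycle through $e_2$, because the union of the off-edge paths of these two cycles would form a closed walk of total weight $2(g(H)-w(e_1))<g(H)$ that contains a cycle shorter than the girth. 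Hence any solution using $e_2$ can be rewritten to use $e_1$ instead, and deleting the heavier edge is safe.

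When $w(e_1)+w(e_2)=g(H)$, the pair $\{e_1,e_2\}$ is itself a shortest cycle $P$, and the goal is to show $(H,w,k)$ is yes iff $(H\setminus\{e_1,e_2\},w,k-1)$ is. The backward direction simply adjoins $P$ to a $(k-1)$-solution. For the forward direction, I would case-split on which of $e_1,e_2$ is used by a given $k$-solution $S$: if at most one is used, discarding the (at most one) cycle containing it leaves $k-1$ cycles in $H\setminus\{e_1,e_2\}$; the crucial case is when $e_1\in C_1$ and $e_2\in C_2$ for distinct $C_1,C_2\in S$, where I would argue that the union of the off-edge paths $Q_1\subseteq C_1$ and $Q_2\subseteq C_2$ forms a simple cycle of length exactly $g(H)$, since any proper sub-cycle in $Q_1\cup Q_2$ would contradict the girth, yielding a new $(k-1)$-solution in $H\setminus\{e_1,e_2\}$. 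Finally, if deleting $\{e_1,e_2\}$ strictly increases the girth, every shortest cycle of $H$ meets this pair, so at most one shortest cycle can appear in any packing unless it equals $P$; a short case analysis shows that the yes-instances in this situation are exactly those with $k\le 1$, again matching the rule's "return no unless $k$ becomes $0$" branch.
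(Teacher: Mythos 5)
Your proposal is correct and mirrors the paper's proof: both handle \Cref{redrule:dissolve} via a length-preserving cycle correspondence, \Cref{redrule:loops} by observing that a loop is the only cycle through itself, and \Cref{redrule:parallel} by splitting on whether $w(e_1)+w(e_2)$ equals or exceeds the girth, with the key recombination argument that the off-edge $u$-$v$-paths of two distinct solution cycles through $e_1$ and $e_2$ form a shortest cycle in $H-e_1-e_2$. The only omission is the trivial sub-case where the two-edge cycle $\{e_1,e_2\}$ itself belongs to the packing (discard it directly), and the clause ``unless it equals $P$'' in your girth-increase analysis is vacuous since even then no second shortest cycle can coexist.
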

\begin{claimproof}
    First, for \Cref{redrule:dissolve}, observe that every cycle in $H$ either remains unchanged, or is replaced by a cycle that passes through the newly-created edge, while the length of the cycle and the set of remaining edges stays the same. Therefore, the length of all cycles is unchanged, any two cycles that were edge-disjoint stay edge-disjoint, and any two cycles that shared edges still share edges after applying the reduction rule.

    Second, consider a loop $e$ to which \Cref{redrule:loops} is applied. If $w(e) > g(G)$, no shortest cycle passes through~$e$, thus it can be safely removed. If $w(e) = g(G)$, the only shortest cycle passing through~$e$ is the one that contains the edge $e$ and nothing else, as only simple cycles are considered. Adding this cycle to the solution is safe as no other cycle in $H$ can share edges with it.

    Finally, let $e_1$, $e_2$ be the parallel edges in \Cref{redrule:parallel}, with the endpoints $u, v$. We start with the case $w(e_1) + w(e_2) = g(G)$. If $k = 1$, then the cycle $C$ containing $e_1$ and~$e_2$ is indeed a solution; if $k > 1$ and $g(H') > g(H)$, where $H' = H - e_1 - e_2$, then $(H, w, k)$ is a no-instance: Otherwise, each cycle in the solution must use a different edge in $\{e_1, e_2\}$, so $k = 2$, but then the remaining $u$-$v$-paths of the two cycles in $H'$ form a cycle of length~$g(G) = g(H)$ in~$H'$, contradicting $g(H) < g(H')$. Now, with $g(H') = g(H)$, we claim that a packing of $k$ edge-disjoint shortest cycles exist in~$H$ if and only if a packing of $k - 1$ edge-disjoint shortest cycles exists in~$H' = H$. The backward direction is clear, since the cycle~$C$ containing $e_1$ and~$e_2$ can be added to any packing in $H'$. In the forward direction, if the statement does not hold, then the solution in $H$ does not contain the cycle $C$, but contains at least one cycle with an edge in~$\{e_1, e_2\}$. If there is just one such cycle, then it can be replaced by $C$, so that the remaining~$k - 1$ cycles form a packing in $H$. If there are two cycles $C_1$ and $C_2$, containing $e_1$ and $e_2$ respectively, let $P_1$ be the $u$-$v$-path in $C_1$ other than $e_1$, and $P_2$ be the $u$-$v$-path in $C_2$ other than $e_2$. Since~$w(C_1) + w(C_2) = 2g(G)$ and~$w(e_1) + w(e_2) = g(G)$, it holds that~$w(P_1) + w(P_2) = g(G)$ and the cycle $C' = P_1 \cup P_2$ is also a shortest cycle.
    However, we can then replace $C_1$ and $C_2$ with~$C$ and $C'$ to get a packing of $k$ edge-disjoint shortest cycles in~$H$ that contains $C$.

    In the other case let without loss of generality be~$w(e_2) \geq w(e_1)$. If~${w(e_1) < w(e_2)}$, then no shortest cycle may pass through $e_2$, so removing the edge $e_2$ is safe. If~$w(e_1) = w(e_2) > g(G)/2$, then no shortest cycle can contain both $e_1$ and~$e_2$. Also, it cannot be that two edge-disjoint shortest cycles contain $e_1$ and~$e_2$, respectively, as then by joining~$C_1 - e_1$ and~$C_2 - e_2$ we get a cycle that is shorter than $g(G)$. Thus, in any solution at most one cycle passes through either $e_1$ or $e_2$. Thus, it is safe to remove $e_2$ since any such cycle may pass through $e_1$ instead.
\end{claimproof}

By applying \Cref{redrule:dissolve,redrule:loops,redrule:parallel} exhaustively, we obtain a simple graph $G'$ with the weight function~$w'$, and an integer $k' \le k$ such that $(G, w, k)$ is equivalent to $(G', w', k')$ (by \Cref{claim:redrules}). Moreover, $|V(G')| \in \OO(k^2)$, by \Cref{claim:high_degree} and since \Cref{redrule:dissolve} can no longer be applied. It remains to compress the weights, so that they can be represented by~$k^{\OO(1)}$ bits.
Let $n = |V(G')|$, $m = |E(G'|$, we have that $m \in \OO(k^4)$. We now interpret the weight function $w' : E(G') \to \mathbb{Z}$ as a vector in~$\mathbb{Z}^m$ by assigning an arbitrary numbering to the edges of $G'$, and invoke \Cref{prop:FT} with~$w'$ and $N = m + 1$ to obtain a compressed vector $\overline{w} \in \mathbb{Z}^m$. From \Cref{prop:FT}, we have that for any vector~$b \in \mathbb{Z}^m$ with~$||b||_1 \le m$, $\mathsf{sign}(w'\cdot b)=\mathsf{sign}(\overline{w}\cdot b)$. Interpreting the vector $\overline{w}$ as a weight function on $E(G')$, we obtain the instance $(G', \overline{w}, k')$ of \dsce. By \Cref{prop:FT}, ${||\overline{w}||_{\infty} \le 2^{4m^3} \cdot (m + 1)^{m (m + 2)} \in 2^{\OO(k^{12})}}$, therefore each entry of $\overline{w}$ can be represented by~$\OO(k^{12})$~bits. The bit size of~$(G', \overline{w}, k')$ is thus in~$\OO(k^{16})$.
It remains to verify that $(G', \overline{w}, k')$ is indeed equivalent to~$(G', w', k')$. We show that for every pair of cycles in $G'$, their weights compare in the same way before and after the weight compression.

\begin{claim}\label{claim:weight_compression}
    For every two cycles $C_1$, $C_2$ in $G'$, $w'(C_1)$ is smaller than (equal to/larger than)~$w'(C_2)$ if and only if $\overline{w}(C_1)$ is smaller than (equal to/larger than) $\overline{w}(C_2)$.
\end{claim}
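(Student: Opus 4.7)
The plan is to reduce the claim to a direct application of Proposition~\ref{prop:FT} by encoding the comparison of two cycle weights as the sign of an inner product $w \cdot b$ for a suitably chosen integer vector $b \in \mathbb{Z}^m$.

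First, I would fix an arbitrary numbering of the edges of $G'$, so that both $w'$ and $\overline{w}$ live in $\mathbb{Z}^m$. Given two cycles $C_1, C_2$ in $G'$, define $b \in \mathbb{Z}^m$ by setting, for each edge $e \in E(G')$,
\[
b_e = \mathbf{1}[e \in E(C_1)] - \mathbf{1}[e \in E(C_2)].
\]
Then by construction $w' \cdot b = w'(C_1) - w'(C_2)$ and $\overline{w} \cdot b = \overline{w}(C_1) - \overline{w}(C_2)$. Thus the sign of $w'(C_1) - w'(C_2)$ (respectively of $\overline{w}(C_1) - \overline{w}(C_2)$) coincides with $\mathsf{sign}(w' \cdot b)$ (respectively $\mathsf{sign}(\overline{w} \cdot b)$).

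Next, I would check that $\|b\|_1 \le N - 1 = m$, so that Proposition~\ref{prop:FT} applies. The entries of $b$ lie in $\{-1,0,1\}$, and an edge contributes $\pm 1$ only if it is in exactly one of $E(C_1), E(C_2)$; edges common to both cycles cancel. Hence the support of $b$ is contained in the symmetric difference $E(C_1)\triangle E(C_2) \subseteq E(G')$, which has size at most $m$. Therefore $\|b\|_1 \le m = N - 1$.

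Applying Proposition~\ref{prop:FT} with this $b$ yields $\mathsf{sign}(w'\cdot b) = \mathsf{sign}(\overline{w}\cdot b)$, which via the identities above translates to $\mathsf{sign}(w'(C_1) - w'(C_2)) = \mathsf{sign}(\overline{w}(C_1) - \overline{w}(C_2))$, proving the claim. I do not anticipate a real obstacle here: the only subtlety is verifying the $\ell_1$-bound on $b$, and this is immediate from the cancellation on shared edges together with the choice $N = m+1$ made in the construction of $\overline{w}$.
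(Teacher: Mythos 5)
Your proposal is correct and follows essentially the same argument as the paper: you form the $\{-1,0,1\}$-valued difference vector $b$ (the paper writes it as $a_1 - a_2$ where $a_i$ is the characteristic vector of $C_i$), observe $\|b\|_1 \le m = N-1$, and invoke Proposition~\ref{prop:FT} to transfer the sign. No meaningful difference in approach.
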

\begin{claimproof}
    It suffices to show that $\mathsf{sign}(w'(C_1) - w'(C_2))=\mathsf{sign}(\overline{w}(C_1) - \overline{w}(C_2))$.
    For each cycle~$C_i$ with $i \in \{1, 2\}$, consider its characteristic vector~$a_i \in \{0, 1\}^m$, where the $j$-th coordinate is equal to~$1$ if and only if the $j$-th edge of $G'$ belongs to $C_i$. By definition, $w'(C_i) = w' \cdot a_i$ and $\overline{w}(C_i) = \overline{w} \cdot a_i$ for each~$i \in \{1, 2\}$. Let $b \in \{-1, 0, 1\}^m$ be the difference vector $a_1 - a_2$, then $w'(C_1) - w'(C_2) = w' \cdot b$ and~$\overline{w}(C_1) - \overline{w}(C_2) = \overline{w} \cdot b$. By construction, $||b||_1 \le m$, so \Cref{prop:FT} applies to the vector $b$, therefore
    $\mathsf{sign}(w'\cdot b)=\mathsf{sign}(\overline{w}\cdot b)$, which means that $\mathsf{sign}(w'(C_1) - w'(C_2))=\mathsf{sign}(\overline{w}(C_1) - \overline{w}(C_2))$.
\end{claimproof}

From \Cref{claim:weight_compression} we have that every shortest cycle in the instance $(G', \overline{w}, k')$ is a shortest cycle in $(G', w', k')$, and the other way around.
Therefore, the instances are equivalent with respect to the existence of a packing of $k$ edge-disjoint shortest cycles.
Since $(G', w', k')$ was previously shown to be equivalent to the original instance $(G, w, k)$ of \dsce, $(G', \overline{w}, k')$ is equivalent to $(G, w, k)$ as well, and, from the above, the bit-size of $(G', \overline{w}, k')$ is $\OO(k^{16})$. This completes the proof of the theorem.
\end{proof}

By using the small family of marked cycles constructed in the proof of the theorem, we immediately get the following FPT algorithm for \dsce on planar graphs.

\begin{corollary}\label{cor:ed-FPT}
\dsce can be solved in $k^{\OO(k)}\cdot n^{\OO(1)}$ time on planar graphs.
\end{corollary}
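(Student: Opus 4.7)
The plan is to reuse directly the combinatorial machinery developed in the proof of \Cref{thm:mincycles-ed} and simply replace the weight-compression step with brute-force enumeration of the marked cycles.

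First, I would run the kernelization procedure from the proof of \Cref{thm:mincycles-ed} up to (but not including) the Frank--Tardos weight compression. That is: preprocess $G$ to be clean, compute the LSCT $\cT(G)$ by \Cref{lem:tree}, and check whether $\cT(G)$ has at least $4k$ leaves; if so, return yes by \Cref{claim:T_ed_leaves}. Otherwise, compute the base cycles $\cC_B$ and, for each $C \in \cC_B$, its extension chain $\cU(C)$; if any such chain has length at least $k$, return yes. Finally build the marked set $\cC_M = \{r\} \cup \cC_N \cup \bigcup_{C \in \cC_B} \cU(C)$. Each of these steps is polynomial in $n$, and by \Cref{claim:marked_size} we have $|\cC_M| \in \OO(k^2)$.

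The crucial structural fact is \Cref{claim:marked}: every $\cT$-maximal solution is contained in $\cC_M$. Since any yes-instance admits a $\cT$-maximal solution (one can iteratively replace cycles to optimize first the number of facial cycles, then $\leq$-minimality), $(G,w,k)$ is a yes-instance if and only if there exist $k$ pairwise edge-disjoint cycles in $\cC_M$. The algorithm therefore enumerates every $k$-subset of $\cC_M$ and tests it for edge-disjointness in polynomial time. The number of such subsets is at most
\[
\binom{|\cC_M|}{k} \leq \binom{\OO(k^2)}{k} \leq (\OO(k^2))^k = k^{\OO(k)},
\]
and each test costs $n^{\OO(1)}$ time, yielding the desired running time of $k^{\OO(k)} \cdot n^{\OO(1)}$.

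No new obstacle appears: all the hard work (the laminar decomposition, the bound on the number of leaves via the Four-Color Theorem, the definition and size bound of $\cC_M$, and especially the key \Cref{claim:marked}) has already been done inside the proof of \Cref{thm:mincycles-ed}. The only things to note are that weight compression is not needed since we are not producing a kernel, and that enumeration avoids the more delicate issue of how large the weights in $G$ may be; we simply pick the shortest-cycle length by computing the girth of $G$ once, and accept a $k$-subset of $\cC_M$ only if its members are all shortest cycles of $G$ and pairwise edge-disjoint.
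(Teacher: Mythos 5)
Your proposal is correct and takes essentially the same approach as the paper: construct the marked set $\cC_M$ of size $\OO(k^2)$ (handling the two early-exit cases for many leaves or a long extension chain), invoke \Cref{claim:marked} to restrict the search to $\cC_M$, and brute-force over $k$-subsets in $k^{\OO(k)} \cdot n^{\OO(1)}$ time. The extra remarks (why a $\cT$-maximal solution exists, and checking membership in the shortest-cycle family — which is automatic since $\cC_M$ contains only shortest cycles by construction) are harmless additions.
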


\begin{proof}
Given the instance $(G, w, k)$ of \dsce, construct the set of marked cycles $\cC_M$ as in the proof of \Cref{thm:mincycles-ed}. By \Cref{claim:marked} we have that whenever there exists a solution for $(G, w, k)$, there also exists one where all cycles are in $\cC_M$. By \Cref{claim:marked_size}, we also have that~$\cC_M = \OO(k^2)$. Therefore, in time $k^{\OO(k)}\cdot n^{\OO(1)}$ we can enumerate all $k$-tuples of cycles in $\cC_M$ and check for each tuple whether the cycles are edge-disjoint.
\end{proof}

Finally, combining the equivalence between \dsce and \mcp on planar graphs, and the two results above, we get analogous results for \mcp on planar graphs.

\begin{corollary}\label{cor:min-cut-FPT}
\mcp on planar graphs admits a polynomial kernel such that the output graph has $\OO(k^2)$ vertices.
Furthermore, the problem can be solved in $k^{\OO(k)}\cdot n^{\OO(1)}$ time on planar graphs.
\end{corollary}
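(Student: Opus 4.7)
The plan is to exploit the planar duality between minimum edge cuts and shortest cycles that is already noted in the introduction. Given an instance $(G, w, k)$ of \mcp on a planar graph, I would first compute a planar embedding of $G$ in linear time and construct its weighted planar dual $(G^*, w^*)$, inheriting edge weights. It is folklore (and recalled in the preliminaries) that a subset $F \subseteq E(G)$ is a minimum cut-set of $G$ if and only if $\{e^* : e \in F\}$ is the edge set of a shortest cycle in $G^*$, and two minimum cut-sets are edge-disjoint in $G$ if and only if the two corresponding shortest cycles in $G^*$ are edge-disjoint. Hence $(G, w, k)$ is a yes-instance of \mcp if and only if $(G^*, w^*, k)$ is a yes-instance of \dsce.

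For the kernelization claim, I would apply \Cref{thm:mincycles-ed} to $(G^*, w^*, k)$ to obtain in polynomial time an equivalent planar instance $(H^*, \overline{w}, k')$ of \dsce with $|V(H^*)| \in \OO(k^2)$. Since $H^*$ is planar, we also have $|E(H^*)| \in \OO(k^2)$, so by Euler's formula $H^*$ has $\OO(k^2)$ faces. Then taking $H$ to be the planar dual of $H^*$ yields a planar graph whose vertex set is in bijection with the faces of $H^*$, so that $|V(H)| \in \OO(k^2)$, and by the duality above $(H, \overline{w}, k')$ is an equivalent instance of \mcp, yielding the polynomial kernel. For the algorithmic claim, the same duality-based reduction applies: produce $(G^*, w^*, k)$ in polynomial time and invoke \Cref{cor:ed-FPT}, thereby solving \mcp in $k^{\OO(k)}\cdot n^{\OO(1)}$ time.

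The only subtlety to watch for is that planar duality may produce loops (from primal bridges) or parallel edges (from primal degree-two vertices), so the dual of a simple planar graph need not itself be simple. I would therefore double-check that \Cref{thm:mincycles-ed} goes through on weighted planar graphs that may contain loops and parallel edges, which is already the case since the proof of that theorem explicitly allows multigraph intermediates through its dissolve, loop, and parallel-edge reduction rules. This is the main (and quite minor) obstacle; otherwise the corollary follows directly from the duality-based equivalence, and no further combinatorial work is required beyond invoking the already-established results.
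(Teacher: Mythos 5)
Your proof is correct and takes exactly the same route the paper does: the paper's own justification is a single sentence invoking the duality between \dsce and \mcp on planar graphs together with \Cref{thm:mincycles-ed} and \Cref{cor:ed-FPT}, and you simply spell out the details — forming $G^*$, applying the \dsce results, and dualizing back, using Euler's formula to bound $|V(H)|$ by the number of faces of the $\OO(k^2)$-vertex planar kernel. Your closing remark about loops and parallel edges in $G^*$ is a legitimate point that the paper leaves implicit (after cleaning, bridges — hence dual loops — disappear, and the reduction rules handle parallel edges), so it is a welcome sanity check rather than a deviation.
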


\section{Kernelization lower bound for \dsc and \dsce}\label{sec:kern-lb}

Note that \Cref{thm:minsumxpalg} implies that \dsc and \dsce are in \classXP, but since the lower bound in~\Cref{thm:minsum-lb} does not apply for these problems, it remains open whether they are \classFPT or \classW1-hard on general graphs.
Here, we show that both problems do not admit polynomial kernels.
Formally, we prove the following.

\begin{restatable}{theorem}{thmnokern}\label{thm:nokern}
\dsc and \dsce parameterized by solution size~$k$ do not admit polynomial kernels unless $\classNP\subseteq\classCoNP/{\sf poly}$.
\end{restatable}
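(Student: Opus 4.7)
The plan is to prove Theorem~\ref{thm:nokern} via an OR-cross-composition from an \classNPH source into \dsc (resp.\ \dsce) parameterized by $k$. Recall that, by the standard Bodlaender--Jansen--Kratsch framework, exhibiting such a composition from any \classNPH\ language suffices to rule out polynomial kernels unless $\classNP\subseteq\classCoNP/\poly$. I would take \textsc{Triangle Packing} as the source (decide whether a graph $H$ contains $k$ vertex-disjoint triangles; classically \classNPH), and define the polynomial equivalence relation under which two instances are equivalent iff they have the same number of vertices and the same $k$. After standard padding, a valid input to the cross-composition is then a list $(H_1,k),\ldots,(H_t,k)$ of triangle packing instances sharing the same vertex count.

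The composed instance $(G,w,k')$ would be built by taking the unit-weight disjoint union of $H_1,\ldots,H_t$ (so that $g(G)=3$) and superimposing a \emph{selector} gadget consisting of $O(\log t)$ ``bit'' sub-gadgets. Each bit sub-gadget offers a pair of triangles $C_b^0, C_b^1$ that share a common vertex (resp.\ a common edge, for \dsce), so that any packing picks at most one from each pair. The sub-gadgets are wired so that the choices $C_b^{c_b}$ across all bits $b$ encode a unique index $j\in[t]$, and the remaining cycles in a packing are forced to lie in $H_j$ by placing shared vertices/edges between triangles of $H_{j'}$ (for $j'\neq j$) and every $C_b^{c_b}$ in whose bit $j$ and $j'$ disagree. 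The target parameter $k' = k + \lceil\log_2 t\rceil$ is polynomially bounded in $\max_i |x_i| + \log t$, as the cross-composition framework requires.

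Correctness follows in two directions. For the forward direction, if some $H_j$ is a yes-instance, then its $k$ disjoint triangles together with the selector cycles $\{C_b^{c_b}\}_b$ encoding $j$ form a packing of $k'$ pairwise (vertex-)disjoint triangles in $G$, each of length $g$. Conversely, any packing of $k'$ shortest cycles in $G$ must spend exactly one cycle in each bit sub-gadget (because the two options share a vertex/edge), and the binary pattern of choices pinpoints an index $j\in[t]$; the incompatibility pattern of the connectors then forces the remaining $k$ triangles to lie within $H_j$. The \dsce variant follows by swapping vertex-sharing for edge-sharing throughout the selector and connectors.

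The main obstacle will be engineering the selector so that the ``every cycle has length exactly $g$'' constraint is respected: unlike most cross-compositions for packing problems, we cannot afford longer ``control'' cycles to enforce the choices, so every selector cycle and every incompatibility must be realized using triangles only. Two subtleties need careful treatment: (i) one must guarantee that no spurious cycle of length $\leq g$ across different $H_{j}$'s is created by adding the selector, for otherwise the equivalence breaks in the backward direction; and (ii) the selector must be arranged so that ``incompatibility by sharing'' does not accidentally reduce the packing number of the intended yes-instance. I expect to handle (i) by routing all connectors through dedicated auxiliary paths of length strictly greater than $g$, realized by subdividing and weighting uniformly, and to handle (ii) by having each $H_j$ interface with the selector only through designated auxiliary vertices (resp.\ edges) disjoint from the triangles of $H_j$.
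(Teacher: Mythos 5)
Your overall strategy---ruling out polynomial kernels via an OR-cross-composition into \dsc and \dsce---is the right one in spirit, but the concrete mechanism you propose for the selector has a fundamental gap, and it is exactly the gap the paper avoids by not doing a direct composition. The paper instead establishes (via an OR-cross-composition from \textsc{3,4-Sat}) that an intermediate string problem, \dsf{} (\textsc{Disjoint Shortest Factors} parameterized by alphabet size), admits no polynomial kernel, and then gives a polynomial parameter transformation from \dsf{} to \dsc and \dsce. The intermediate step is not a detour: the linear structure of strings is precisely what makes the ``select one instance out of $t$'' logic work, and that logic is what your direct construction cannot realize.

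The specific gap is the step where picking the selector triangles $\{C_b^{c_b}\}_b$ is supposed to ``force the remaining $k$ triangles to lie within $H_j$.'' Your mechanism is that $C_b^{c_b}$ shares a vertex (or edge) with triangles of every $H_{j'}$ whose $b$-th bit disagrees with $c_b$. But a triangle has only three vertices and three edges, while $H_{j'}$ may contain $\Omega(n)$ pairwise vertex-disjoint triangles; a constant number of shared elements per selector triangle can block at most a constant number of them, not the entire sub-instance. Summing over all $\lceil\log_2 t\rceil$ bits you have $O(\log t)$ vertices to spend on blocking, nowhere near enough to intersect every triangle of every deactivated $H_{j'}$. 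This cannot be patched by padding or auxiliary routing (your points (i) and (ii)); it is the core difficulty of composing into a shortest-cycle packing problem, where every solution cycle has length exactly $g$, so you have no ``long control cycles'' available and short cycles cannot cover large portions of a sub-instance.

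The paper sidesteps this by composing into \dsf{}. There the batch-merging word $W = c_i W_1 \bl^{N_i/2} c_i \bl^{N_i/2} W_2 c_i \bl$ introduces a selector letter $c_i$ whose shortest factor must span either the entire left block or the entire right block; picking that factor genuinely occupies one half of the string, and since disjoint factors must occupy disjoint ranges, the covered half becomes unavailable for all other factors. This range-covering phenomenon has no analogue in a disjoint union of graphs with a superimposed selector. Once the kernelization lower bound for \dsf{} is in place, the transformation to \dsc/\dsce attaches, for each letter $a$, a gadget vertex $v_a$ linked to the occurrences of $a$ by paths of length $3n - d_a$, calibrated so that the cycles of girth length through $v_a$ correspond exactly to shortest $a$-factors; the parameter becomes $k=|\Sigma|$. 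To repair your proof you would essentially need to rediscover some covering device that lets the selector deactivate whole sub-instances; the disjoint-union-plus-selector-triangles idea as written cannot do this.
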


To this end, we first define \dsf, a version of \textsc{Disjoint Factors} where all of the factors of the solution need to be shortest factors.
Let~$\Sigma$ be a finite alphabet.
We use~$\Sigma^n$ to denote all words of length~$n$ over~$\Sigma$.
Given a word~$w = w_1w_2\ldots w_n$, we denote the $i$\textsuperscript{th} charater in~$w$ by~$w[i]$, that is, $w[i]=w_i$.
An~$x$-factor of a word~$w = w_1w_2\ldots w_n$ is a substring~$w_iw_{i+1}\ldots w_j$ for some~${i < j \in [n]}$ which starts and ends with~$x$, that is, $w_i = x = w_j$.
The length of a factor~$w_iw_{i+1}\ldots w_j$ is $j-i$ and a~$x$-factor is a \emph{shortest factor} if there is no other $x$-factor which has smaller length.
Two factors~$w_1 = w_{i_1}w_{i_1+1}\ldots,w_{j_1}$ and~$w_2 = w_{i_2}w_{i_2+1}\ldots,w_{j_1}$ intersect if~$\{i_1,i_1+1,\ldots,j_1\} \cap \{i_2,i_2+1,\ldots,j_2\} \neq \emptyset$, that is, there is an index in which~$w_1$ and~$w_2$ intersect.
Otherwise, the factors are disjoint.
We can now define \dsf{} formally.

\defproblema{\dsf}%
{A word~$w$ over a finite alphabet~$\Sigma$ with~$\bl \in \Sigma$.}%
{Decide whether there is a set of pairwise disjoint shortest factors including an~$x$-factor for each letter~$x \in \Sigma \setminus \{\bl\}$.}

Note that we allow for a special character \bl{} for which no factor needs to be picked.
Since adding two consecutive \bl{} to the start of the input word adds a shortest factor for~\bl, this does not change the computational complexity of \dsf.
We show that \dsf{} does not admit a polynomial kernel when parameterized by the alphabet size. We believe that this result may be interesting by itself.
We mention that our reduction is similar to the reduction that shows that \textsc{Disjoint Factors} does not admit a polynomial kernel \cite{BTY11}.

\begin{proposition}\label{prop:short-df}
    \dsf{} parameterized by~$|\Sigma|$ does not admit a polynomial kernel.
\end{proposition}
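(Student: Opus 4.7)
The plan is to apply the standard OR-cross-composition framework (as used in~\cite{BTY11} for the non-shortest variant). I would exhibit a polynomial-time algorithm that takes $t$ equivalent \dsf instances $w_1,\ldots,w_t$ and produces one \dsf instance $w^*$ with alphabet $\Sigma^*$ of size $|\Sigma|+O(\log t)$, such that $w^*$ is a yes-instance if and only if some $w_j$ is. Together with the \classNP-hardness of \dsf (which is obtained by a standard adaptation of the reduction showing \classNP-hardness of \textsc{Disjoint Factors}), this yields the desired kernelization lower bound under the assumption $\classNP\not\subseteq\classCoNP/{\sf poly}$.

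For the polynomial equivalence relation, I would declare two instances equivalent if they share the same alphabet $\Sigma$ (with a common $\bl$), the same length $n$, and the same shortest $x$-factor length $\ell_x$ for each $x\in\Sigma$; padding by $\bl$-blocks is cheap since $\bl$ never needs to be matched. Given $t$ equivalent instances (padded with trivial instances so that $t=2^r$), I introduce fresh letters $b_1,\ldots,b_r$ with $r=\lceil\log_2 t\rceil$ and define $w^*=T_r(w_1,\ldots,w_t)$ through the recursion $T_0(w)=w$ and
\[
T_i(u_1,\ldots,u_{2^i})=b_i\,\bl^{L_i}\,T_{i-1}(u_1,\ldots,u_{2^{i-1}})\,\bl^{L_i}\,b_i\,\bl^{L_i}\,T_{i-1}(u_{2^{i-1}+1},\ldots,u_{2^i})\,\bl^{L_i}\,b_i,
\]
with padding lengths $L_i$ chosen (taking $L_i$ proportional to $|T_{i-1}|$ suffices) so that three invariants hold: (i) the two ``half-spanning'' $b_i$-candidates in every occurrence of $T_i$ have the same length $D_i$, and are therefore both shortest $b_i$-factors of $w^*$; (ii) the gap between $b_j$-occurrences straddling the boundary between two sibling copies of $T_{i-1}$ inside $T_i$ is strictly greater than $D_j$ for every $j<i$, so no ``cross-sibling'' shortest $b_j$-factor exists; and (iii) the gap between consecutive $w_\ell$-blocks exceeds $\ell_x$ for every $x\in\Sigma\setminus\{\bl\}$, so every shortest $x$-factor of $w^*$ lies strictly inside a single $w_\ell$-copy. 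Setting $L_i=\Theta(|T_{i-1}|)$, one checks that $|w^*|$ is polynomial in $tn$ while $|\Sigma^*|=|\Sigma|+r$ stays within the parameter budget required by OR-cross-composition.

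Correctness would then follow by induction on $r$. In the forward direction, given a yes-solution $\mathcal{S}$ of some $w_{j^*}$, I select at each level $i$ the unique $b_i$-factor that consumes the half whose index bit differs from the $i$-th bit of $j^*-1$, leaving precisely the block of $w_{j^*}$ untouched; $\mathcal{S}$ then extends to a valid solution of $w^*$ by invariant~(iii). Conversely, in any solution of $w^*$, invariant~(ii) forces each chosen $b_i$-factor to be one of the two ``half-spanning'' candidates inside a single copy of $T_i$, so these choices nest consistently and pin down a single surviving $w_{j^*}$; invariant~(iii) together with disjointness from the $b_i$-factors then forces every $x$-factor for $x\in\Sigma\setminus\{\bl\}$ to sit inside $w_{j^*}$, which must therefore itself be a yes-instance of~\dsf.

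The main obstacle is enforcing invariant~(ii) uniformly across all $r$ levels: a careless choice of $L_i$ could create a shorter $b_j$-factor straddling the join between two sibling $T_{i-1}$-copies, destroying the canonical-choice structure that underpins the backward direction. The padding lengths must simultaneously avoid blowing up $|w^*|$ super-polynomially and interact coherently with invariants~(i) and~(iii); carrying out this inductive bookkeeping correctly is the core technical labor of the proof, but it becomes routine once the three invariants are formulated cleanly.
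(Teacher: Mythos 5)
Your merging scheme (a balanced binary recursion with $O(\log t)$ fresh letters, blank-padded so that each fresh letter has exactly two shortest candidate factors spanning the two halves, and so that ordinary letters' factors stay confined to a single leaf word) is essentially the same device the paper uses. The paper's merge $W = c_i W_1 \bl^{N_i/2} c_i \bl^{N_i/2} W_2 c_i \bl$ and your $T_i = b_i\,\bl^{L_i}\,T_{i-1}\,\bl^{L_i}\,b_i\,\bl^{L_i}\,T_{i-1}\,\bl^{L_i}\,b_i$ differ only in where the padding blanks sit, and both give a word of size $t^{O(1)}\cdot\poly(n)$ with parameter $|\Sigma|+\lceil\log_2 t\rceil$, so invariants (i)--(iii) and the two-way correctness argument go through in both.

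There is, however, a genuine gap in the framework setup. You propose to OR-cross-compose from \dsf\ itself under the equivalence relation ``same $\Sigma$, same $n$, and same shortest-factor length $\ell_x$ for every $x\in\Sigma$.'' This is not a \emph{polynomial} equivalence relation: already the tuple $(\ell_x)_{x\in\Sigma}$ ranges over up to $n^{|\Sigma|}$ values with $|\Sigma|$ as large as $\Theta(n)$, so the number of classes among instances of size at most $n$ is superpolynomial in $n$, violating the requirement of the cross-composition framework. The remark that ``padding by $\bl$-blocks is cheap'' only normalizes the length coordinate, not the alphabet or the per-letter factor lengths, and I do not see a harmless padding that equalizes the $\ell_x$'s without either shrinking some shortest factor (which can make the composed instance trivially yes) or failing to change $\ell_x$ at all. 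And the composition really does need uniform $\ell_x$'s: otherwise a yes-witness inside some $w_{j^*}$ uses $x$-factors that are no longer \emph{shortest} in $w^*$, since some other $w_j$ achieves a strictly smaller $\ell_x$. Relatedly, you take \classNP-hardness of \dsf\ for granted. The paper avoids both issues in one stroke by cross-composing from \textsc{3,4-Sat} (whose instances group into polynomially many classes by clause/variable counts) and first applying a concrete many-one reduction from \textsc{3,4-Sat} to \dsf\ to each input, which both proves \classNP-hardness and guarantees that the resulting \dsf\ words all share the same $\Sigma$, the same length, and the same per-letter shortest-factor length. Adopting that preprocessing step would close the gap and leave the rest of your argument intact.
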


\begin{proof}
    We present an OR-cross-decomposition from \textsc{3,4-Sat}, a variant of \textsc{Sat} where each clause contains exactly 3 literals and each variable appears at most four times.
    This version is known to be NP-complete~\cite{Tov84}.
    We first present a polynomial-time many-one reduction from \textsc{3,4-Sat} to \dsf.
    Given a formula~$\phi$ in 3-CNF-Sat, our alphabet consists of one character for each of the~$3m$ positions in~$\phi$ and one character for each variable~$x_i$ in~$\phi$.
    We create the word
    $$w=123123123456456456\ldots(3m-2)(3m-1)(3m)(3m-2)(3m-1)(3m)\bl \bl w_{x_1} w_{x_2} \ldots w_{x_n},$$ where $$w_{x_i} = x_i p^i_1 \bl \bl p^i_1 p^i_2 \bl \bl p^i_2 p^i_3 \bl \bl p^i_3 p^i_4 \bl \bl p^i_4 x_i n^i_1 \bl \bl n^i_1 n^i_2 \bl \bl n^i_2 n^i_3 \bl \bl n^i_3 n^i_4 \bl \bl n^i_4 x_i.$$
    Therein, $p^i_j$ (respectively $n^i_j$) are the positions where~$x_i$ appears for the~$j$\textsuperscript{th} time positively (respectively negated) or~$\bl$ if~$x_i$ does not appear~$j$~times positively (negated).
    Note that the shortest factor for any of the position characters has length three and for any variable character, it has length~$17$.
    If the formula~$\phi$ is satisfiable, then we find a shortest factor for each character in~$w$ as follows.
    Let~$\beta$ be a satisfying assignment for~$\phi$.
    For each variable~$x_i$, if~$\beta(x_i)$ is false, then we select the first~$x_i$-factor in~$w_{x_i}$ (the one through all~$p^i_j$'s).
    If~$\beta(x_i)$ is true, then we select the second~$x_i$-factor in~$w_{x_i}$ (the one through all~$n^i_j$'s).
    For each clause~$C_j$, we select one variable~$v_j$ such that~$\beta(v_j)$ satisfies~$C_j$.
    Note that such a variable exists as~$\beta$ is a satisfying assignment.
    Let~$c_j$ be the position of~$v_j$ in~$C_j$.
    Note that by construction, we can pick a shortest~$c_j$-factor in~$w_{v_j}$.
    For the other two positions in~$C_j$, we can find disjoint shortest factors in the starting part of~$w$.

    In the other direction, if~$w$ admits a set of disjoint shortest factors, then we will construct a satisfying assignment for~$\phi$.
    For each variable~$x_i$, $w$ only contains two shortest~$x_i$-factors and both are contained in~$w_{x_i}$.
    If the first one is chosen, then we set~$x_i$ to false and otherwise, we set~$x_i$ to true.
    Suppose this assignment does not satisfy~$\phi$.
    Then, there exists a clause~$C_j$ that is not satisfied.
    Note that for each position character, there are only 3 shortest factors in~$w$, two in the starting part and one in a word~$w_{x_i}$ for the variable in the respective position.
    Moreover in the starting part, we can find disjoint shortest factors for at most two out of the three positions~$(3j-2)$, $(3j-1)$, and~$3j$ in~$C_j$.
    This means that there is at least one position in~$C_j$ whose position-character-factor has been chosen within a variable word~$w_i$.
    This is a contradiction to the fact that the chosen assignment does not satisfy~$C_j$.
    
    We next present the OR-cross-decomposition.
    Given~$t$ instances of \textsc{3,4-Sat}, where all instances contain the same number of clauses and variables, we use the previously described many-one reduction to construct words~$w_1,w_2,\ldots,w_t$, one for each instance.
    We assume that~$t=2^s$ for some integer~$s$.
    Otherwise, we can copy one of the instances at most~$t$ times to ensure that the number of instances is a power of~$2$.
    Note that by construction, all words~$w_i$ have the same length~${N = 9m + 35n \in O(n+m)}$, all words are over the same alphabet~$\Sigma$, and the shortest factor for each character has the same length in each~$w_i$.
    Next, we combine all of these instances into one instance such that the whole instance is a yes-instance if and only if at least one of the words~$w_i$ is a yes-instance using~$s$ additional characters~$c_1,c_2,\ldots,c_{s}$.
    Let~$\Sigma' = \Sigma \cup \{c_1,c_2,\ldots,c_s\}$.
    We iteratively combine two batches of instances of the same size into one instance, that is, in the~$i$\textsuperscript{th} round, we combine two batches containing~$2^{i-1}$~instances into one batch containing~$2^i$ instances using character~$c_i$.
    Note that after~$s = \log_2(t)$ rounds, we get one instance encoding all~$t$ instances.

    We next describe how two batches are merged.
    To this end, let~$W_1$ and~$W_2$ be the words corresponding to two batches and let~$N_i = |W_1| = |W_2|$.
    We create the word $$W = c_i W_1 \bl^{\nicefrac{N_i}{2}} c_i \bl^{\nicefrac{N_i}{2}} W_2 c_i \bl.$$
    Note that~$N_i = 3N_{i-1} + 4$, where~$N_0 = N$.
    This implies that~$N_i = 3^iN + 4i$.
    In particular, ${N_{s} = 3^s N + 4s = t^{\log_2(3)}N + 4\log_2(t)}$, that is, the whole constructed instances has size in~$\poly(tN)$.
    Moreover,~$|\Sigma'| \leq N + \log_2(t)$.
    It hence only remains to show that the constructed instance is a yes-instance if and only if at least one~$w_i$ is a yes-instance.
    To this end, first observe that all shortest~$c_i$-factors are contained in the words for batches of~$2^{i}$ input words as we introduced~$N_i$ \bl{} between two such words when creating batches for~$i+1$.
    
    Assume first that one word~$w_i$ is a yes-instance.
    Then we find a set of disjoint shortest factors for all characters in~$\Sigma'$ in~$W$ as follows.
    For each character~$a \in \Sigma$, we take a shortest~$a$-factor from~$w_i$.
    Next for each~$c_i$, we look at the batch word~$W_i$ in round~$i$ that contains~$w_i$.
    Note that~$W_i$ contains the character~$c_i$ three times and between any two of them, there is a shortest~$c_i$-factor.
    We pick the one that does not intersect with~$w_i$.
    Note that in this way we can find a set of disjoint shortest factors for all characters in~$\Sigma'$.

    Now assume that the word~$W$ is a yes-instance and let~$S$ be a solution.
    Similar to before, the character~$c_{\log_2(t)}$ is contained only three times in~$W$.
    We take the shortest~$c_{\log_2(t)}$-factor in~$S$ and consider the rest~$W'$ of the word~$W$.
    We iteratively select the shortest~$c_i$-factor in~$S$ for decreasing values of~$i$.
    After~$\log_2(t)$ iterations, we are only left with a single input word~$w_i$.
    By assumption, $S$ contains disjoint shortest factors for each character in~$\Sigma$, that is, the instance~$w_i$ is a yes-instance.
    This concludes the proof.
\end{proof}

Now we can prove~\Cref{thm:nokern} which we restate.

\thmnokern*

\begin{proof}
    We present a polynomial parameter transformation from \dsf.
    To this end, let~$w \in \Sigma^n$ be a string.
    We assume without loss of generality that~$n \geq 10$.
    We start with a path of length~$2n$.
    Let the vertices be~$u_1,u_2,\ldots,u_{2n}$.
    Next, for each~$a \in \Sigma$, we add a new vertex~$v_a$ to the graph.
    Let~$i^a_1 < i^a_2 < \ldots < i^a_j$ be all the positions~$i$ where~$w[i] = a$ and let~$d_a$ be the size of a shortest~$a$-factor in~$w$.
    We add a path between~$v_a$ and~$u_{2i_p}$ for each~$p \in [j]$.
    The path between~$v_a$ and~$u_{i_p}$ is of length~$3n - d_a$.
    Finally, we set~$k = |\Sigma|$.

    Since the reduction can be computed in polynomial time and \dsf{} does not admit a polynomial kernel when parameterized by~$|\Sigma|$, it only remains to show that~$w$ has~$k$ disjoint shortest factors if and only if there exists a set of~$k$ disjoint cycles of length~$g$ in the constructed graph, where~$g$ is the girth of the graph.
    We start by showing that the girth of the graph is~$6n$.
    Note that the set of vertices~$\{v_a \mid a \in \Sigma\}$ is a feedback vertex set of the constructed graph and hence each cycle passes through at least one of these vertices.
    Moreover, each cycle that passes through at least two such vertices are of length at least~$4(3n-n) = 8n$ as~$d_a \leq n$ for each~$a \in \Sigma$.
    Next, each cycle that passes through exactly one vertex~$v_a \in \{v_a \mid a \in \Sigma\}$ has length at least~$2(3n - d_a) + 2d_a = 6n$ as any two vertices of the initial path that are connected to the same vertex~$v_a$ have distance at least~$2d_a$ as any factor of~$w$ has length at least~$d_a$.
    Moreover, each cycle through~$v_a$ that has length exactly~$g = 6n$ corresponds to a shortest~$a$-factor in~$w$.
\end{proof}

We mention that forcing the shortest distances for all factors to be equal is basically the same as \textsc{Colorful Independent Set} in unit interval graphs where all monochromatic cliques have size at most~$2$.
This admits a cubic kernel as shown by \citet{BMNW15}.
For the sake of completeness, we sketch a cubic kernel directly but based on their ideas.
Observe that the distance~$\ell$ of any shortest factor is at most~$k$ as otherwise, any shortest factor would contain some letter twice, a contradiction to the factor being shortest and all shortest factors having the same length~$\ell$.
Next, we can add~$\ell+3$ new characters and build a quadratic-size gadget in the beginning that allows all of these to be picked in a solution. Afterwards, we can use these characters to replace deleted characters (since we cannot use a single character \bl{} in this variant).
The gadget looks as follows: $$1,3,4,\ldots,\ell+2,1,2,4,5,\ldots,\ell+3,3,1,5,6,\ldots,\ell+3,3,\ldots,\ell+3,1,2, \ldots, \ell,\ell+3.$$
With this at hand, we can observe that if there are at least~$2k-1$ shortest factors of a given letter, then each of these can intersect with at most two factors in any solution, that is, we can ignore this character (replace it by the new characters).
Now, each of the~$O(k)$ character has at most~$2k$ shortest factors, each of length at most~$k$.
Keeping only characters contained in these factors and adding ``separators'' of length~$k$ between ``connected components'' results in a cubic kernel.

\section{Packing Minimum Cuts}\label{sec:min-cuts}
In this section, we show that packing~$k$ minimum cuts in a graph is~W[1]-hard when parameterized by~$k$.
This is partially related to \dsc{} as packing $k$ minimum cuts in the dual of a planar graph is the same as as packing $k$ shortest cycles in the primal graph.

\begin{restatable}{theorem}{thmcuts}\label{thm:min-cuts-hard}
 \mcp is \classW{1}-hard when parameterized by solution size~$k$ on graphs with unit edge weights.    
\end{restatable}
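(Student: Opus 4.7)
The plan is to give a parameterized reduction from \mcc, which is W[1]-hard parameterized by the number of color classes~$k$~\cite{CyganFKLMPPS15}, to \mcp on unit-weighted graphs. Given an instance $(G, V_1, \ldots, V_k)$ of \mcc, I would construct an unweighted graph $H$ and a target integer $k' = k + \binom{k}{2}$ such that $H$ admits $k'$ pairwise edge-disjoint minimum cuts if and only if $G$ contains a multicolored $k$-clique.

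The construction would employ two kinds of gadgets. A \emph{vertex selector}~$S_i$ for each color~$V_i$ exposes a menu of designated minimum cuts of~$H$, one per vertex $v \in V_i$, such that the cut-sets of any two of these ``vertex cuts'' pairwise share a common core of edges (so at most one can appear in any solution). The cut-set of the vertex cut for~$v$ avoids a distinguished set of \emph{port edges} associated to each edge of~$G$ incident to~$v$. Dually, an \emph{edge verifier}~$H_{ij}$ for each pair of colors~$\{i,j\}$ exposes a menu of designated minimum cuts, one per edge $uv \in E(G)$ with $u \in V_i$ and $v \in V_j$. The cut-set of the $uv$-cut avoids the port edges in~$S_i$ for the pair $(u,uv)$ and in~$S_j$ for $(v,uv)$, while intersecting the port edges associated to every other vertex of~$V_i$ and~$V_j$. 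Therefore, edge-disjointness forces that selecting the $uv$-cut in~$H_{ij}$ is consistent only with selecting the $u$-cut in~$S_i$ and the $v$-cut in~$S_j$; running this across all $\binom{k}{2}$ verifiers pins down a common multicolored $k$-clique.

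For the counting to go through, the minimum cut value of~$H$ must coincide with the designed value~$\lambda$ of each gadget, and every minimum cut of~$H$ must be one of the designated ones. I would enforce this by surrounding each gadget with a high-connectivity shell and wiring distinct gadgets together by thick bundles of parallel paths (subdivided to preserve unit weights), so that any cut splitting two different gadgets carries strictly more than~$\lambda$ edges. Under these assumptions, any packing of $k' = k + \binom{k}{2}$ edge-disjoint minimum cuts must contain exactly one selector cut and one verifier cut per gadget, and the edge-disjointness constraints across gadgets encode precisely the multicolored clique condition, establishing both directions of the equivalence.

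The main obstacle will be designing the selector and verifier gadgets so that their minimum cuts exhibit exactly the prescribed overlap pattern on the port edges, using only unit edge weights and polynomially many vertices, while ruling out any unintended minimum cut. In particular, menu cuts inside a single gadget must pairwise share edges, yet cross-gadget compatibility must hold precisely for consistent (vertex, edge) pairs and fail otherwise. Ruling out spurious minimum cuts that cut across gadget boundaries will likely require a submodularity or uncrossing argument on minimum cuts of~$H$, exploiting the fact that if two minimum cuts cross then both sums of intersections and unions are also cuts of value at most~$\lambda$. Once such gadgets are designed, the reduction is a parameterized reduction since each gadget has polynomial size in the input and there are only $O(k^2)$ gadgets, giving $k' \in O(k^2)$.
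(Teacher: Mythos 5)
Your proposal is an outline of a plan, not a proof: you describe what properties the selector and verifier gadgets \emph{should} have, and then explicitly defer the actual construction (``the main obstacle will be designing the selector and verifier gadgets\dots''). In a W[1]-hardness proof the gadget construction and its correctness argument \emph{are} the entire content, so as written there is nothing to check. Moreover, the sketch as stated has an internal inconsistency in the polarity of the port edges. You stipulate that the vertex cut for $v$ in~$S_i$ \emph{avoids} the port edges for $(v,e)$, and that the $uv$-cut in~$H_{ij}$ \emph{avoids} the port edges for $(u,uv)$ and $(v,uv)$ while \emph{intersecting} the port edges for every other vertex of $V_i\cup V_j$. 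Then by the same symmetry the $w$-cut in~$S_i$ (for $w\ne u$) also avoids the port edges for $w$, while the $uv$-cut intersects them --- so the $uv$-cut and the $w$-cut do \emph{not} necessarily share any edge, and nothing prevents the solution from picking $uv$ in $H_{ij}$ and some $w\ne u$ in~$S_i$. The verification mechanism you describe simply does not fire. On top of this, in a min-cut the cut-set is determined by a bipartition of the \emph{entire} vertex set, so a cut that lives ``inside'' one gadget either needs the gadget to be an isolated component or must pay for crossing edges elsewhere; your ``high-connectivity shell and thick bundles of parallel paths'' gesture at this, but making it precise while keeping the global min-cut value equal to the gadget value~$\lambda$, and ensuring \emph{every} global min-cut is one of the designated ones, is exactly the hard part you leave undone.

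By contrast, the paper's proof is a short reduction from \textsc{Independent Set}, with no gadgets at all. Given $(G,k)$ with maximum degree~$\Delta$, add a clique~$U$ on $2\Delta+3$ new vertices and attach each original vertex~$v$ to $2\Delta+1-\deg_G(v)$ vertices of~$U$, so every original vertex has degree exactly $2\Delta+1$ in the new graph~$H$. Then one argues directly that every minimum cut of~$H$ is the star cut around a single original vertex (separating two clique vertices costs $\ge 2\Delta+2$, and separating two original vertices from~$U$ already costs $\ge 2\Delta+2$ on the $U$-edges alone), so packing $k$ edge-disjoint minimum cuts is exactly packing $k$ pairwise non-adjacent original vertices, i.e., an independent set of size~$k$, with $k'=k$. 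This is both simpler and directly exploits the structure of min cuts (stars around low-degree vertices) rather than fighting against it with multi-gadget machinery.

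So: the approach you chose is genuinely different (MCC with selector/verifier gadgets vs.\ IS with degree normalization), but your write-up does not contain a reduction --- only a wish list for one --- and the wish list is itself inconsistent as stated. You should either carry out the gadget construction in full and repair the port-edge polarity so that cross-gadget disjointness actually enforces consistency, or, better, notice that min cuts are naturally star cuts around low-degree vertices and reduce from \textsc{Independent Set} directly.
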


\begin{proof}
    We present a reduction from \textsc{Independent Set} parameterized by solution size~$k$. Recall that the problem is \classW{1}-complete~\cite{CyganFKLMPPS15,DowneyF13}.
    To this end, let~${(G=(V,E),k)}$ be an instance of \textsc{Independent Set}.
    We create an equivalent instance~$(H=(V',E'),k)$ of \mcp as follows.
    Let~$\Delta$ be the maximum degree in~$G$.
    We set~$V' = V \cup U$ where~$U$ is a set of~$2\Delta+3$ new vertices.
    For the edge set~$E'$, we start with the edge set~$E \cup \{\{u,v\} \mid u,v \in U\}$, that is, we make~$U$ into a clique of size~$2\Delta+3$.
    Moreover for each vertex~$v \in V$, we add edges between~$v$ and~$2\Delta+1 - \deg_G(v)$ arbitrary vertices in~$U$.
    This concludes the construction.

    Since the solution sizes for both instances are the same and since the reduction can be computed in polynomial time, it only remains to show that~$G$ contains an independent set of size~$k$ if and only if there are~$k$ disjoint minimum cuts in~$H$.
    We first show that all minimum cuts in~$H$ separate exactly one vertex~$v \in V$ from the rest of the graph.
    Note that each such cut has size~$2 \Delta + 1$.
    Moreover, no cut of size at most~$2\Delta+1$ can separate two vertices in~$U$ as each proper cut in a clique of size~$\ell = 2\Delta+3$ contains at least~$\ell - 1 = 2\Delta+2$ edges.
    Lastly, consider a cut that separates two vertices~$u,v\in V$ from all vertices in~$U$.
    Such a cut contains at least~$2\Delta+1 -  \Delta = \Delta + 1$ edges between~$u$ and vertices in~$U$ and also at least~$\Delta + 1$ edges between~$v$ and vertices in~$U$.
    Since these two sets of edges are disjoint, such a cut has size at least~$2\Delta+2$ and is therefore not a minimum cut.

    This basically concludes the proof as each minimum cut in~$H$ now corresponds to selecting a single vertex~$v$ in~$G$ and two cuts are disjoint if and only if the two selected vertices are non-adjacent in~$G$, that is, a packing of~$k$ disjoint minimum cuts in~$H$ corresponds to an independent set of size~$k$ in~$G$.
\end{proof}

\section{Conclusion}
We investigated the parameterized complexity of finding packings of vertex or edge-disjoint cycles of bounded total length. In \Cref{thm:minsumxpalg}, we showed that \msc and \msedc are in \classXP when parameterized by the number of cycles.  
The result is tight in the sense that both problems are proven to be \classW{1}-hard in~\Cref{thm:minsum-lb}. The interesting special cases where we pack shortest cycles are \dsc and \dsce. Trivially, 
\Cref{thm:minsumxpalg} implies that \dsc and \dsce are in \classXP. However, our lower bound in~\Cref{thm:minsum-lb} does not apply here. This leads to an intriguing open problem as to whether \dsc and \dsce are \classFPT or \classW1-hard on general graphs. Currently, we can only rule out the existence of polynomial kernels (\Cref{thm:nokern}).
We also observed that \mcp---which is dual for \dsce on planar graph---becomes \classW{1}-hard for general case (\Cref{thm:min-cuts-hard}).

Further, we considered the case of planar graphs. We proved that \dsc and \dsce are \classFPT when parameterized by the number of cycles and, furthermore, \dsce admits a polynomial kernel. The most interesting open question here is whether \msc and \msedc are \classFPT on planar graphs. Our algorithms for \dsc and \dsce crucially depend on the structure of shortest cycles in planar graphs, and we cannot apply the same ideas for the more general length constraint. Further, does 
\dsc admit a polynomial kernel in the planar case? The kernelization algorithm for \dsce in~\Cref{thm:mincycles-ed} exploits the fact that a plane graph with at least $4k$ shortest facial cycles always has a packing of $k$ edge-disjoint shortest cycles by the four-color theorem, and we do not have similar properties in the case of vertex-disjoint cycles. On the other side, the reduction in~\Cref{thm:nokern} is 
far from planar and cannot be used to infer a similar  kernelization lower bound for {\dsc}.

Another interesting open problem for planar graphs is about packings of cycles from uncrossable families (we refer to~\cite{GoemansW98,schlomberg2023packing,schlomberg2024improved} for the definitions). Is it possible to extend our results for shortest cycles for these more general cycle families?  

Finally, we would be interested to know whether the running time of our algorithms could be improved. The algorithm from \Cref{thm:minsumxpalg} runs in $n^{\OO(k^6)}$ time. Can the running time be improved to, say, $n^{\OO(k)}$? For the planar case, we solve \dsc and \dsce in $k^{\OO(k)}\cdot n^{\OO(1)}$ time. Is there a single-exponential in $k$ algorithm? It may even be that that these problems can solved in subexponential time. 

\newpage

\bibliographystyle{plainnat}
\bibliography{biblio,ref}

\newpage

\end{document}